\def\RR{{\mathbb R}}
\def\beq{\begin{eqnarray}}
\def\eeq{\end{eqnarray}}
\def\be{\beta}
\def\la{\lambda}
\def\om{\omega}
\def\ph{\varphi}
\def\supp{\textrm{supp}}
\newcommand{\xx}{\mathbf{x}}
\newcommand{\pp}{\mathbf{p}}
\newcommand{\ie}{{\it i.e.\ }}
\newtheorem{theorem}{Theorem}[section]
\newtheorem{proposition}{Proposition}[section]
\theoremstyle{definition}
\newcommand{\no}[1]{:\!#1\!:}
\begin{document}

%
%
 
\par 
\bigskip 
\LARGE 
\noindent 
{\bf Algebraic approach to Bose-Einstein Condensation in relativistic Quantum Field Theory.\\} 
{\bf \Large Spontaneous symmetry breaking and the Goldstone Theorem.\\} 
\bigskip 
\par 
\rm 
\normalsize 
 

\large
\noindent 
{\bf Romeo Brunetti$^{1,2,a}$}, {\bf Klaus Fredenhagen$^{3,b}$}, {\bf Nicola Pinamonti$^{4,5,c}$},  \\
\par
\small
\noindent$^1$ Dipartimento di Matematica, Universit\`a di Trento - Via Sommarive 14, I-38123 Povo (TN), Italy.
\smallskip

\noindent$^2$ Istituto Nazionale di Fisica Nucleare - TIFPA via Sommarive 14, I-38123 Povo (TN), Italy. 
\smallskip

\noindent$^3$ 
II. Institut f\"ur Theoretische Physik, Universit\"at Hamburg, Luruper Chaussee 149, D-22761 Hamburg, Germany.
\smallskip

\noindent$^4$ Dipartimento di Matematica, Universit\`a di Genova - Via Dodecaneso 35, I-16146 Genova, Italy. 
\smallskip

\noindent$^5$ Istituto Nazionale di Fisica Nucleare - Sezione di Genova, Via Dodecaneso 33, I-16146 Genova, Italy. 
\smallskip

\smallskip

\noindent E-mail: 
$^a$romeo.brunetti@unitn.it, 
$^b$klaus.fredenhagen@desy.de, 
$^c$pinamont@dima.unige.it\\ 

\normalsize

\par 
 
\rm\normalsize 

\rm\normalsize 
 
 
\par 
\bigskip 
 
\rm\normalsize 
\noindent {\small Version of \today}

\par 
\bigskip

\rm\normalsize

\bigskip

\noindent 
\small 
{\bf Abstract}
We construct states describing Bose-Einstein condensates at finite temperature for a  relativistic massive complex scalar field with $|\ph|^4$-interaction. 
We start with the linearised theory over a classical condensate and construct interacting fields by perturbation theory. Using the concept of thermal masses,
equilibrium states at finite temperature can be constructed by the methods developed in \cite{FredenhagenLindner} and \cite{DHP}. Here, the principle of perturbative agreement plays a crucial role.
The apparent conflict with Goldstone's Theorem is resolved by the fact that the linearized theory breaks the $U(1)$ symmetry, hence the theorem applies only to the full series but not to the truncations at finite order which therefore can be free of infrared divergences. 
 
\normalsize

\vskip .3cm

\section{Introduction}

In this paper we shall analyze the perturbative construction of a Bose-Einstein condensate for a relativistic charged scalar field theory at finite temperature. 

 The first experimental realization of Bose-Einstein condensation (BEC) in dilute vapours of alkali atoms has been obtained few years ago 
\cite{Anderson, Bradely, Davies}. These works have pushed a lot  the theoretical and experimental investigations of this phenomenon. 

Usually, Bose-Einstein condensation is discussed in the realm of non relativistic quantum theories.
(See e.g.
\cite{Pista} and references therein.) 
Bose-Einstein condensation in the non-interacting case is the phenomenon that below a certain critical temperature the ground state becomes macroscopically populated. A similar phenomenon is seen in systems with interaction where an analogous interpretation in terms of an effective single particle Hamiltonian can be made. The concept of BEC becomes mathematically precise by requiring that for an equilibrium state with average particle number $N$ the largest eigenvalue of its one-particle reduced density matrix is at least of order $N$ in the limit of large $N$. The eigenfunction to the highest eigenvalue of the reduced density matrix is then called the wave function of the condensate.
 
A basic fact about sufficiently dilute Bose gases is that the energy density is proportional to the square of the particle density, the proportionality factor being essentially the scattering length of the two-particle interaction potential, cf. Chapter 2 in \cite{LiebSeiringerSolovejYngvason}. By scaling the interaction potential so that the scattering length is proportional to the inverse of the particle number one arrives in the large $N$ limit at the Gross-Pitaevskii equation \cite{Gross,Pitaevskii} for the wave function of the condensate for Bosons that are confined in a trap, for instance a finite box \cite{LiebSeiringerYngvason, LiebSeiringerYngvasonCMP,LiebSeiringerSolovejYngvason}. This particular limit (``GP-limit'') is different from the thermodynamic limit where the density as well as the interaction potential is not scaled with $N$. 

Instead of considering a quantum system in a box and the limit where the box tends to cover the full space, we deal in this paper directly with the full spacetime.
We shall work in a relativistic quantum field theory setting. This presents some advantages, first of all the construction of the algebra of interacting observables in this case is free from infrared divergences\footnote{The infrared divergences at zero temperature in the nonrelativistic case have been discussed by Benfatto \cite{Benfatto} and by Di Castro group \cite{DiCastro}.}
  thanks to the causal property of the theory, see e.g. \cite{BrosBuchholz1, BuchholzRoberts}\footnote{The construction of an algebra of interacting nonrelativistic bosons was a longstanding open problem which was recently solved by Buchholz \cite{Buchholz-Interacting-Bosons} 
using the concept of the so-called resolvent algebra \cite{Buchholz-Grundling}}.
Furthermore, there are physical systems where the relativistic nature of fundamental physics is manifest together with the phenomenon of condensation. Here we have in mind the possible existence of Boson stars at cosmological level \cite{BosonStar},
and condensation phenomena in high energy physics as e.g. the quark gluon plasma 
\cite{QGP, Satz}. 

More precisely, we shall analyze various possible equilibrium states at  finite temperature for a complex scalar quantum field with mass $m$ and chemical potential $\mu$ both for free and self interacting theories.
In the free theory, if the mass $m$ is larger than $|\mu|$, there is a single equilibrium state for a given temperature with $n-$point functions described by tempered distributions. If the mass $m$ is smaller than $|\mu|$ there are no such states while if $m$ equals $|\mu|$ there are various equilibrium states.
These various states correspond to different phases of the system. Furthermore, the pure phases differ by macroscopic contributions, they have different one-point functions exhibiting spontaneous breakdown of the global $U(1)$ symmetry.
The charge density gets a finite contribution from the one-point functions. 
In the nonrelativistic limit the charged scalar field tends to the nonrelativistic scalar field and the states tend to the known equilibrium states of a nonrelativistic system of spinless non-interacting bosons in the thermodynamic limit. The charge density tends to the particle density, and the nontrivial one point function shows up in the long distance behavior of the 2-point function which coincides with the one-particle-reduced density matrix in the thermodynamic limit. 
A discussion about the equivalente of BEC with spontaneous breaking of gauge symmetry 
in the nonrelativistic setting, can be found e.g. in \cite{FPV,LiebSeiringerYngvason05,Suto}.

We then consider thermal equilibrium states in the case of a $\ph^4$ self interaction with positive coupling and chemical potential $\mu$. 
The traditional construction of non-zero temperature equilibrium states (KMS states) for perturbatively defined interacting quantum field theories 
suffers, even in the massive case, from spurious infrared divergences at higher loop order (see \cite{Altherr, Steinmann}). 
It was recently shown how these divergences can be circumvented \cite{Lindner, FredenhagenLindner}.
This construction amounts to an adaptation of formulas derived in rigorous statistical mechanics by Araki  \cite{Araki-KMS} to the framework of perturbative QFT. 
For $|\mu|<m$ the method works, and one obtains states which are invariant under the $U(1)-$symmetry.
 
For $|\mu|\ge m$ one expects spontaneous breakdown of symmetry. 
We thus expand the theory around a nontrivial solution $\phi$ of the classical field equation.  
In the limit of vanishing coupling constant $\lambda$ keeping the ratio $(\mu^2-m^2)/\lambda$ finite, 
the classical background tends to the condensate of the free theory. 
If one instead keeps $\mu$ and $m$ fixed, the classical 
 background dominates over the quantum fluctuations, actually $|\phi|^2$ diverges as $\lambda^{-1}$. 
 In particular, if one scales the charge density by multiplying with $\lambda$, one gets the charge density of the classical 
 solution rescaled by $\sqrt{\lambda}$ which is a solution of the field equation with $\lambda=1$. 
 This limit can thus be seen as an analogous of the GP limit for relativistic systems. 

Due to the Goldstone Theorem, one has to deal with massless modes and therefore with a slow decay of correlation functions.
In the case of a massless scalar field this problem could be circumvented by taking into account that the interaction produces at finite temperature a thermal mass. If this term is included in the free theory, the correlations of the unperturbed state decay sufficiently fast \cite{DHP}. In the case of BEC this is in conflict with the existence of a Goldstone mode induced by the spontaneous breakdown of the $U(1)$ symmetry of the model.

We solve this problem in the following way: We linearize the theory around the classical solution. The linearized theory breaks the $U(1)$ symmetry and shows nonvanishing thermal masses, hence the perturbative construction works as in the massless model treated in \cite{DHP}. The $U(1)$ symmetry is recovered for the full theory which then has a massless mode in agreement with the Goldstone theorem.

As mentioned before, we shall work in the relativistic quantum field theory setting called {\it perturbative algebraic quantum field theory} (pAQFT) \cite{BF00, BDF09, BFV03, HW01, HW02, FredenhagenRejzner}, see also the recent books on the subject \cite{Rejzner,Duetsch}.

The paper is organised as follows. In the next section we shall briefly recall  the framework of pAQFT, the key steps in the construction of the KMS state performed in \cite{FredenhagenLindner}  and few facts about the principle of perturbative agreement \cite{HW05} by which we can move the thermal mass into the free theory \cite{DHP}.
The third section contains the perturbative analysis of the massive complex scalar field with a $|\varphi|^4$ interaction, expanded around a solution with a nonvanishing condensate. 
We shall then discuss the construction of the interacting state at finite temperature over the condensate
and we analyze its adiabatic limit.
 The forth section contains the discussion of the formation of the condensate in connection to the spontaneous symmetry breaking. 
 We shall actually see that the symmetry is effectively broken in the background theory while it is recovered in the exact theory where it is thus spontaneously  broken.

\section{Equilibrium states for interacting quantum field theory}

In this section we briefly review the formalism of perturbative algebraic quantum field theory.
This framework combines renormalized perturbation theory with concepts of algebraic quantum field theory \cite{HaagKastler, Haag}. 
The basic step is an assignment of the algebra generated by the  observables of the theory to each region of the spacetime.   
Physical informations, like locality, is then stored in the relations between algebras labelled by different regions of the spacetime. 
In the case of an interacting theory treated with perturbation theory, the elements of the algebras associated to each region are given as formal power series in the coupling constant with values in suitable $*-$algebras. States are constructed in a second step as linear functionals on the algebra of observables which via the GNS construction then provide representations of the elements of the algebra by operators on a state space. Renormalization in this framework is automatically independent of the state; moreover, infrared problems do not occur in the construction of the algebra. They may become visible in the construction of states where they indicate physical properties of the system.

\subsection{Perturbative construction of the interacting quantum field theory and the adiabatic limit}\label{sec:quantum-algebras}

We shall here briefly recall the basic elements of the perturbative construction of the  $\phi^4$ scalar field theory propagating on a  four dimensional Minkowski spacetime $(\mathbb{M},\eta)$ where the metric $\eta$ has the signature $(-,+,+,+)$.
The Lagrangian is 
\begin{equation}\label{eq:lagrangian}
\mathcal{L} = -\frac{1}{2} \partial_\mu\phi\partial^\mu\phi - \frac{m^2}{2}\phi^2 - \frac{\lambda}{4} \phi^4  
\end{equation}
where $m$ is a positive mass and $\lambda$ the coupling to the non linear perturbation.

In a first step we construct the algebra corresponding to the free theory ($\lambda=0$).
We label the elements $O$ of the algebra by functionals on the classical configuration space which in our case is the space of smooth functions $\phi$ on Minkowski space,
\[
O[\phi]= \sum_{n=0}^N \int f_n(x_1,\dots, x_n) \phi(x_1)\dots \phi(x_n)d^{4n} x
\]
where $f_n$ is a compactly supported distribution on $\mathbb{M}^n$ which is symmetric under permutations of the arguments
and where we used the measure induced by the Minkowski metric. 
The spacetime support 
$\mathrm{supp}\,O$ of $O$ is the smallest closed subset $G$ of Minkowski space such that $\supp f_n\subset G^n$ for all $n$.   
$O$ is called regular if all $f_n$ are smooth, and it is local if all $f_n$ 
are of the form
\[
f_n(x_1,\dots,x_n)=P\prod_{i=2}^n\delta(x_1-x_i)
\]
with a partial differential operator $P$ with smooth compactly supported coefficients.

The product of regular observables is defined in terms of the commutator function $\Delta$, which is the retarded minus advanced fundamental solution of $(\square-m^2)\phi=f$. In the sense of generating functionals it is given by 
\[e^{i\phi(f)}\star e^{i\phi(g)}=e^{-\frac{i}{2}\langle f,\Delta g\rangle}e^{i\phi(f+g)}\]
where $\phi(f)=\int\phi(x)f(x)d^4x$ for a compactly supported test function $f$. 
Due to singularities of $\Delta$, the product cannot be extended to nonlinear local functionals as e.g. $\int f(x)\phi(x)^nd^4x$ with $n>1$ and a test function $f$.
This well known problem is, as usual, circumvented by replacing these functionals by so-called normal-ordered functionals. In our framework this means the following. 

Consider the Klein-Gordon equation with spacetime dependent mass $m(x)$ on a globally hyperbolic spacetime. Let $H$ be a symmetric bisolution of the form
\begin{equation}\label{eq:H}
H +\frac{i}{2}\Delta=  \lim_{\epsilon \to 0^+}\frac{U}{\sigma_\epsilon} + V\log\left(\frac{\sigma_\epsilon}{\xi^2}\right)+W
\end{equation}
where $U,V$ and $W$ are smooth symmetric functions of 2 spacetime points $x$ and $y$. $U$ and $V$ depend only on the geometry and on the mass near to the geodesic connecting the arguments, and
$\sigma_{\epsilon}(x,y) = \sigma(x,y) + i \epsilon (t(x) -t(y))$. $\sigma$ is the square of the geodesic distance between $x$ and $y$, equipped with the appropriate sign for spacelike and timelike separation, respectively, $t$ is a time function and $\xi$ a lenghtscale. Such a bisolution is called a Hadamard function \cite{KayWald}. According to Radzikowski \cite{Radzikowski}, a Hadamard function $H$ can be characterized as a symmetric bisolution with the property that the wave front set of $H +\frac{i}{2}\Delta$ satisfies a positivity condition (microlocal spectrum condition \cite{BFK}). Examples of Hadamard functions are the symmetric parts of the 2 point functions of vacuum and KMS states. 

Given a Hadamard function $H$, normal ordering is a linear map defined by
\[
e^{i\phi(f)}\mapsto \no{e^{i\phi(f)}}_H=e^{i\phi(f)}e^{\frac12\langle f,Hf\rangle}.
\]
The $\star$-product then is interwined with the so-called Wick product $\star_H$,
\[\no{e^{i\phi(f)}}_H\star \no{e^{i\phi(g)}}_H=\no{e^{i\phi(f)}\star_H e^{i\phi(g)}}_H\equiv\no{e^{i\phi(f+g)}}_H e^{-\langle f,(H+\frac{i}{2}\Delta)g\rangle}\ .\]
Due to the smaller wave front set of $H+\frac{i}{2}\Delta$ compared to $\Delta$, the Wick product can be extended to a larger class of functionals $O$.
The product among these objects is well defined as long as  
$\text{WF}(f_n) \cap  (\overline{V}_+^n\cup \overline{V}_-^n) = \emptyset$
where $\text{WF}(f_n)$ is the wave front set \cite{Hormander} of the distribution $f_n \in \mathcal{E}'(\mathbb{M}^n)$
 and  where $\overline{V}_\pm^n$ denotes the closure of the forward/backward light cones in $T^*{\mathbb{M}^n}$, namely
$
\overline{V}_\pm^n \doteq \{(x_1,\dots, x_n;k_1,\dots, k_n)\in T^*M^n | \langle k_i, \eta^{-1}k_i\rangle \leq 0, k_i^0 \geq 0  \}.
$
These functionals are called microcausal, and we denote their set by $\mathcal{F}_{\mu c}$. It contains in particular the local functionals, denoted by $\mathcal{F}_{\text{loc}}$, and their pointwise (classical) products, the multilocal functionals. We refer to \cite{BDF09, Rejzner} for further details on the definition of these sets. 
We can now extend the algebra of observables by normal-ordered microcausal functionals and define their product by
\[
\no{O_1}_H\star\no{O_2}_H=\no{O_1\star_H O_2}_H.
\]
Note that the enlarged algebra $\mathcal{A}$, obtained as $(\mathcal{F}_{\mu c},\star_H)$, does not depend on the choice of the Hadamard function. Only the labeling by the functional depends on $H$. Note furthermore that the normal-ordered functionals are, in general, no longer functionals on the configuration space due to the singularities of the Hadamard function. 

The standard normal ordering in Fock space with respect to annihilation and creation operators is obtained if one chooses the symmetrized vacuum 2 point function $\Delta_1$ as the Hadamard function. It has the nice feature that the vacuum expectation value $\omega_0(\no{O}_{\Delta_1})$ of a normal-ordered functional $\no{O}_{\Delta_1}$ coincides with the evaluation of the functional $O$ at $\phi=0$,
\[\omega_0(\no{O}_{\Delta_1})=O(\phi=0)\ .\]
A corresponding formula holds for any quasifree (Gaussian) state $\omega$, if one uses its symmetrized 2-point function as the Hadamard function, in particular for KMS states. 

This choice of normal ordering, however, is problematic when one wants to identify observables in different states or under the process of renormalization. Then a preferred Hadamard function with $W=0$ is better behaved, as first discussed by Hollands and Wald \cite{HW01}. In the case of a generic curved background that preferred function is in general not well defined but in the case of Minkowski space it was explicitly constructed in appendix A of \cite{BDF09}. It is unique up to the choice of a length scale.    

The field equation is not yet implemented into the algebra $\mathcal{A}$. The algebra contains instead an ideal generated by normal-ordered functionals which vanish on solutions. The so-called on shell algebra is obtained by taking the quotient with respect to this ideal. This quotient is faithfully represented on Fock space and coincides with the standard algebra of the free field. 

The off shell algebra $\mathcal{A}$, however, is better behaved under the time-ordered product which is used for the incorporation of interaction.       

\bigskip

Interacting fields can be constructed by means of causal perturbation theory, a method of renormalization elaborated by Epstein and Glaser \cite{EG} on the basis of ideas of Stueckelberg \cite{Stueckelberg,Stueckelberg2} and Bogoliubov \cite{BS}. It was further developed by Scharf and collaborators (see e.g. \cite{Scharf}).
It is also the basis for a treatment of interactions on curved spacetimes
\cite{BF00,HW01} where other versions of renormalization do not work.
Its main idea is the construction of time-ordered products of interaction Lagrangians.  In the work of Epstein and Glaser these products are operator valued distributions on Fock space. One uses the fact that the time-ordered product for noncoinciding points agrees with the operator product in the appropriate order. Renormalization then consists in extending these distributions to coinciding points. By induction
with respect to the number of factors one can show that this extension is always possible and unique up to the addition of a further interaction Lagrangian in each order. This corresponds precisely to the freedom in the choice of renormalization conditions known from other versions of reormalization.

In pAQFT one uses a version of causal perturbation theory which is independent of the choice of a state space. There the time ordering operator is a linear map $T$ from the algebra of multilocal functionals (with respect to the pointwise (classical) product)\footnote{The algebra of multilocal functionals with respect to the pointwise product is isomorphic to the symmetric tensor algebra of local functionals vanishing at $\varphi=0$. This fact has been proved in \cite{FredenhagenRejznerBV}.} to the algebra of $\mathcal{A}$. On regular functionals it is determined by 
\[Te^{i\phi(f)}=e^{i\phi(f)}e^{-\frac{i}{2}\langle f,\Delta_D f\rangle}=\no{e^{i\phi(f)}e^{-\frac{i}{2}\langle f,\Delta_F^H f\rangle}}_H\]
with the Dirac propagator $\Delta_D$ (the mean of retarded and advanced propagator) and the Feynman propagator 
$\Delta_F^H=\Delta_D-iH$ associated to the Hadamard distribution $H$. It satisfies the 
causal factorization condition
\[T(FG)=TF\star TG\]
if the support of $F$ does not intersect the past of the support of $G$.
To extend $T$ to multilocal functionals one fixes a Hadamard function $H$ and 
characterizes the extensions by the initial conditions
\[T1=1\text{ and }TF=\no{F}_H\]
for local functionals $F$ vanishing at $\phi=0$. The causal factorization condition then fixes the map $T$ on $n$-local functionals by its values on $k$-local functionals for $k<n$ up to a local functional.

In order to reduce the ambiguity in this extension we choose a Hadamard distribution with $W=0$ as discussed before, so only the dependence on the scale $\xi$ remains for local functionals.  

Given an interaction Lagrangian $\mathcal{L}_I$ where $\mathcal{L}_I(\varphi)$ is a translation invariant section of the jet bundle constructed over the field configuration $\varphi$, we consider the local functional 
\begin{equation}\label{eq:interaction-lagrangian}
V[\varphi] \doteq \lambda \int g \mathcal{L}_I(\varphi) d^4x
\end{equation}
with the test function $g$.
 
The formal $S-$matrix of the interaction Lagrangian $V$ can now be constructed as time-ordered exponential
\[
S(V) \doteq Te^{iV}
\]
in the sense of formal power series, hence $S(V)$ is an element of $\mathcal{A}[[\lambda]]$, the set of formal power series in the coupling constant $\lambda$ with coefficients in $\mathcal{A}$. The $\star$-product in $\mathcal{A}$ extends directly to $\mathcal{A}[[\lambda]]$. 
Relative $S-$matrices are then defined as
\[
S_V(F) \doteq S(V)^{-1} \star S(V+F), \qquad F\in \mathcal{F}_{\text{loc}}
\]
where the inverse is understood in terms of the $\star$ product, and the interacting fields are given in terms of the {\it Bogoliubov map} (also called M\o ller operator) which extracts the contributions of $S_V(\mu F)$ linear in $\mu$
\[
R_V(F) \doteq  -i\left.\frac{d}{d\mu} S_V(\mu F)  \right|_{\mu=0}
= S(V)^{-1} \star T(e^{iV}F). 
\]
Interacting observables can now be represented as elements of the algebra generated by the relative $S$ matrices $S_V(F)$
\[
\mathcal{A}_I(\mathcal{O}) = \left[ \left\{ S_V(F) | F\in \mathcal{F}_{\text{loc}}(\mathcal{O})  \right\}\right],
\]
{where the square brackets denote the set of linear combinations of products of the elements inside the brackets.}
We observe that the association of spacetime regions $\mathcal{O}$ to algebras $\mathcal{A}_I(\mathcal{O})$ forms a net of subalgebras of $\mathcal{A}[[\lambda]]$ in the sense of the Haag-Kastler axioms of algebraic quantum field theory. In the following we omit the symbol $\star$ for the product within $\mathcal{A}[[\lambda]]$ and replace it by juxtaposition.
\bigskip

The last step in constructing the interacting theory is the removal of the cutoff $g$ from the interaction lagrangian $V$ in \eqref{eq:interaction-lagrangian}. 
This can be done taking the {\it adiabatic limit} $g\to1$. At algebraic level, as soon as the interacting observables are supported on a compact region $\mathcal{O}$, 
this limit can be taken over larger and larger regions where the cutoffs are equal to $1$, further details can be found in \cite{BF00}.
Here we can make this construction more explicit making use of the time slice axiom and the causal properties of the $S-$matrix.
Actually, both the $S-$matrix and the relative $S-$matrix satisfy the following causal factorisation property valid for $A,B,C\in\mathcal{F}_{\text{loc}}$
\begin{equation}\label{eq:causal-factorisation}
S(A+B+C) = S(A+B)  S(B)^{-1}  S(B+C), \qquad A \gtrsim C
\end{equation}
where $A \gtrsim C$ means that $A$ is later than $C$ in the sense that $\supp(A)\cap J_{-}(\supp(C))= \emptyset$, where $J_\pm(\mathcal{O})$ denotes the causal future/past of $\mathcal{O}$.
This causal factorisation property implies that
\begin{equation}\label{eq:causalS1}
S_{V+W}(F) = S_V(F) ,   \qquad  W\gtrsim F 
\end{equation}
\begin{equation}\label{eq:causalS2}
S_{V+W}(F) = S_V(W)^{-1}  S_V(F)  S_V(W) ,  \qquad   F\gtrsim W\ .
\end{equation}
If $g,g'$ coincide on 	$J_+(\mathcal{O})\cap J_-(\mathcal{O})$ the correponding local functionals $V,V'$ differ by
\[V'-V=W_++W_-\]
with $\supp W_+\cap J_-(\mathcal{O})=\emptyset$ and  $\supp W_-\cap J_+(\mathcal{O})=\emptyset$,
hence
\[S_V(F)\mapsto S_{V'}(F)=S_V(W_-)S_V(F)S_V(W_-)^{-1}\text{ for }F\in\mathcal{F}_{\text{loc}}(\mathcal{O}\]
extends to an isomorphism $\mathcal{A}^{g}_I(\mathcal{O})\to\mathcal{A}^{g'}_I(\mathcal{O})$. 
The limit $g\to1$  can now be taken at the algebraic level.

\bigskip
Consider now a Cauchy surface $\Sigma = t^{-1}(0)$ where $t$ is the time coordinate of a standard Minkowski coordinate system which is fixed once and forever. An $\epsilon$ neighborhood of the Cauchy surface $\Sigma$ is  
\[
\Sigma_\epsilon \doteq \{ p\in M | t(p)\in (-\epsilon,\epsilon)\}\ .
\]
Interacting fields satisfy the {\it time slice axiom}, see \cite{ChilianFredenhagen}, namely, for every $A\in\mathcal{A}_{I}(\mathcal{O}) $ there exists a $C\in \mathcal{A}_{I}(\Sigma_\epsilon\cap J(\mathcal{O}))$ such that
\[
A=C+\no{W}
\] 
where $W\in\mathcal{F}_{\mu c}$ vanishes on solutions $\phi$. Here $J(\mathcal{O})=J_+(\mathcal{O}) \cup J_-(\mathcal{O})$.
Hence the on shell algebras $\mathcal{B}_I(\mathcal{O})$, obtained by taking quotients with respect to the ideal generated by these elements, are subsets of $\mathcal{B}_I(\Sigma_\epsilon)$.   
Thus to construct a state for the interacting algebra $\mathcal{B}_{I}(M)$ it suffices to construct it for $\mathcal{B}_{I}(\Sigma_\epsilon)$.
We therefore choose a cutoff function $g$ of the form 
\begin{equation}\label{eq:cutoff}
g(t,\mathbf{x}) = \chi(t) h(\mathbf{x})
\end{equation}
where now the time cutoff is realized by $\chi(t)$ which is a smooth function which is equal to $1$ for $t>-\epsilon$ and $0$ for $t\leq -2\epsilon$. Furthermore, $h$ is a space cutoff which is compactly supported on $\Sigma$.   
To obtain a state in the abiabatic limit, it is sufficient to consider
the limit where $h$ tends to $1$ keeping fixed the time cutoff $\chi$.

\subsection{Interacting KMS states and the adiabatic limit}\label{se:FL-KMS-state}
Equilibrium states are characterized by the Kubo-Martin-Schwinger (KMS)  condition, see \cite{KMS}. This condition yields canonical Gibbs states when they are well defined, but remains meaningful also for infinitely extended systems where the Gibbs formula can no longer be used \cite{KMS,Haag}. 

We recall here the definition. A state $\omega$ over a C*-algebra $\mathfrak{B}$ 
 satisfies the KMS condition with respect to the one parameter group of $*-$automorphisms $\tau_t$ at inverse temperature $\beta$  if for every $A,B\in\mathfrak{B}$ 
$\omega(A\tau_t B)$ is an analytic function for $\text{Im}(t)\in (0,\beta)$, continuous at the boundary and if 
\[
\omega(A\tau_{i\beta} B) =  \omega(BA). 
\]
A state which satisfies the KMS condition at inverse temperature $\beta$ is called $\beta$-KMS state.
It is automatically invariant under $\tau_t$ and satisfies similar relations for $n$-point functions.

If one uses the concept of KMS states for general $*$-algebras one has to enrich the definition by some of these properties. See e.g. Definition 1 in \cite{FredenhagenLindner} 
for an extended discussion.
  
Let $\tau_t$ denote the one parameter group of $*-$automorphisms of 
$\mathcal{B}$, the $*-$algebra of free fields, induced by the action of 
Minkowski space time translations 
\[
\tau_t(F)\doteq F_t\ ,\ F_t[\varphi] \doteq  F[\varphi_t], \qquad \varphi_t(s,\mathbf{x})\doteq \varphi(s-t,\mathbf{x}) \ .
\]
Consider now the following two-point function 
\begin{equation}\label{eq:2pt-free-KMS}
\omega_2^\beta(f,g) \doteq \frac{1}{(2\pi)^3}\int_{\mathbb{R}^4} \overline{\hat{f}(p) }\hat{g}(p)   \frac{1}{1-e^{-\beta p_0}}\delta(p^{2}+m^2) d^4p 
\end{equation}
The quasifree state $\omega^\beta$ constructed out of this two-point function for the free theory is a KMS state at inverse temperature $\beta$ with respect to time translations. This state is easily described by using normal ordering with respect to its symmetrization $H_{\beta}$ as
\begin{equation}\label{eq:free-KMS}
\omega^\beta(\no{F}_{H_{\beta}})= F[0], \qquad  F\in \mathcal{F}_{\mu c}\ . 
\end{equation}
The interacting time evolution $\tau_t^V$ in $\mathcal{B}_I(\mathcal{O})$ the subalgebra of $\mathcal{B}$ generated by $S(F)$ with $F$ supported in $\mathcal{O}$ which is a representation of the algebra of interacting observables supported in $\mathcal{O}$, is such that
\[
\tau_t^V(S_V(F)) \doteq S_V(F_t)
\]
whereas the free evolution is 
$\tau_t(S_V(F)) = S_{V_t}(F_t)$. 
To construct the KMS state for the interacting theory we have to relate the free and interacting time evolution. The causal factorisation property \eqref{eq:causal-factorisation} implies that 
\[
\tau^V_t({S_V(F)}) =  S_V(V_t-V)     \tau_t(S_V(F))    S_V(V_t-V)^{-1},\qquad  S_V(F)\in \mathcal{B}_I(\Sigma_{\epsilon}), t\geq 0.
\]
The map $t\mapsto U(t)\doteq S_V(V_t-V)$ from positive real numbers to unitary elements of $\mathcal{B}_I$ defines a cocycle which intertwines the free and interacting time evolution. The cocycle relation and its infinitesimal generators are 
\[
U(t+s) = U(t)\tau_t U(s), 
\qquad
H_I \doteq -i\left.\frac{d}{dt} U(t)\right|_{t=0} ,
\]
where, in the case of $V$ as in \eqref{eq:interaction-lagrangian} with $g$ as in \eqref{eq:cutoff} it turns out that
\[
H_I = \int h(\mathbf{x}) \mathcal{H}_I(\mathbf{x})    d^3\mathbf{x}, 
\qquad 
\mathcal{H}_I(\mathbf{x}) \doteq 
\int \dot\chi(t)  R_V(-\mathcal{L}_I(t,\mathbf{x})) dt.
\]
Hence $H_I$ and $\mathcal{H}_I$ play the role of the interacting Hamiltonian and the interacting Hamiltonian density.
We stress that due to the smearing in time, $\mathcal{H}_I(\mathbf{x})$ is a well defined formal power series with coefficients contained within the algebra of the free theory.

For any $\beta$-KMS state $\omega^\beta$ of the free theory, like 
the quasifree state \eqref{eq:free-KMS} constructed with the two-point function \eqref{eq:2pt-free-KMS} , and any spatial cutoff described by the test function $h$ on $\Sigma$ we obtain a $\beta-$KMS state of the theory with interaction $H_I(h)$
with respect to the evolution $\tau^V_t$ observing that 
\[
t\mapsto \omega^{\beta}(A   U(t))
\]
for every $A\in\mathcal{B}_I(\Sigma_\epsilon)$ can be analytically continued to $\text{Im}t\in[0,\beta]$. Hence, 
\begin{equation}\label{eq:FLstate}
\omega^{\beta, V}_h (A) \doteq \frac{\omega^{\beta}(A  U(i\beta))}{\omega^{\beta} (U(i\beta))}, 
\qquad 
A\in\mathcal{B}_I(\Sigma_\epsilon)
\end{equation}
defines a $\beta-$KMS state with respect to $\tau^V_t$, as proved in \cite{FredenhagenLindner}.

Furthermore, the expectation values in the state $\omega_h^{\beta,V}$ can be computed by the following formula
\begin{align}\label{eq:trunc}
\omega^{\beta,V}_h(A)=&\sum_{n}\int_{0\le u_1\le\dots u_n\le\beta} d u_1\dots d u_n\int_{\RR^{3n}}d^3\mathbf{x}_1\dots d^3\mathbf{x}_n 
h(\mathbf{x}_1)\dots h(\mathbf{x}_n)\nonumber\\
 &\omega_T^{\beta}\!\left(A;\tau_{iu_{1}}(\mathcal H_I(\mathbf{x}_{1}));\dots;\tau_{iu_n}(\mathcal H_I(\mathbf{x}_n))\right), 
 \qquad 
 A\in\mathcal{B}_I(\Sigma_\epsilon)
\end{align}
Here $\omega_T^\beta$ denotes the truncated functional associated to $\omega^\beta$. 

As shown in \cite{FredenhagenLindner}, the limit $h\to1$ 
can now be taken provided the truncated $n-$point functions decay sufficiently fast for large spatial separations. 
Furthermore, the obtained state does not depend on $\chi$ anymore. 
In this way one obtains the correlation functions for an interacting field in thermal equilibrium in the case of a massive theory.

\subsection{Principle of perturbative agreement	  -  Massless case}\label{se:perturbative-agreement}

If the linearized theory is massless, the limit $h\to1$ in \eqref{eq:trunc} cannot be taken  because the decay of the $n-$point function is too slow.
However, in the case of a $\phi^4$ theory, it is possible to use a similar construction \cite{DHP}. The idea is to modify the splitting
\begin{equation}\label{eq:splitting1}
\mathcal{L} = \mathcal{L}_0+\mathcal{L}_I, 
\qquad
\mathcal{L}_0\doteq-\frac{1}{2}\partial \phi \partial \phi, 
\qquad 
\mathcal{L}_I \doteq -\lambda\frac{1}{4} \phi^4
\end{equation}
by adding an artificial mass $M$ to the background theory and subtracting it in the interacting Lagrangian, namely 
\begin{equation}\label{eq:splitting2}
\mathcal{L} = \mathcal{L}_0'+\mathcal{L}_I', 
\qquad
\mathcal{L}_0'\doteq\mathcal{L}_0 - \frac{{M}^2}{2} \phi^2, 
\qquad 
\mathcal{L}_I' \doteq \mathcal{L}_I + \frac{{M}^2}{2} \phi^2.
\end{equation}
Let $H$ be the distinguished Hadamard function for $\square-M^2$ with a length scale $\xi$ and $T$ a time ordering operator with $TF=\no{F}_H$. 
Let $H_{\beta}=H+W_{\beta}$ denote the symmetrized 2-point function of the $\beta$-KMS state as in \eqref{eq:2pt-free-KMS} for the theory with the modified free Lagrangian $\mathcal{L}_0'$. Then 
\begin{equation}\label{eq:Termal-Massphi4}T(\phi^4)=\no{\phi^4}_{H_{\beta}}+6W_\beta(0,0)\no{\phi^2}_{H_\beta}+6W_{\beta}(0,0)^2\ .
\end{equation}
We see that the interaction Hamiltonian density in the KMS-state contains a mass term with a positive coefficient as long as $M^2<M^2_\beta$ with the thermal mass  
\[M^2_{\beta}=3\lambda W_{\beta}(0,0)\ .\] 
 
Under this condition the interaction Lagrangian remains convex and possesses a single stationary point at $\phi=0$. 
As discussed for example in \cite{DHP}, the thermal mass is 
\[
M_\beta^2 = \lambda\left(c_M M^2+
 \frac{1}{2\pi^2}\int_{0}^\infty   \frac{1}{e^{\beta \sqrt{p^2+M^2}}-1}  \frac{p^2}{\sqrt{p^2+M^2}} dp \right)
\]
where $c_M = \frac{1}{8\pi^2} \log(M\xi) $ is a renormalization constant and it depends on the length scales $\xi$ in \eqref{eq:H}. If $\xi M$ is equal to $1$ then $M_\beta^2$ vanishes in the limit $\beta\to\infty$ and $M_\beta^2 = T^2/12+O(M^2)$.
We finally observe that the theories constructed with the two different splittings are equivalent thanks to the {\it principle of perturbative agreement}, which has been shown to hold in \cite{HW05}, see also \cite{DHP}. Further details about the validity of this principle are collected in Appendix \ref{se:ppa}.
We finally recall that, if the interaction Lagrangian is quadratic in the field $Q=\int \delta m^2 \varphi^2 d^4x$ and if it corresponds to a perturbation of the mass $m$ of the free theory to $\sqrt{m^2+\delta m^2}$, the equilibrium state constructed as in \eqref{eq:FLstate} is the KMS state at inverse temperature $\beta$ with perturbed mass.
This last observation has been proved in Theorem 3 of \cite{Drago} and it shows that perturbative agreement is compatible with the construction of equilibrium states discussed in \cite{FredenhagenLindner}.

\section{Massive complex scalar field}

We discuss the equilibrium states at finite temperature $\beta$ with a nonzero chemical potential $\mu$.
We first discuss the free theory ($\lambda=0$), afterwards we study the corresponding states for the interacting theory.
We are interested in finding states which can be interpreted as exhibiting Bose-Einstein condensation.
The traditional way of defining BEC with particle numbers and occupation numbers cannot be applied in relativistic quantum systems. 
Instead we look at states with non vanishing one-point functions, thus showing spontanous breakdown of the internal $U(1)$ symmetry of the theory.
This symmetry is generated by a conserved current $J_{\mu}$. The charge density $J_0$,
\[
J_0(x)\doteq -i:\!\dot\ph^*\ph-\ph^*\dot\ph\!:_H \ ,
\]
replaces the particle density of the nonrelativistic theory.
The mean of the charge density then distinguishes between different phases.

In the free theory, at fixed inverse temperature $\beta$, there is a critical value for the mean charge density. Below this value the pure phases correspond to unique gauge invariant states, with a chemical potential $\mu$ depending on the charge density and with $\mu^2<m^2$.
If the charge density is above this threshold the chemical potential has to satisfiy $\mu^2=m^2$, the states corresponding to pure phases have nonvanishing one-point functions which are related by the gauge symmetry. (See {\cite{BRvol2}} for the concept of chemical potential in an algebraic formulation.)

Due to the non vanishing one-point function, the two-point function is not decaying at large separations. 
Similar non vanishing long distance correlations are the basis of the criterion for BEC in the non relativistic theory.
There one says that the ground state has a macroscopic occupation if the one-particle density matrix smeared in both entries over a spatial box of dimension $L$ grows at least as particle number $N$ in the limit where $L\to \infty$ keeping $N/L^3$ finite, see \cite{LiebSeiringerSolovejYngvason} for a more extensive discussion.

\subsection{Condensate in the free theory}\label{sec:free-theory}

Let us start discussing the condensate for a free massive complex scalar quantum field theory propagating in a Minkowski spacetime.
Let us denote by $\ph$ the associated field configuration. 
Its equilibrium states with {inverse temperature} $\beta>0$ and {\it chemical potential} $\mu$, $|\mu|<m$ are 
the states which 
satisfy the KMS condition with respect to the time evolution
\begin{equation}\label{time}
\tau_{t,\mu}(\ph(x))=\ph(x+te_0)e^{it\mu}
\end{equation}
where $e_0$ denotes the unit vector in time direction. 
The theory possesses an internal $U(1)$ symmetry which might be spontaneously broken in some of equilibrium states. 
Hence, an interesting observable to distinghish these states is the current density 
\begin{equation}\label{eq:free-current}
J_0(f)\doteq\int J_0(x)f(x) d^4x
\doteq-i\int \left(:\!\dot\ph^*\ph-\ph^*\dot\ph\!:_H \right)f d^4x
\end{equation}
where this is seen as an element of $\mathcal{A}$.
We observe that, in view of the symmetry of the Hadamard coefficients $U$,  $V_i$ of $V=V_n\sigma^n$ in \eqref{eq:H}, and $W_i$ of $W=W_n\sigma^n = (\omega_0-H-i\Delta/2)$,
$\omega_0$ being the two-point function of the vacuum state,
we have that 
$:\!\dot\ph^*\ph-\ph^*\dot\ph\!:_H = :\!\dot\ph^*\ph-\ph^*\dot\ph\!:_{H+W}.$
Hence, $J_0(f)$ can be seen as the current normal-ordered with respect to  vacuum state. 
The possible pure phases are thus characterized by the following proposition

\begin{proposition}
For inverse temperature $\beta>0$ and chemical potential $|\mu|<m$ there exists an unique KMS state with respect to $\tau_{t,\mu}$ whose $n-$point functions are tempered distributions. This state, denoted by $\om_{\beta,\mu}$ is quasi-free and
its two-point functions are 
\begin{equation}\label{eq:2pt-free}
\om_{\beta,\mu}(\ph^*(x)\ph(y))\doteq\frac{1}{(2\pi)^{3}}\int{d^4p}\,\delta(p^2+m^2)\epsilon(p_0)e^{ip(x-y)}\frac{1}{1-e^{-\beta(p_0-\mu)}}
\end{equation}
and $\om_{\beta,\mu}(\ph(x)\ph(y)))=\om_{\beta,\mu}(\ph^*(x)\ph^*(y)))=0$.
The charge density in this state is
\begin{align*}
\om_{\beta,\mu}(J_0(x))&=\int d^4p\,2|p_0|\delta(p^2+m^2)\left(\frac{1}{e^{\beta(|p_0|-\mu)}-1}-\frac{1}{e^{\beta(|p_0|+\mu)}-1}\right).
\end{align*}
It holds that 
\begin{equation}\label{eq:rho-critical}
|\om_{\beta,\mu}(J_0(x)) | \leq \rho_{cr}(\beta)\doteq 
\om_{\beta,m}(J_0(x))
\end{equation}
where $\rho_{cr}(\beta)$ is the critical charge density.
For $\beta>0$ and $\mu= \pm m$ there exist various KMS states with respect to $\tau_{t,\mu}$. Let us denote by $\Omega_{\beta,\pm m}$ the set of quasifree KMS states.
The pure phases are the extremal points in $\Omega_{\beta,\pm m}$ and these states are
\[
\omega_{\beta,c}^\pm = \om_{\beta,\pm m}\circ\gamma_c^{\pm}
\]
where $\gamma_c^{\pm}$ is an automorphism which is generated by
\begin{equation}\label{eq:aut-gamma}
\gamma^{\pm}_c(\ph(x))=\ph(x)+e^{\pm ix^0m}c(\mathbf{x})
\end{equation}
where $c$ is a harmonic function of the spatial variables $\mathbf{x}$, $\Delta c=0$, with the spatial Laplacian $\Delta$.
In this cases 
\begin{align*}
\om_{\beta,c}^{\pm}(J_0(x))&=\int d^4p\,2|p_0|\delta(p^2+m^2)\left(\frac{1}{e^{\beta(|p_0|\mp m)}-1}-\frac{1}{e^{\beta(|p_0|\pm m)}-1}\right)\pm 2m|c|^2.
\end{align*}
Furthermore,
\[
|\om_{\beta,c}^{\pm}(J_0(x))| \geq \rho_{cr}(\beta).
\]
\end{proposition}
\begin{proof}
First of all we observe that the KMS states corresponding to pure phases are quasifree states with at most a non trivial one-point function. A proof of this fact can be found in \cite{RoccaSirugueTestard}.
Furthermore, the truncated two-point function are constrained by the KMS condition to be equal to \eqref{eq:2pt-free}.
The one-point function $\om(\varphi)$ is constrained by the equation of motion and by request of invariance under the action of $\tau_{t,\mu}$. In particular, invariance under the action $\tau_{t,\mu}$ implies that the function
\[
t\mapsto \omega(\varphi(x))e^{-i\mu t}
\] 
is constant in time. This function needs to be a solution of $-\Delta + m^2 -\mu^2$, however, for $|\mu|<m$ these solutions cannot be tempered distributions. 
The inequalities involving the critical charge density $\rho_{cr}(\beta)$ is an immediate consequence of the form of the expectation value of $J_0$ in the analyzed states.
\end{proof}

The state $\omega_{\beta,\mu}$ respects the $U(1)-$symmetry of the theory. 
For chemical potentials $\mu=\pm m$ there exist many equilibrium states at fixed temperature and the $U(1)$ symmetry is spontaneously broken in the states $\om_{\beta,c}^{\pm}$  for $c\neq0$.
At zero temperature, all the states with chemical potential $|\mu|<m$ coincide with the vacuum. 
Hence, the vacuum expectation value 
of the charge density $J_0(x)$ vanishes in that limit. 
Thus any non vanishing charge density in the limit $T=\beta^{-1}\to0$ of vanishing temperature requires a condensate. 

 At finite $\beta$ and $\mu=m$ we have  
$
\om_{\beta,c}^{+}(J_0(x))
= \rho_{cr}(\beta)
+ 2m|c|^2$.
A non vanishing condensate can occur only if $|\om_{\beta,c}^{+}(J_0)|>\rho_{cr}(\beta)$.
Since $\rho_{cr}(\beta)>0$ is monotonically decreasing in $\beta$, diverges for $\beta\to0$ and tends to $0$ for large $\beta$,  
at a fixed charge density $\om_{\beta,c}^{\pm}(J_0)$, there is a critical temperature $T_{cr}>0$ such that only for $T<T_{cr}$ a condensate can be formed.

In Appendix \ref{sec:non-rel-limit} we shall compute the nonrelativistic limit of the states analyzed in this section showing that
the charged scalar field tends to the nonrelativistic scalar field and the states tend to the known equilibrium states of a nonrelativistic system of spinless non-interacting bosons in the thermodynamic limit. Furthermore, the charge density converges to the particle density. Finally, we see that 
the  nontrivial one point function shows up in the long distance behavior of the 2-point function which coincides with the one-particle-reduced density matrix in the thermodynamic limit.

\subsection{Massive complex scalar field with $\varphi^4$ interaction over the condensate}\label{se:Lagrangians}

In this section we start discussing the perturbative construction of the $\varphi^4$ interacting theory over a suitable classical solution of the equation of motion which represents the condensate in the Minkowski spacetime. 
The Lagrangian of the theory we are considering is thus
\[
\mathcal{L}  =  -\frac{1}{2} \partial \overline{\varphi}\partial {\varphi} - \frac{1}{2}m^2 |\varphi|^2 - \frac{\lambda}{4} |\varphi|^4
\]
where $\varphi$ is a complex scalar field. Following a similar procedure presented in section III of \cite{Schmitt}, we expand $\mathcal{L}$ around a real classical solution $\phi$ which represents the condensate. 
Hence
\begin{equation}\label{eq:field-decomposition}
\varphi = e^{-i\mu x^0}(\phi + \psi)
\end{equation}
where $\mu$ is again the chemical potential, $x^0$ is a fixed Minkowski time and $\psi$ is a complex scalar field which describes the perturbations. Its real and imaginary parts are denoted by $\psi_1$ and $\psi_2$ and thus
\[
\psi = \psi_1+i\psi_2.
\]
The Lagrangian density can now be written as a sum of contributions homogenous in the number of fields $\psi$ as follows
\[
\mathcal{L} = \mathcal{L}_0+\mathcal{L}_2+\mathcal{L}_3+\mathcal{L}_4 
\]
where 
\begin{align*}
\mathcal{L}_0 &= \frac{1}{2} \left|(\partial_0-i\mu)\phi\right|^2  - \frac{1}{2} |\nabla \phi|^2 - \frac{\lambda}{4}|\phi|^4 - \frac{1}{2}m^2|\phi|^2 \\
\mathcal{L}_2 &= \frac{1}{2} \left|(\partial_0-i\mu)\psi\right|^2  - \frac{1}{2} |\nabla \psi|^2 -\lambda \phi^2|\psi_1|^2
-\frac{1}{2}(\lambda\phi^2+m^2)|\psi|^2 \\
\mathcal{L}_3 &=  -\lambda \phi \psi_1 |\psi|^2 \\
\mathcal{L}_4 &=  -\frac{\lambda}{4}|\psi|^4.
\end{align*}
The term $\mathcal{L}_1$ vanishes because $\phi$ is chosen to be a  stationary point for the classical action $\int \mathcal{L}_0 d^4x$.
In the following, we shall choose a non vanishing $\phi$ to describe the condensate, we discuss the quantization of the linearized theory ($\mathcal{L}_2$) and finally we use perturbation theory over the linearized theory to take into account $\mathcal{L}_3+\mathcal{L}_4$. 

Contrary to the case of the free theory, in the interacting theory the chemical potential is not restricted to the interval $[-m,m]$. A chemical potential outside of this interval induces a spontaneous breakdown of symmetry showing up in a non-vanishing one-point-function and, as a consequence, in long range behavior of the two-point-function, similar to the non-relativistic case. In contrast to the free case, states with different condensates are not in mutual thermal equilibrium, since their chemical potentials differ.

\subsubsection{The condensate in the vacuum theory}\label{sse:condensate}

We look for the case of a translation invariant background $\phi$.
Then, 
the kinetic term in $\mathcal{L}_0$ has no effect and $\phi$ is a stationary point for 
\[
I=\int U(|\phi^2|) d^4x
\]
where  
\[
U(|\phi^2|) =    -\frac{\lambda}{4}|\phi|^4-\frac{1}{2}(m^2-\mu^2)|\phi|^2 
\]
hence, it holds 
\begin{equation}\label{eq:bec}
 |\phi|^2 =  \frac{\mu^2-m^2}{\lambda}
\end{equation}
and only one real, positive  and translational invariant background solution $\phi$ is thus available for $\mu^2>m^2$.
We notice that for fixed $\mu^2>m^2$ the background value of the field $\phi$ is of order $1/\sqrt{\lambda}$.
In this case, we observe that $\mathcal{L}_2$ does not depend on $\lambda$, $\mathcal{L}_3$ is of order $\sqrt{\lambda}$ while $\mathcal{L}_4$ is of order $\lambda$.  
In the next we shall construct the interacting field theory with perturbation methods considering $\mathcal{L}^I=(\mathcal{L}_3+\mathcal{L}_4)$ the interaction Lagrangian. Hence, the solution we shall obtain will be a formal power series in $\sqrt{\lambda}$.

In the next we shall discuss the construction of the quantum theory over the background discussed so far.

We argue that there exists a limit in which all the correlation functions are dominated by the classical 
background $\phi$. 
Actually, in the limit $\lambda\to0$ keeping $|\mu|^2-m^2$ finite, the classical background $\phi$ diverges as $\lambda^{-1/2}$, furthermore, the linearized theory is not affected by changes of $\lambda$ while 
 the $S-$matrix constructed with the interacting Lagrangian tends to $1$. 

Hence, 
we expect that, under this limit, the one-point function rescaled by $\sqrt{\lambda}$ tends to the background value 
$\tilde{\phi} e^{-i m x^0}$ where $\tilde{\phi} = \sqrt{\lambda} \phi$, and similarly,  
the rescaled charge density $\lambda J_0$ tends to the charge density of the background
$2\mu|\tilde{\phi}|^2$. Both these quantities do not depend on $\lambda$.
We finally observe that, the rescaled background $\tilde{\phi}$, is a solution of the equation of motion descending from the rescaled classical Lagrangian density
\[
\tilde{\mathcal{L}}_0 = 
\lambda \mathcal{L}_0 = 
\frac{1}{2} |(\partial_0-i\mu)\tilde{\phi}|^2  - \frac{1}{2} |\nabla \tilde{\phi}|^2 - \frac{1}{4}|\tilde{\phi}|^4 - \frac{1}{2}m^2|\tilde{\phi}|^2
\]
which is also independent on $\lambda$.

This is in analogy to what happens in the nonrelativistic case, actually there under the Gross-Pitaevskii limit, the density of the ground state tends to the density of a suitable classical solution of the Gross-Pitaevskii equation \cite{LiebSeiringerYngvason, LiebSeiringerYngvasonCMP}, see in particular Theorem 1.1 and Theorem 1.2 in \cite{LiebSeiringerYngvason}. 

We thus argue that, the equation of motion corresponding to the rescaled zeroth order Lagrangian $\tilde{\mathcal{L}}_0$ can be interpreted as an analogous of the Gross-Pitaevskii equation in the relativistic setting and thus the limit $\lambda \to 0$ taken with $m$ and $\mu$ fixed, can be understood as the analogous of the Gross-Pitaevskii limit discussed in the introduction.

\subsubsection{Linearized theory}
The first step to construct the quantization of $\varphi$ is the analysis of the linearized equations of motion for the fluctuations $(\psi_1,\psi_2)$ around $\phi$. They have the form
\begin{equation} \label{eq:linearized-equations}
\begin{aligned}
(\square-M_1^2)\psi_1-2\mu \dot{\psi}_2&=0 \\
(\square-M_2^2)\psi_2+2\mu \dot{\psi}_1&=0 
\end{aligned}
\end{equation}
where 
\begin{equation}\label{eq:masses}
M_1^2=(m^2-\mu^2) + 3\lambda \phi^2 
\qquad\text{and}\qquad
M_2^2= (m^2-\mu^2)+ \lambda \phi^2.
\end{equation}
Notice that if \eqref{eq:bec} holds,
$M_1^2=2(\mu^2-m^2) $  and $M_2^2=0$.
Hence, we assume $M_1>M_2\geq0$. Let us introduce 
\begin{align*}
M^2 =\frac{M_1^2+M_2^2}{2},
\qquad
\delta M^2 =\frac{M_1^2-M_2^2}{2}.
\end{align*}
We observe that the equations \eqref{eq:linearized-equations} for $\tilde{\psi}=(\psi_1,\psi_2)$ can be written in a compact form 
$
D \tilde\psi = 0
$
where $D$ is given in terms of the standard Pauli matrices 
\[
\bm\sigma_1=\begin{pmatrix}
0& 1\\1& 0
\end{pmatrix}, \qquad
\bm\sigma_2=\begin{pmatrix}
0& -i\\i& 0
\end{pmatrix}
,\qquad
\bm\sigma_3=\begin{pmatrix}
1& 0\\0& -1
\end{pmatrix}.
\]
as
\begin{equation}\label{eq:D-form}
\begin{aligned}
D &= (\square-M^2) \mathbb{I} - \delta M^2 \bm\sigma_3 - i 2\mu \partial_0 \bm\sigma_2 
\\
\overline{D} &= (\square-M^2) \mathbb{I} + \delta M^2 \bm\sigma_3 + i 2\mu \partial_0 \bm\sigma_2.
\end{aligned}
\end{equation}
Notice that 
\[
D\overline{D}= \left((-\square+M^2)^2-(\delta M^2)^2 + 4\mu^2 \partial_0^2\right) \mathbb{I}.
\]
The retarded and advanced propagators of the theory can be obtained as
\[
\Delta_R \doteq \overline{D} (D\overline{D})_R,\qquad
\Delta_A \doteq \overline{D} (D\overline{D})_A,
\]
where $(D\overline{D})_R$ and $(D\overline{D})_A$ are the retarded and advanced fundamental solutions of $D\overline{D}$.
Let us thus study 
\begin{align*}
\widehat{D\overline{D}} &= \left( (p^2+M^2)^2 - (\delta M^2)^2 -4\mu^2p_0^2 \right) \mathbb{I}  \\
& = \left( p_0^4      -2p_0^2  (\mathbf{p}^2+M^2 +2\mu^2)   +   (\mathbf{p}^2+M^2)^2 - (\delta M^2)^2\right) \mathbb{I}
\end{align*}
Hence, the four solutions of $p_0^4      -2p_0^2  (\mathbf{p}^2+M^2 +2\mu^2)   +   (\mathbf{p}^2+M^2)^2 - (\delta M^2)^2=0$ are $\pm\omega_\pm$ where 
\begin{align}
\omega_\pm^2 &= w^2 +2\mu^2 \pm \sqrt{(w^2 +2\mu^2)^2  - w^4 +(\delta M^2)^2} 
\notag\\
  &= w^2 +2\mu^2 \pm \sqrt{4\mu^4 + 4\mu^2w^2  +(\delta M^2)^2} 
  \notag\\
  &= w^2 +2\mu^2 \pm \sqrt{(w^2 +2\mu^2)^2  - w_1^2w_2^2}
  \label{eq:omegapm}
\end{align}
where now $w^2 \doteq \mathbf{p}^2+M^2$ and $w_i^2\doteq\mathbf{p}^2+M^2_i$.

We notice that if $M_2=0$ 
we have that $w_2=0$ for 
$|\mathbf{p}|=0$
 and thus
\[
\lim_{|\mathbf{p}| \to 0 }\omega_-^2   =   0
\]
hence a massless mode is present in this system as expected by the Goldstone theorem. 
However, if the linearized theory is not in a ground state, it could happen that the
normal-ordered interaction Lagrangian with respect to the state, as in \eqref{eq:Termal-Massphi4},
contains quadratic terms that
could contribute to the masses of the fluctuations.
If we use the formula
\[
\prod_i  \frac{1}{x-x_i} = \sum_i \frac{1}{x-x_i} \prod_{j\neq i} \frac{1}{x_i-x_j}
\]
valid for pairwise different $x_1,\dots, x_n$, a couple of times, we get
\begin{align*}
\widehat\Delta_R 
&= \frac{\widehat{\overline{D}}}{(\omega_+^2-\omega_-^2)} 
\left(  \frac{1}{(p_0+i\epsilon)^2-\omega_+^2} 
-\frac{1}{(p_0+i\epsilon)^2-\omega_-^2}\right)
\end{align*}
where recalling \eqref{eq:D-form}
\begin{equation}\label{eq:Dhat}
\widehat{\overline{D}} = -(p^2+M^2) \mathbb{I} + \delta M^2 \bm\sigma_3 + 2\mu p_0 \bm\sigma_2.
\end{equation}
We can construct $\Delta_A$ just changing $i\epsilon\to - i\epsilon$, while the Feynman propagator $\Delta_F$ is obtained substituting 
$(p_0+i\epsilon)^2$ with $p_0^2+i\epsilon$ and multiplying by $i$.
Finally, the  commutator function is 
\[
\widehat\Delta = \frac{2\pi i \widehat{\overline{D}}}{\omega_+^2-\omega_-^2} \epsilon{(p_0)}\left( \delta(p_0^2-\omega_+^2)-\delta(p_0^2-\omega_-^2) \right).
\]
With $\Delta$ at disposal the quantum product can be given as in section \ref{sec:quantum-algebras}, in this way we obtain the $*-$algebra of field observables $\mathcal{A}_0$. 
The analog of the Hadamard singularity $H$ \eqref{eq:H} for this theory can be given. The form of some of the corresponding Hadamard coefficients are discussed in Appendix \ref{sec:hadamard-coefficient}. 
The extended $*-$algebra of field observables $\mathcal{A}$ containing Wick polynomials normal-ordered with respect to $H$ is obtained as in \ref{sec:quantum-algebras}. 
  
\subsection{KMS states for the linearized theory}\label{se:kms-linear}

In view of the decomposition of the field $\varphi$ given in \eqref{eq:field-decomposition} the action of $\tau_t$ on $\psi$ as time translation is equivalent to the action of $\tau_{t,\mu}$ on $\varphi$ as given in \eqref{time}.
Hence, having the causal propagator of the linearized theory at disposal, we can construct the two-point function of the quasifree $\beta-$KMS state with respect to time translation $\tau_t$ of the $\psi$ fields
as 
\[
\widehat{\omega}_{\beta,\psi} = \frac{i\widehat\Delta}{1-e^{-\beta p_0}}.
\]
Introducing 
\[
\mathcal{S} \doteq\begin{pmatrix} \psi_1(x)\psi_1(y) &  \psi_1(x)\psi_2(y)\\\psi_2(x)\psi_1(y)& \psi_2(x)\psi_2(y)
\end{pmatrix}
\]
we have that the two-point function of the quasifree $\beta-$KMS state $\omega_{\beta,\psi}$ is in position space
\begin{equation}\label{eq:2pt-condensate}
\omega_{\beta,\psi}(\mathcal{S})=
\frac{1}{(2\pi)^{3}} \int d^4p\; e^{ip(x-y)} 
\frac{  \epsilon{(p_0)}}{\omega_+^2-\omega_-^2}  
\left( \delta(p_0^2-\omega_+^2)-\delta(p_0^2-\omega_-^2) \right)
\frac{(-\widehat{\overline{D}})}{1-e^{-\beta p_0}}.
\end{equation}
Recalling the form of $\omega_\pm $ in \eqref{eq:omegapm} we notice that if $M_1>M_2>0$  
\[
\omega_\pm^2=w^2 +2\mu^2 \pm \sqrt{(w^2 +2\mu^2)^2  - w_1^2w_2^2}  >0, \qquad
\omega_+^2-\omega_-^2 =2\sqrt{4\mu^4 + 4\mu^2w^2  +(\delta M^2)^2}  >0
\]
this means that no infrared divergences are present in $\omega_{\beta,\psi}$ if $M_2>0$.
The two-point function of the ground state of the $\psi_i$ theory (keeping the condensate $\phi\neq 0$) can be obtained taking the limit $\beta\to\infty$ of \eqref{eq:2pt-condensate}.
Hence, to study expectation values in the state $\omega_{\beta,\psi}$ of observables normal-ordered with respect to the vacuum $\omega_{\infty,\psi}$ we consider $W=\omega_{\beta,\psi}-\omega_{\infty,\psi}$ and we obtain
\begin{align}
W(\mathcal{S}) 
\label{eq:W}
&=
\frac{1}{(2\pi)^{3}} \int d^4p\; e^{ip(x-y)} 
\frac{1}{\omega_+^2-\omega_-^2}  
\left( \delta(p_0^2-\omega_+^2)-\delta(p_0^2-\omega_-^2) \right)
\frac{(-\widehat{\overline{D}})}{e^{\beta |p_0|}-1}.
\end{align}
We observe that in the coinciding point limit, the off diagonal expectation values are vanishing 
\[
W(\psi_1(x)\psi_2(x))=0, \qquad W(\psi_2(x)\psi_1(x))=0
\]
and, introducing  $2\delta\omega^2\doteq\omega_+^2-\omega_-^2 = 2\sqrt{4\mu^4 + 4\mu^2 w^2  +(\delta M^2)^2}$ we have that 
 $W(\psi_i^2)$, the coinciding point limits of the diagonal elements of $W(\mathcal{S})$,  are
\begin{align} \label{eq:psi21}
W(\psi_1^2)
& =
\frac{1}{(2\pi)^{3}} \int d^3\mathbf{p}\;  
  \left( \frac{\delta\omega^2+2\mu^2+\delta M^2 }{\delta\omega^2}
  \frac{1}{2\omega_+}
  \frac{1}{e^{\beta \omega_+}-1}+
\frac{
\delta\omega^2 -2\mu^2-\delta M^2 
}{\delta\omega^2}\frac{1}{2\omega_-}\frac{1}{e^{\beta \omega_-}-1} \right)
\\
\label{eq:psi22}
W(\psi_2^2)
& =
\frac{1}{(2\pi)^{3}} \int d^3\mathbf{p}\;  
  \left( \frac{\delta\omega^2+2\mu^2-\delta M^2 }{\delta\omega^2}
  \frac{1}{2\omega_+}
  \frac{1}{e^{\beta \omega_+}-1}+
\frac{
\delta\omega^2 -2\mu^2+\delta M^2 
}{\delta\omega^2}\frac{1}{2\omega_-}\frac{1}{e^{\beta \omega_-}-1} \right)
\end{align}
Notice that the integrand in both $W(\psi_1^2)$ and $W(\psi_2^2)$ are positive.

To analyze some properties of the condensate in the linearized theory we compare the expectation values of the current density
in the state $\omega_{\beta,\psi}$
with \eqref{eq:free-current}, namely the current density of the  free theory analyzed in section \ref{sec:free-theory}.
To this end, we recall the decomposition \eqref{eq:field-decomposition} and we get that 
\begin{equation}\label{eq:J-linear-theory}
J_0=-i (:\dot{\overline{\ph}}{\ph}-{\overline{\ph}}\dot{\ph}:_H)
= \tilde{j}  -i \phi (\dot{\overline{\psi}}-\dot{\psi})+2\mu :|\phi+\psi|^2 :_H
\end{equation}
where now
\[
\tilde{j}=-i  \left(:\dot{\overline{\psi}}\psi -\overline{\psi}\dot\psi :_H\right) = 2\left( :\dot\psi_1\psi_2-\psi_1\dot\psi_2:_H\right)
\]
and $H$ is the distinguished 
Hadamard function constructed in \eqref{eq:H-D}.
We furthermore observe that, up to some choice of the renormalization freedom, $:\dot\psi_1\psi_2-\psi_1\dot\psi_2:_H=:\dot\psi_1\psi_2-\psi_1\dot\psi_2:_{\omega_{\infty,\psi}}$
and $:|\psi|^2:_{H}=:|\psi|^2:_{\omega_{\infty,\psi}}=:\psi_1^2:_{\omega_{\infty,\psi}}+:\psi_2^2:_{\omega_{\infty,\psi}}$. Hence 
\begin{align} \label{eq:psi2}
 \omega_{\beta,\psi}(:|\psi|^2:_H) 
& =\frac{1}{(2\pi)^{3}} \int d^3\mathbf{p}\;  
  \left( \left(1+ \frac{2\mu^2 }{\delta\omega^2}\right)
  \frac{1}{\omega_+}
  \frac{1}{e^{\beta \omega_+}-1}+
\left(1-\frac{2\mu^2}{\delta\omega^2}\right)\frac{1}{\omega_-}\frac{1}{e^{\beta \omega_-}-1} \right)
\end{align}
and
\begin{align} \label{eq:tildej}
\omega_{\beta,\psi}(\tilde{j})
& =
\frac{4\mu}{(2\pi)^{3}} \int d^3\mathbf{p}\;  
\frac{1}{\delta \omega^2}  
\left( 
\frac{\omega_-}{e^{\beta \omega_-}-1}
-
\frac{\omega_+}{e^{\beta \omega_+}-1}
\right).
\end{align}
Hence, 
\[
\omega_{\beta,\psi}(J_0) = \omega_{\beta,\psi}(\tilde{j}) +2\mu \; \omega_{\beta,\psi}(:|\psi|^2:_H)   +2\mu|\phi|^2.
\]
Notice that $\omega_+^2>\omega_-^2$ and that 
$\delta \omega^2 \geq 2 \mu^2$, hence
the integrand in $\omega_{\beta,\psi}(:|\psi|^2:_H) $  given in \eqref{eq:psi2} is always positive and monotonically decreasing in $\beta$. Similarly, for positive $\mu$, the integrand in $\omega_{\beta,\psi}(\tilde{j})$ given in \eqref{eq:tildej} is also always positive and monotonically decreasing in $\beta$. Finally, both expressions  
\eqref{eq:psi2} and \eqref{eq:tildej}
 are diverging for $\beta\to 0$  and vanishes for $\beta\to\infty$. Hence
 similarly to the discussion given in section \ref{sec:free-theory}
 we have that 
 $2\mu|\phi^2|$ plays the role of the condensate charge density.
 
Consider now the case where $M_i$ are given in \eqref{eq:masses} with $\phi$ chosen to satisfy \eqref{eq:bec}.
In this case $\lambda |\phi|^2 = \mu^2-m^2$ and thus the linearized theory does not depend on $\lambda$ while the background field scales as $\lambda^{-1}$. The charge density is thus dominated by the charge density of the background $2\mu|\phi|^2$ thus confirming that the limit $\lambda\to0$ taken with fixed $\mu$ and $m$ is the relativistic analogous of the Gross-Pitaevskii limit discussed in the introduction and at the end of section \ref{sse:condensate}. 

Following closely the discussion given at the end of section \ref{sec:free-theory},
we also see that in this case the critical charge density equals $\rho_{cr}(\beta)$ given implicitly in \eqref{eq:rho-critical}.
Finally, in the limit $\lambda\to0$ taken keeping the ratio $(\mu^2-m^2)/\lambda$ finite we have that the states $\omega_{\beta,\psi}$ of the 
linearized theory discussed so far tend to $\omega^\pm_{\beta,c}$ with $c=\phi$.

\subsubsection{Thermal masses}

Having analyzed the equilibrium state of the free theory on field observables $\mathcal{A}$, the next step in the construction of an equilibrium state for the interacting theory will be an application of the analysis given in \cite{FredenhagenLindner} and summarized in section \ref{se:FL-KMS-state},  namely to use \eqref{eq:FLstate} starting with a quasifree state whose two-point function is given in \eqref{eq:2pt-condensate}. 
However, we expect that the limit $h\to1$ cannot be directly taken because, as discussed above, if \eqref{eq:bec} holds, the mass $M_2$ given in \eqref{eq:masses} vanishes, hence, 
for vanishing spatial momentum, $\omega_-^2$ is also vanishing.
This implies that, the various propagators of the linearized theory diverge for $p\to0$. 
Hence, in agreement with Goldstone theorem a massless mode is present in this case. This implies a slow decay in the connected $n-$point functions constructed with $\omega_{\beta, \mu}$ given in \eqref{eq:2pt-condensate}. 

In order to cure this problem we use a different splitting of the Lagrangian into the free and interacting part. Actually, we add a virtual mass $m_v^2$ to the linearized fields and we remove them in the interaction Lagrangian. More precisely, the Lagrangian of the free theory is now
\begin{equation}\label{eq:L2prime}
\mathcal{L}_2' = \mathcal{L}_2 - \frac{m_v^2}{2}|\psi|^2 
\end{equation}
while the modified interaction Lagrangian is 
\[
{\mathcal{L}'}^I = \mathcal{L}^I +  \frac{m_v^2}{2}|\psi|^2 
 =
 \mathcal{L}_3+ \mathcal{L}_4+  \frac{m_v^2}{2}|\psi|^2.
\]
The elements of the interacting algebra are now given  
in terms of two parameters $\lambda$ and $m_v$. 
More precisely, keeping $\mu$ fixed, as in \eqref{eq:bec}, they are formal power series in $\sqrt{\lambda}$ with coefficients depending on $ m_v^2$, which can be understood as a partial resummation of the original perturbative expansion. 
The advantage of this new expansion is in the fact that the coefficients remain finite in the adiabatic limit, when they are evaluated in the state representing the condensate at finite temperature.
We furthermore observe that the principle of perturbative agreement discussed below implies that the final theory does not depend on this extra parameter $m_v$. 

Let $H_{\psi,\beta}$ be the symmetrized two-point function of the $\beta$-KMS given in \eqref{eq:2pt-condensate},
we observe that if $m_v$ is chosen to be sufficiently small, the interaction Lagrangian normal-ordered with respect to $H_{\psi,\beta}$ is again convex. 
To see this in detail, let $T$ be a time ordering operator such that  $TF = \no{F}_H$ where $H$ is the distinguished 
Hadamard function constructed in \eqref{eq:H-D}.
We have up to a choice of renormalization freedom (the lenghtscale $\xi$ in \eqref{eq:H-D} chosen in such a way that  
$:|\psi|^4:_H=:|\psi|^4:_{H_{\infty,\psi}}$ where $H_{\infty,\psi}$ is the symmetrized two-point function of the vacuum obtained taking the limit $\beta\to\infty$ in \eqref{eq:2pt-condensate}) 
\begin{equation}\label{eq:modification}
T\left( \frac{1}{4}|\psi|^4\right) =
\frac{1}{4}\no{|\psi|^4}_{H_\beta}+
\frac{1}{2}  (3 m_{\beta,1}^2 +     m_{\beta,2}^2) \no{|\psi_1|^2}_{H_\beta}
+
\frac{1}{2}  (3 m_{\beta,2}^2 +     m_{\beta,1}^2) \no{|\psi_2|^2}_{H_\beta}
+C
\end{equation}
where $C$ is a constant which can de discarded and the two thermal masses $m_{\beta,i}$ have been computed above in \eqref{eq:psi21} and \eqref{eq:psi22}
\[
m_{\beta,1}^2 \doteq W(\psi_1^2), \qquad m_{\beta,2}^2 \doteq W(\psi_2^2).
\]
hence $T(\frac{1}{4}|\psi|^4 -\frac{m_v^2}{2}|\psi|^2)$ remains convex, provided
$m_v^2< \lambda (3 m_{\beta,1}^2 +     m_{\beta,2}^2) $ and $m_v^2 < \lambda (3 m_{\beta,2}^2 +     m_{\beta,1}^2)$.
For this reason it is expected that the stability properties of the theory are not altered 
adding the virtual masses $m_v$ in the free theory.

\subsection{Condensate and perturbative agreement}

We need to check that the Wick monomials in the 
interaction Lagrangian 
originally constructed over the linearized theory $\mathcal{L}_2$, are not corrected because of the new splitting. 
In other words, we prove that the principle of perturbative agreement holds also when a condensate is present. 
Let us recall the form of the equation of motion for $\tilde\psi = (\psi_1,\psi_2)$ given in \eqref{eq:D-form}
\begin{align*}
D\tilde\psi = 0, \qquad D = (\square-M^2) \mathbb{I} - \delta M^2 \bm\sigma_3 - i 2\mu \partial_0 \bm\sigma_2
\end{align*}
and consider the preferred Hadamard function $H_{M^2,\delta M^2,\mu}$, with a lengthscale $\xi$, associated with this operator constructed in Appendix \ref{sec:hadamard-coefficient}.

We prove now that the time ordering operator $T_{M,\delta M,\mu}(F) = \no{F}_{H_{M^2,\delta M^2,\mu}}$ satisfies the principle of perturbative agreement. To this end consider the $2\times 2$ matrix $\Psi=\{\psi_i\psi_j\}_{i,j\in\{1,2\}}$,
following the discussion presented in Appendix \ref{se:ppa}
we want to prove that 
\[
\Delta \Psi =  \gamma T_{0,0,\mu}\Psi - T_{M,\delta M,\mu} \Psi
\]
vanishes, where $\gamma$ is the map which intertwines $T_{0,0,\mu}(\psi_i(x)\psi_j(y))$ to $T_{M,\delta M,\mu}(\psi_i(x)\psi_j(y))$. 
Formally, indicating with the supscript $c$ the quantities referred to the condensate $(M,\delta M,\mu)$ and with the supscript $0$ those referred to the vacuum $(0,0,m)$ we have to compute
\begin{align*}
\Delta \Psi & = \lim_{y\to x}\left((H_F^0(x,y))_{\text{ren}} -  H_F^c(x,y)\right)
\end{align*}
where $H_F^{c/0}$ are the time-ordered/Feynman propagator associated to the Hadamard functions $H^{c/0}$.
By power counting we notice that all the contributions larger than order two in $(D^c-D)$ are removed from 
$\Delta \Psi$ by renormalization.
In order to check if there is a finite reminder after this renormalization we analyze the form of the Hadamard singularity $H_{M^2,\delta M^2,\mu}$ given in Appendix \ref{sec:hadamard-coefficient}, we remove the contributions of order lower than the third in $x_i$ from $H_{M^2+x_1,\delta M^2+x_2,\mu+x_3}$ before computing the coinciding point limit.
Let us recall the form of some Hadamard coefficient given in Appendix \ref{sec:hadamard-coefficient}. From equations \eqref{eq:U} and \eqref{eq:V0-1} we have 
\[
U =\cos(\mu x^0)\mathbb{I} - i\bm\sigma_2 \sin(\mu x^0)
\]
and
\[
V_0 = -\frac{1}{2} U \left(   (\mu^2+M^2) \mathbb{I} +
\delta M^2
\left(\frac{\sin(2\mu x^0)}{2\mu x^0}\bm\sigma_3 
+\frac{\cos(2\mu x^0)-1}{2\mu x^0} \bm\sigma_1\right) \right).
\]
hence, in $H_{M^2+x_1,\delta M^2+x_2,\mu+x_3}$,
$U/\sigma$ does not depend on $x_1$ and $x_2$ and the contributions in $x_3$ larger than the second order vanish in the coinciding point limits.
Similarly, the contributions larger than second order vanish also in $V\log(\frac{\sigma}{\xi^2})$.
As for the mass perturbations we thus have that the Wick monomials $\Psi^n$ computed with respect to the Hadamard parametrix do not change under the action of the map which intertwines the time ordering constructed with two different sets of parameters $M,\delta M, \mu$.

\subsection{Construction of the condensate, cluster estimates }

To construct the state at finite temperature over the condensate, we follow the construction given in \cite{FredenhagenLindner} and summarized in section \ref{se:Lagrangians}. 
In particular, at fixed spatial cutoff $h$, the equilibrium state at inverse temperature $\beta$ can be constructed as in \eqref{eq:trunc}. Using the spatial translation invariance of the interacting hamiltonian and denoting by $\tau_{t,\mathbf{x}}$ the $*$-automorphisms realizing a spacetime translation of step $(t,\mathbf{x})$ we have, for any element $A$ of $\mathcal{A}_I({\Sigma_\epsilon})$,
\begin{align}
\omega^{\beta,V}_h(A) \doteq&\sum_{n}\int_{0\le u_1\le\dots u_n\le\beta} d u_1\dots d u_n\int_{\RR^{3n}}d^3\mathbf{x}_1\dots d^3\mathbf{x}_n 
h(\mathbf{x}_1)\dots h(\mathbf{x}_n)\nonumber\\
 &\omega^\beta_T\!\left(A;\tau_{iu_{1},\mathbf{x}_{1}}(\mathcal H_I(0));\dots;\tau_{iu_n;\mathbf{x}_{n}}(\mathcal H_I(0))\right)
 \label{eq:KMS-h-finite}
\end{align}
where $\omega^\beta_T$ denotes the truncated $n-$point function of the state $\omega_{\beta,\psi}$.
Hence, in order to discuss the limit $h\to1$ we need to control the decay for large spatial directions of the truncated $n-$point functions.
We have actually the following theorem

\begin{theorem}\label{th:cluster} (Cluster expansions).
Consider $A_i\in\mathcal{A}(\mathcal{O})$ where $\mathcal{O}\subset B_R$ the open ball of radius $R$ centered at the origin of the Minkowski spacetime and
\[	
F(u_1,\mathbf{x}_1;\dots ;u_n,\mathbf{x}_n)\doteq \omega_T(A_0 ; \tau_{iu_1,\mathbf{x}_1}(A_1);\dots ;\tau_{iu_n,\mathbf{x}_1}(A_n)).
\]
There exists a constant $C$ such that 
\[
|F(u_1,\mathbf{x}_1;\dots ;u_n,\mathbf{x}_n)| \leq C e^{-\frac{m}{\sqrt{n}}r},\qquad 
r=\sqrt{\sum_i |\mathbf{x}_i|^2}
\]
for $r>4cR$, uniformly in $u$ for $0 < u_1 < \dots < u_n <\beta$  with
$\beta-u_n \geq \frac{\beta}{n+1}$.
\end{theorem}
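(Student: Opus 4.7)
The plan is to expand the truncated correlation function $F$ as a sum of connected Feynman-like diagrams on the $n+1$ insertion points $0,\mathbf{x}_1,\dots,\mathbf{x}_n$, whose edges carry the analytically continued two-point function \eqref{eq:2pt-condensate}, and then to extract exponential spatial decay from the product of edge factors by exploiting the mass gap of the linearized theory. The crucial analytic input is that because we work with the modified free Lagrangian \eqref{eq:L2prime} containing the virtual mass $m_v^2>0$, both dispersion relations $\omega_\pm$ in \eqref{eq:omegapm} are bounded below by a strictly positive constant, so the two-point function has genuine exponential spatial decay; without this input no such cluster estimate could hold.

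The first step is a Paley--Wiener bound: for $0<u<\beta$, starting from the momentum representation \eqref{eq:2pt-condensate}, performing the $p_0$ integration using the delta functions and analytically continuing $t\mapsto -iu$, one shifts the spatial contour $\mathbf{p}\to\mathbf{p}+i\mathbf{k}$ for $|\mathbf{k}|<m$; the integrand stays integrable thanks to the mass gap in $\omega_\pm$ and the KMS factor $(1-e^{-\beta p_0})^{-1}$, yielding
\[
|\omega_{\beta,\psi}(\psi_i(0)\tau_{iu,\mathbf{x}}\psi_j)|\,\le\, C'(u,\beta-u)\,e^{-m|\mathbf{x}|},
\]
with $C'(u,\beta-u)$ bounded whenever $u$ and $\beta-u$ are bounded away from zero. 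The hypothesis $\beta-u_n\ge\beta/(n+1)$, combined with the ordering $0<u_1<\dots<u_n<\beta$, is precisely what ensures that each imaginary-time difference appearing along a tree edge stays in a compact subinterval of $(0,\beta)$, so the constants $C'$ remain under uniform control.

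Second, since $\omega_{\beta,\psi}$ is quasifree and the $A_j$ are finite sums of Wick polynomials smeared against test functions supported in $B_R$, the truncated function has the standard connected-graph expansion
\[
F(u_1,\mathbf{x}_1;\dots;u_n,\mathbf{x}_n)=\sum_{G\text{ connected}}\prod_{\{i,j\}\in E(G)}w_{ij}(u_i-u_j,\mathbf{x}_i-\mathbf{x}_j),
\]
where $G$ ranges over connected graphs on $\{0,1,\dots,n\}$ and each $w_{ij}$ is a finite linear combination of products of two-point functions dictated by the Wick contractions between $A_i$ and $A_j$. For each such $G$, pick a spanning tree $T\subseteq G$. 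The path in $T$ from $0$ to any vertex $i$ has Euclidean length $\ge|\mathbf{x}_i|$ by the triangle inequality, hence the total tree edge length satisfies $\sum_{\{i,j\}\in T}|\mathbf{x}_i-\mathbf{x}_j|\ge\max_i|\mathbf{x}_i|\ge r/\sqrt{n}$, using $r^2=\sum_i|\mathbf{x}_i|^2\le n\max_i|\mathbf{x}_i|^2$. Applying the decay estimate of Step~1 to the $|E(T)|\le n$ tree edges and controlling the remaining $|E(G)\setminus E(T)|$ edges by the uniform $L^\infty$ bound on $w_{ij}$ yields a bound of the form $C''\,e^{-mr/\sqrt{n}}$ for the contribution of $G$, with the hypothesis $r>4cR$ absorbing the $R$-sized corrections coming from the smearing supports. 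Summing the at most $2^{n(n+1)/2}$ connected graphs on $n+1$ vertices absorbs into the constant $C$, giving the claimed estimate.

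The main obstacle is to ensure that the decay exponent stays equal to $m/\sqrt{n}$, i.e.\ that the deterioration with $n$ is confined entirely to the geometric factor $1/\sqrt{n}$ and not multiplied by an additional $n$-dependence coming from the analytic continuation. This is where the hypothesis $\beta-u_n\ge\beta/(n+1)$ is delicate: one must check that the prefactor $\prod_{\{i,j\}\in T}C'(u_i-u_j,\beta-(u_i-u_j))$ only contributes polynomially in $n$ (so it can be absorbed in $C$) rather than forcing one to trade part of the rate $m$. A secondary, more routine issue is the smoothness of $w_{ij}$ at coincident spatial points, which is handled by the smearing of the $A_i$ against test functions supported in $B_R$.
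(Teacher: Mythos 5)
Your proposal follows the same overall strategy as the paper's proof: a connected-graph expansion of the truncated function, exponential spatial decay of the smeared edge factors obtained by a Paley--Wiener / contour-shift argument exploiting the positive mass gap of the modified free Lagrangian, and the elementary geometric inequality $\sum_{\{i,j\}\in T}|\mathbf{x}_i-\mathbf{x}_j|\ge\max_i|\mathbf{x}_i|\ge r/\sqrt{n}$ to produce the $e^{-mr/\sqrt n}$ rate. However there are two places where your sketch glosses over precisely the points that are delicate, and your description of the role of the hypothesis $\beta-u_n\ge\beta/(n+1)$ is actually misstated.

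First, the claim that this hypothesis ``ensures that each imaginary-time difference appearing along a tree edge stays in a compact subinterval of $(0,\beta)$'' is wrong: the differences $u_i-u_j$ can approach $0$, so they lie in $[0,\tfrac{n}{n+1}\beta]$, a compact subinterval of $[0,\beta)$ but not of $(0,\beta)$. What the hypothesis actually does is keep the \emph{negative-frequency} contributions under control. After the $p_0$ integration, each edge factor is a sum of a positive-frequency part with behaviour $e^{-(u_i-u_j)\omega}$ and a negative-frequency part with behaviour $e^{-(\beta-(u_i-u_j))\omega}$; only the latter needs $u_i-u_j$ to be bounded away from $\beta$, and $\beta-u_n\ge\beta/(n+1)$ gives exactly this (with an $n$-dependent rate). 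The paper's proof makes this frequency split explicitly (the $\lambda_{\pm\pm}$ decomposition) and then shows, via the auxiliary Proposition on rapid decrease of $\hat\Psi$ in causal directions, that the resulting momentum-space integrand is rapidly decreasing before the contour deformation is even attempted. Your sketch does not separate frequencies, and as a result it is not clear that your ``$C'(u,\beta-u)$ bounded'' can be maintained uniformly: the $u\to 0$ endpoint is allowed, and the smearing by $A_i$ is the only thing that makes the coinciding-time limit finite, which requires keeping the test-function structure in hand throughout.

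Second, the contour shift $\mathbf{p}\to\mathbf{p}+i\mathbf{k}$, $|\mathbf{k}|<m$, is not a routine Paley--Wiener step here: the dispersion relations $\omega_\pm$ in \eqref{eq:omegapm} involve a nested square root in $\mathbf p$, so after complexifying one momentum component one encounters a genuinely nontrivial pattern of branch cuts (the curves $\gamma_1,\gamma_2,\gamma_3$ in the paper's Proposition \ref{pr:decay}, see Figure~1), and the factors $1/(\omega_+^2-\omega_-^2)$, $1/\omega_\sigma$ must be bounded along the deformed contour. This is where the main analytic work of the paper's proof lies; it is not ``routine'', and it is what fixes the decay rate as $M_-=\sqrt{M^2-\delta M^2}$ rather than a generic $m$. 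Your proposal names the right idea but does not engage with it. Apart from these two points, your spanning-tree bound (tree edges exponential, non-tree edges $L^\infty$) is a harmless variation of the paper's estimate in which all edge factors are bounded exponentially; both give the same $1/\sqrt n$ geometric degradation.
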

\begin{proof}
Thanks to the decay property for large spatial separations of the locally smeared two-point functions given in Proposition \ref{pr:decay},  
the proof of this theorem can be done in a similar way as the proof of Theorem 3 of \cite{FredenhagenLindner}. We recall here the main steps of that proof and we adapt them to the case studied here.

The truncated $n-$point functions can be written as a sum over all possible connected graphs joining $n$ points. 
We shall denote the set of connected graphs, without tadpoles, with $n+1$ vertices $V=\{0,\dots, n\}$ as $\mathcal{G}_{n+1}^c$. Furthermore, for any $G\in\mathcal{G}_{n+1}$, $E(G)$ denotes the set of edges of $G$. For any $l\in E(G)$, $s(l)$ and $r(l)$ denote the source and the range of $l$. A graph $G$ is considered to be 
in $\mathcal{G}_{n+1}$ only if, for every $l$, $s(l)<r(l)$. 
Finally $l_{ij}(G)$ is the number of lines connecting $i,j\in V$.
With these definitions
\[
F(u_1,\mathbf{x}_1;\dots ;u_n,\mathbf{x}_n)\doteq \sum_{G\in\mathcal{G}^c_{n+1}} \frac{1}{{\text{sym}(G)}}  F_G(u_1,\mathbf{x}_1;\dots ;u_n,\mathbf{x}_n)
\]
where $\text{sym}(G)=\prod_{i<j} l_{ij}(G)!$ is a numerical factor and  
\[
F_G(u_1,\mathbf{x}_1;\dots ;u_n,\mathbf{x}_n)  \doteq \left( \prod_{0\leq i<j\leq n} \Gamma^{ij} \right) \left.(A_0 \otimes \tau_{iu_1,\mathbf{x}_1}(A_1)\otimes\dots \tau_{iu_n,\mathbf{x}_1}(A_n))\right|_{(\psi^0,\dots,\psi^n) = 0}.
\]
Furthermore, 
\[
\Gamma^{ij} = \int  d^4x d^4y \;\mathcal{K}(x-y) \frac{\delta}{\delta \psi^i(x)}   \otimes \frac{\delta}{\delta\psi^j(y)}
\] 
with the integral kernel $\mathcal{K}(x-y) = \omega_{\beta,\psi}(\psi(x)\psi(y))$, given in terms of the thermal two-point function of the background theory \eqref{eq:2pt-condensate}. Furthermore,
$\psi^j=\psi^j_1+i\psi^j_2$ is the field configuration in the $j-$th factor of the tensor product and the functional derivative $\frac{\delta}{\delta\psi^j}$  acts on the $j-$th factor of the tensor product.
We have that 
\[
F_G(U,\mathbf{X})  \doteq  \int dP
\left(\prod_{l\in E(G)}  e^{p_l^0(u_{s(l)}-u_{r(l)})}
e^{i \mathbf{p}_l (\mathbf{x}_{s(l)}-\mathbf{x}_{s(l)})}
\hat{\mathcal{K}}(p_l) \right)\hat\Psi(-P,P)
\]
where $U=(u_0,\dots, u_n)$, $\mathbf{X}=(\mathbf{x}_0, \dots ,\mathbf{x}_n)$ with $u_0=0$ and $\mathbf{x}_0=0$, while $P=(p_1,\dots ,p_{|E(G)|})$  and
\[
\Psi(Z,Y) = \left.\left(\prod_{l\in E(G)} \frac{\delta}{\delta \psi^{s(l)}(z_l)}\otimes \frac{\delta}{\delta \psi^{r(l)}(y_l)} \right)
(A_0 \otimes A_1\otimes\dots A_n)\right|_{(\psi^0,\dots,\psi^n) = 0}.
\]
We observe that
\[
\hat{\mathcal{K}}(p) =  
\left(\lambda_+(p) + \lambda_-(p)\right) (-\widehat{\overline{D}})
\]
where $\lambda_+$ and $\lambda_-$ are the positive and negative frequency part
\begin{align*}
\lambda_+(p) &= \frac{1}{\omega_+^2-\omega_-^2} \left( \frac{\delta(p_0-\omega_+)}{2\omega_+}- \frac{\delta(p_0-\omega_-)}{2\omega_-}\right) \frac{1}{1-e^{-\beta p_0}}\\
\lambda_-(p) &= -\frac{1}{\omega_+^2-\omega_-^2} \left( \frac{\delta(p_0+\omega_+)}{2\omega_+}- \frac{\delta(p_0+\omega_-)}{2\omega_-}\right) \frac{1}{1-e^{-\beta p_0}}.
\end{align*}
Hence, separating the positive and negative contributions in $F_G$ we get
\begin{align*}
F_G(U,\mathbf{X})  =  \sum_{P_2(E(G))}\int d\mathbf{P}
\left(\prod_{l_+\in E_+(G)}
e^{p_{l_+}^0(u_{s(l_+)}-u_{r(l_+)})}
e^{i \mathbf{p}_{l_+} (\mathbf{x}_{s(l_+)}-\mathbf{x}_{r(l_+)})}  \lambda_+(p_{l_+})(-\hat{\overline{D}}(p_{l_+}))\right)\\
\cdot\left(\prod_{l_-\in E_-(G)}
e^{p_{l_-}^0(u_{s(l_-)}-u_{r(l_-)})}
e^{i \mathbf{p}_{l_-} (\mathbf{x}_{s(l_-)}-\mathbf{x}_{s(l_-)})}  \lambda_-(p_{l_-})(-\hat{\overline{D}}(p_{l_-}))\right)
\hat\Psi(-P,P)
\end{align*}
where the sum is taken over all possible partitions of $E(G)$ in up to two sets $\{E_+(G),E_-(G)\}\in P_2(E(G))$.
We proceed now splitting again these contributions over the two possible frequencies $\omega_{\pm}$. Hence, denoting by
\begin{align*}
\lambda_{++}(p) &\doteq \frac{1}{\omega_+^2-\omega_-^2} \left( \frac{\delta(p_0-\omega_+)}{2\omega_+}\right) \frac{1}{1-e^{-\beta p_0}}\\
\lambda_{+-}(p) &\doteq \frac{-1}{\omega_+^2-\omega_-^2} \left( \frac{\delta(p_0-\omega_-)}{2\omega_-}\right) \frac{1}{1-e^{-\beta p_0}}\\
\lambda_{-+}(p) &\doteq -\frac{1}{\omega_+^2-\omega_-^2} \left( \frac{\delta(p_0+\omega_+)}{2\omega_+}\right) \frac{1}{1-e^{-\beta p_0}}\\
\lambda_{--}(p) &\doteq \frac{1}{\omega_+^2-\omega_-^2} \left(\frac{\delta(p_0+\omega_-)}{2\omega_-}\right) \frac{1}{1-e^{-\beta p_0}}
\end{align*}
we have
\begin{align*}
F_G(U,\mathbf{X})  =  \sum_{P_2(E(G))}
\sum_{P_2(E_+(G))}\sum_{P_2(E_-(G))}\int d{P}
\left(\mathcal{Q}_{++} \cdot \mathcal{Q}_{+-} \cdot\mathcal{Q}_{-+}\cdot\mathcal{Q}_{--}\right)
\hat\Psi(-P,P)
\end{align*}
where
\begin{align*}
\mathcal{Q}_{\sigma\sigma'}
&\doteq
\left(\prod_{l\in E_{\sigma\sigma'}(G)}
e^{p_{l}^0(u_{s(l)}-u_{r(l)})}
e^{i \mathbf{p}_{l} (\mathbf{x}_{s(l)}-\mathbf{x}_{r(l)})}  \lambda_{\sigma \sigma'}(p_{l})(-\hat{\overline{D}}(p_{l}))\right)
\qquad \sigma,\sigma'\in \{+,-\}.
\end{align*}
The function $\hat\Psi$ is an entirely analytic function which grows at most polynomially in every direction. We might thus 
integrate over all possible $p_0$ to get
\begin{align*}
F_G(U,\mathbf{X})  \doteq  \sum_{P_2(E(G))}
\sum_{P_2(E_+(G))}\sum_{P_2(E_-(G))}\int d\mathbf{P}
\left(\tilde{\mathcal{Q}}_{++} \cdot \tilde{\mathcal{Q}}_{+-} \cdot\tilde{\mathcal{Q}}_{-+}\cdot\tilde{\mathcal{Q}}_{--}\right)
\Phi(\mathbf{P})
\end{align*}
where now 
\[
\Phi(\mathbf{P}) \doteq\left.\hat\Psi(-P,P)\right|_{p_0^{l_{\sigma\sigma'}} = \sigma\omega_{\sigma'}(\mathbf{p}_l)} \]
 and
\begin{align*}
\tilde{\mathcal{Q}}_{\sigma\sigma'}&=
(\sigma1) (\sigma' 1)\left(\prod_{l\in E_{\sigma\sigma'}(G)}
\frac{e^{ \sigma\omega_{\sigma'}(\mathbf{p}_{l})(u_{s(l)}-u_{r(l)})}
e^{i \mathbf{p}_{l} (\mathbf{x}_{s(l)}-\mathbf{x}_{r(l)})}  
}{(\omega_+^2-\omega_-^2)2\omega_{\sigma'}}
\frac{(-\hat{\overline{D}}(\sigma\omega_{\sigma'},\mathbf{p}_{l}))}{1-e^{-\sigma\beta \omega_{\sigma'}}}
\right)\\
&= 
(\sigma' 1)
\left(\prod_{l\in E_{\sigma\sigma'}(G)}
\frac{e^{-\left(\frac{(1-\sigma)}{2}\beta + \sigma(u_{r(l)}-u_{s(l)})\right)\omega_{\sigma'}}
e^{i \mathbf{p}_{l} (\mathbf{x}_{s(l)}-\mathbf{x}_{r(l)})}  
}{(\omega_+^2-\omega_-^2)2\omega_{\sigma'}}
\frac{(-\hat{\overline{D}}(\sigma\omega_{\sigma'},\mathbf{p}_{l}))}{1-e^{-\beta \omega_{\sigma'}}}
\right),
\qquad \sigma,\sigma'\in \{+,-\}.
\end{align*}
Since, by hypothesis,
\[
u_{i+1}>u_i,\qquad \beta-u_{n} \geq \frac{\beta}{n+1}
\]
and $r(l)>s(l)$, we have that
\[
e^{-\left(\beta - (u_{r(l)}-u_{s(l)})\right)\omega_{\sigma'}} \leq 
e^{-\frac{n}{n+1}\beta \omega_{\sigma'}}.
\]
Hence
\[
\tilde{\Phi}(\mathbf{P})\doteq\tilde{\mathcal{Q}}_{-+}\tilde{\mathcal{Q}}_{--}\Phi(\mathbf{P})
\]
is rapidly decreasing, in every direction, because, $F_G$ is a microcausal functional and $\Phi(\mathbf{P})$ is the restriction on a particular subdomain of $\hat{\Psi}(-P,P)$ which is an entire analytic function which grows at most polynomially. Hence, the negative frequencies are exponentially suppressed and if directions containing only positive frequencies are considered, they are also rapidly decreasing by Proposition \ref{eq:decay-p}.
The integral over $\mathbf{P}$ can now be taken
and we
may apply Proposition \ref{pr:decay} to estimate the decay of the result of that integral. We obtain 
\[
|F_G(U,\mathbf{X})| \leq c' \prod_{l\in E(G)} e^{- M_- \sqrt{|\mathbf{x}_{r(l)}-\mathbf{x}_{s(l)}|^2}}\leq 
c' e^{- \frac{M_-}{\sqrt{n}} \sqrt{\sum_{i=1}^n |\mathbf{x}_{i}|^2}}
\]
where the constant $c'$ does not depend on $u_i$.
In the last inequality we used the fact that $G$ is a connected graph and thus every $x_{i}$ can be reached from the origin $(\mathbf{x}_0=0)$. Hence
\[
\sum_{l\in E(G)} \sqrt{{|\mathbf{x}_{r(l)}-\mathbf{x}_{s(l)}|^2}} \geq 
\text{max}_i{\sqrt{|\mathbf{x}_{i}|^2}} \geq \sqrt{\frac{1}{n}\sum_{i=1}^n |\mathbf{x}_{i}|^2}
\]
thus concluding the proof. 
\end{proof}

\begin{theorem} 
Let $A \in\mathcal{A}_I(\mathcal{O})$ where $\mathcal{O}\subset\Sigma_\epsilon$, 
the adiabatic limit 
\begin{gather*}
\omega^{\beta,V}(A)=
\lim_{h\to 1}\sum_{n}\int_{0\le u_1\le\dots u_n\le\beta} d u_1\dots d u_n\int_{\RR^{3n}}d^3\mathbf{x}_1\dots d^3\mathbf{x}_n 
h(\mathbf{x}_1)\dots h(\mathbf{x}_n)\nonumber\\
 \omega_T^\beta\!\left(A;\tau_{iu_{1},\mathbf{x}_{1}}(K);\dots;\tau_{iu_n;\mathbf{x}_{n}}(K)\right),
\end{gather*}
where $K\doteq\lim_{h\to 1} \mathcal H_I(0)$,
exists in the sense of perturbation theory and defines an equilibrium state for the interacting theory.
\end{theorem}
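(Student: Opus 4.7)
The plan is to establish the theorem order by order in the formal expansion parameter $\sqrt{\lambda}$. At each fixed perturbative order only finitely many values of $n$ contribute in \eqref{eq:KMS-h-finite}, since the interaction Hamiltonian density $K$ starts at order $\sqrt\lambda$, so there is no issue of summing over $n$; I only need to show that every fixed-$n$ integrand admits a limit as $h\to 1$. The overall strategy is: reduce the spatial integrand to a function of coordinate differences via translation invariance of $K$, bound it uniformly in $h$ by an $L^1(\mathbb{R}^{3n})$ majorant using the cluster estimate of Theorem \ref{th:cluster}, and conclude by dominated convergence.

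The main technical obstacle is that Theorem \ref{th:cluster} requires $\beta-u_n\ge\beta/(n+1)$, a condition that is not met throughout the imaginary-time simplex $0\le u_1\le\cdots\le u_n\le\beta$. Following the strategy of \cite{FredenhagenLindner}, I would partition the simplex into $n+1$ subregions according to which of the consecutive gaps $u_1,\, u_2-u_1,\ldots,\beta-u_n$ is maximal; on every such subregion the maximal gap is at least $\beta/(n+1)$. On the subregion where $\beta-u_n$ is already maximal, Theorem \ref{th:cluster} applies verbatim. On the others, the KMS condition satisfied by $\omega^\beta$ with respect to $\tau_t$ allows a cyclic reordering of the insertions $\tau_{iu_j,\mathbf{x}_j}(K)$ that moves the maximal gap to the right end, bringing us back into the hypothesis of Theorem \ref{th:cluster}; the spatial integrand is unaffected because $K$ is spatially translation invariant.

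Combining the resulting bound $Ce^{-(M_-/\sqrt n)r}$, with $r=\sqrt{\sum_i|\mathbf{x}_i|^2}$, with $|h|\le 1$ yields an $h$-independent integrable majorant on $\mathbb{R}^{3n}$. Dominated convergence then gives existence of the adiabatic limit and defines a linear functional $\omega^{\beta,V}$ on $\mathcal{A}^I(\mathcal{O})$. The state property and the $\beta$-KMS condition with respect to the interacting time evolution $\tau_t^V$ follow from the corresponding statements for $\omega^{\beta,V}_h$ in \eqref{eq:FLstate} by passing to the limit: analyticity in the strip $\mathrm{Im}\,t\in[0,\beta]$ is preserved because the majorant controls the whole strip uniformly in $h$, and the boundary identity $\omega^{\beta,V}_h(A\,\tau_{i\beta}^V B)=\omega^{\beta,V}_h(BA)$ is pointwise stable. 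Independence of the time cutoff $\chi$ used to define $K$ was already established in \cite{FredenhagenLindner} at finite $h$ and survives the limit.

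The hard part is the bookkeeping of the simplex decomposition together with the accompanying KMS-cyclic reordering, in particular verifying that after cyclic permutation the resulting truncated correlator still satisfies the hypotheses of Theorem \ref{th:cluster} uniformly across each subregion. Once this is in place the remainder is a routine application of cluster decay and dominated convergence, in close parallel with Theorem 4 of \cite{FredenhagenLindner}, with the necessary spatial decay now supplied by Theorem \ref{th:cluster} in the presence of the condensate.
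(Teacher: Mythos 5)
Your proposal follows essentially the same route as the paper: by pigeonhole one of the $n+1$ consecutive imaginary-time gaps is at least $\beta/(n+1)$, and the KMS condition is used to cyclically permute the insertions so that this gap becomes the last one, after which the cluster estimate of Theorem \ref{th:cluster} supplies the integrable majorant needed for the $h\to1$ limit. Your explicit simplex partition and dominated-convergence framing are just a slightly more formalized version of the paper's argument, which singles out the index $m$ with $u_m-u_{m-1}\geq\beta/(n+1)$ and reorders accordingly.
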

\begin{proof} Since $\mathcal{O}$ is of compact support, it exists and $R>0$ such that the open ball $B_R$ centered in the origin of Minkowski spacetime contains $\mathcal{O}$, namely $\mathcal{O}\subset B_R$.
Furthermore, thanks to the temporal cutoff $\chi$, and in view of the causal properties of the Bogoliubov map, $K = \lim_{h\to1} \mathcal {H}_I$ is supported in $B_R$ for a sufficiently large $R$.
 
Consider the $n-$th order contribution in the sum defining $\omega^{\beta,V}_h$ given in \eqref{eq:KMS-h-finite}
\begin{equation}
\label{eq:Omega-n}
\Omega_{n,h}(A)\doteq\int_{0\le u_1\le\dots u_n\le\beta} d u_1\dots d u_n\int_{\RR^{3n}}d^3\mathbf{x}_1\dots d^3\mathbf{x}_n 
h(\mathbf{x}_1)\dots h(\mathbf{x}_n) F(u_1,\mathbf{x}_1;\dots ;u_n,\mathbf{x}_n)
\end{equation}
where
\begin{equation*}
F(u_1,\mathbf{x}_1;\dots ;u_n,\mathbf{x}_n)\doteq \omega^\beta_T\!\left(A;\tau_{iu_{1},\mathbf{x}_{1}}(K);\dots;\tau_{iu_n;\mathbf{x}_{n}}(K))\right), \qquad A\in\mathcal{A}_I(\mathcal{O}).
\end{equation*}
To apply the results of Theorem \ref{th:cluster} we observe that, if $R$ is sufficiently large  $\mathcal H_I(0) \in \mathcal{A}_I(B_R)$, furthermore,
the form of the integration domain of the $u$ variables as given in \eqref{eq:Omega-n} is such that 
\begin{equation}\label{eq:int-domain}
0 \leq  u_1  \leq \dots \leq u_n \leq \beta. 
\end{equation}
Using the KMS condition we might restrict attention to the case where, $\beta- {u_{n}} \geq \frac{\beta}{n+1}$.
In fact, if this is not the case, there must exist an $m$ for which $u_{m}-u_{m-1} \geq \frac{\beta}{n+1}$.
Actually, for $A_i\in\mathcal{A}_I(\mathcal{O})$ by the KMS condition we have that
\begin{gather*}
\omega_T^\beta(\tau_{iu_0}(A_0) ; \tau_{iu_1,{\mathbf{x}}_1}(A_1); \dots ; \tau_{iu_n,{\mathbf{x}}_n}(A_n))\\
=\omega_T^{\beta}(\tau_{iu_{m},\mathbf{x}_{m} } (A_{m}) ; \dots ; \tau_{iu_n,{\mathbf{x}}_n}(A_n)
\otimes \tau_{i\beta+iu_0}(A_0) ; \dots ; \tau_{i\beta+iu_{m-1},{\mathbf{x}}_{m-1}}(A_{m-1}) )
\end{gather*}
hence we might now consider
\begin{gather*}
F'(v_1,\mathbf{y}_1;\dots ;v_n,\mathbf{y}_n)\doteq \\\omega^\beta_T(K ; \tau_{iv_1,\mathbf{y}_1}(K);\dots ; \tau_{iv_{n-m},\mathbf{y}_{n-m}}(K);\tau_{iv_{n-m+1},\mathbf{y}_{n-m+1}}(A_0);\tau_{iv_{n-m+2},\mathbf{y}_{n-m+2}}(K);\dots \tau_{iv_n,\mathbf{y}_v}(K))
\end{gather*}
in place of $F$.
In fact the previous equality obtained with the KMS condition together with translation invariance of the state implies that 
\[
F(u_1,\mathbf{x}_1;\dots ;u_n,\mathbf{x}_n) = F'(v_1,\mathbf{y}_1;\dots ;v_n,\mathbf{y}_n)
\]
if 
\[
(\mathbf{y}_1,\dots ,\mathbf{y}_n) = (\mathbf{x}_{m+1}-\mathbf{x}_m,\dots, \mathbf{x}_{n}-\mathbf{x}_m, -\mathbf{x}_m,\mathbf{x}_1-\mathbf{x}_m,\dots ,\mathbf{x}_{m-1}-\mathbf{x}_m)
\]
and
\[
(v_1,\dots ,v_n) = (u_{m+1}-u_m,\dots, u_{n}-u_m,\beta -u_m,\beta+u_1-u_m,\dots ,).
\]
The arguments of the function $F'$ have the desired property, actually $\beta - v_n= u_m - u_{m-1} \geq \beta /(n+1)$. We might thus use $F'$ in place of $F$,
because the integration over the $u$ variables is over a compact set and becasue the points where $u_i=u_j$ for some $i\neq j$ forms a zero measure set.
Hence, the Theorem \ref{th:cluster} implies that the integral over 
$\mathbf{x}_i$ can be taken for all $i$
to conclude the proof.
\end{proof}

\section{Spontaneous symmetry breaking and the Goldstone Theorem}

The model we are considering possesses an internal $U(1)$ symmetry. Actually, the Lagrangian is invariant under transformations  
\[
\varphi = U(\theta)\varphi \doteq e^{i\theta} \varphi 
\]
where $\theta \in [0,2\pi]$. 
However, in the state which describes the condensate, this symmetry is spontaneously broken. 
From the Goldstone theorem we expect that a massless (gapless) mode is present in the model. This observation is in contrast with the analysis discussed in the previous section. Actually there, all the fields in the linearized theory were assumed to be massive. Notice that, once the background $\phi$ is fixed in the decomposition $\varphi=e^{-i\mu x^0}(\phi+\psi)$, the Lagrangian for the linearized theory, is not invariant under $U(1)$ transformations 
\[
\psi \to e^{i\theta} \psi + (e^{i\theta}-1)\phi,
\] 
hence, the fact that both linearized fields $\psi_i$ are massive, 
is not in contrast with the Goldstone theorem.
Furthermore, if the Goldstone theorem holds for the full theory, this would imply that at least one gapless mode should exists if the full perturbation series is considered.

In the case of thermal theories, the proof of Goldstone theorem is not completely straightforward as for theories at zero temperature because the original proof makes use of Lorentz invariance \cite{GoldstoneSalamWeinberg}
(see also the work of Jona-Lasinio using effective action methods in \cite{JonaLasinio}). The equilibrium states are however not Lorentz invariant because of the presence of a preferred time direction in the KMS condition.
Furthermore, even if a gapless mode exists
 the particle content of the gapless mode is not immediately evident,
  as discussed by Bros and Buchholz in \cite{BrosBuchholz}.
 
The presence of Goldstone modes at finite temperature, has been discussed in \cite{Kowalski} using effective action methods. 
Based on the analysis of Swieca \cite{Swieca}, a proof of the Goldstone theorem without using Lorentz invariance has been given by Morchio and Strocchi in \cite{MorchioStrocchi}, see also the book \cite{Strocchi} for the application of similar ideas for the analysis of the case of finite temperature. 
Furthermore, the analysis of the slow decay of large spatially separated correlation functions in the presence of spontaneous symmetry breaking is discussed in \cite{JakelWreszinski}.
However, when a non trivial background is present as for the case of Bose-Einstein condensation, 
we don't expect that the presence of a gapless mode is directly related to the clustering properties of the correlation functions for large spatial separation. 
As an example, consider the two-point function of the state
$\omega^\pm_{\beta,c}$ discussed in section \ref{sec:free-theory} in the limit of vanishing temperature namely $\beta\to\infty$. The obtained state is the composition of the massive vacuum $\omega_0$ with the map $\gamma^\pm_c$ given in \eqref{eq:aut-gamma}.
Even if one of the modes in the two-point function of  $\omega^\pm_{\beta,c}$ is gapless, the clustering properties of 
$\omega^\pm_{\beta,c}$ 
 are equivalent to the one of the vacuum because $\gamma^\pm_c$ does not change the localization of the observables.  

The mentioned proofs cannot be directly applied for perturbatively constructed theories,
 for this reason in the next section we shall give a proof of the validity of Goldstone theorem 
  which can hold in our setting. 
We shall actually follow Swieca's proof without making use of Lorentz invariance. 

\subsection{Proof of the Goldstone theorem}

Here we would like to give a proof of the validity of Goldstone theorem at finite temperature in the presence of a condensate. For this purpose, we observe that the $U(1)$ invariance of the Lagrangian density $\mathcal{L}$ for $\varphi=\varphi_1+i\varphi_2$ is such that
\begin{equation}\label{eq:U1-symm}
(\varphi_1,\varphi_2)\to R(\theta)(\varphi_1,\varphi_2)
\end{equation}
where $R$ is a rotation of an angle $\theta$. Its infinitesimal version is
\[
\varphi_m \to\varphi_m+\epsilon t_{mn} \varphi_n  
\]
where $t$ is the antisymmetric metric
\[
t= \begin{pmatrix}0&-1\\1&0\end{pmatrix}.
\]
Associated to the symmetry which is spontaneoulsy broken in the state $\omega$ there is a current $J$ which is conserved.
This current is defined as
\begin{equation}\label{eq:J-definition}
J^\mu \doteq \frac{\delta \mathcal{L}}{\delta \partial_\mu\varphi_m }t_{mn}\varphi_n  =  i (   \overline{\varphi}\partial^\mu {\varphi}
-\partial^\mu \overline{\varphi}\varphi  ).
\end{equation}
By Noether theorem the action possesses the desired $U(1)-$symmetry if and only if the current $J$ is conserved, namely if $\nabla_\mu J^\mu =0$.
Following 
\cite{KRS}
we can now introduce a regularized charge operator associated to the current density $J^0$ introduced above.
Let $f\in C^{\infty}_0(\mathbb{R})$ be a time cutoff with $\supp{f}\in(-\epsilon,\epsilon)$, $f\geq 0$ and $\|f\|_1=1$. Furthermore, $g\in C^\infty_{0}(\mathbb{R}^3)$ is a space cutoff, $g(\mathbf{x})=1$ for $\mathbf{x}<1$.
The {\it regularized charge operator} associated to $J$ can be seen as the large $R$ limit of  
\begin{equation}\label{eq:QR}
Q_R \doteq \int  d^4x  f(x_0) g\left(\frac{\mathbf{x}}{R}\right)  J^0(x).
\end{equation}
The charge operator can be used to implement the infinitesimal $U(1)$ transformation of the field\footnote{
In perturbation theory equation \eqref{eq:commutator-current} can be proved starting from the master ward identity, 
\[
\partial^\mu_y T(J_\mu(y)\varphi_n(x)) = \delta(y-x) t_{nm}\varphi_m(x)
\]
using the current conservation and the causal properties of the commutator.
For further details in the case of QED we refer to \cite{DuetschFredenhagen}.
}
\begin{equation}\label{eq:commutator-current}
\lim_{R\to\infty}\left[Q_R,\varphi_m(t,\mathbf{y})\right] = t_{mn} \varphi_n(t,\mathbf{y}).
\end{equation}
Hence, in a state where the symmetry is spontaneously broken, namely when $\omega(\varphi_n(0))=\phi_n\neq 0$ for some $n$,
\begin{equation}\label{eq:constraint-goldstone}
\lim_{R\to\infty} \omega([ Q_R,\varphi_n(0)])  
=t_{nm} \omega({\varphi}_{m}(0)) 
=t_{nm}{{\phi}}_{m}.
\end{equation}
Notice that, in view of the support properties of $f$, $g$, 
and of the conservation of the current $J$, 
for $R$ sufficiently large we have that $\omega([ Q_R,\varphi_n(0)])$ is constant. Hence, the limit $R\to\infty$ can be safely taken and the final result does not depend on the particular form of $g$.

We are now ready to state the Goldstone theorem in the following form
\begin{theorem}\label{th:goldstone} (Goldstone)
Consider a complex scalar quantum field $\varphi$ whose Lagrangian density $\mathcal{L}$ possesses an $U(1)-$symmetry generated by the current $J$ given in \eqref{eq:J-definition}. 
Consider the distribution
\begin{align*}
G_n(x)\doteq\omega([J^0(x),\varphi_n(0)]).
\end{align*}
If the symmetry generated by $J$ is spontaneously broken in the state $\omega$, namely $\omega(\varphi_n(0))=\phi_n\neq 0$ for some $n$,
then in the spectrum of $G_n$ there is a zero frequency (gapless) contribution 
at vanishing momentum, 
namely 
$\hat{G}_n$, the Fourier transform of $G_n$, is such that
\begin{equation}\label{eq:goldstone-thesis}
\lim_{\mathbf{p}\to0} \hat{G}_n(p_0,\mathbf{p}) = \delta(p_0) t_{nm}\phi_m
\end{equation}
in the sense of distribution.
\end{theorem}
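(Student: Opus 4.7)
The plan is to translate the commutator identity \eqref{eq:constraint-goldstone} into a distributional statement about $\hat G_n$ near $\mathbf{p}=0$, following Swieca's argument. Writing the regularised charge as $Q_R = \int d^4x\, F_R(x) J^0(x)$ with $F_R(x) = f(x^0) g(\mathbf{x}/R)$, one has
\[
\omega([Q_R,\varphi_n(0)]) = \int d^4x\, F_R(x) G_n(x).
\]
By causality of the algebra, $[J^0(x),\varphi_n(0)]$ vanishes for $x$ spacelike to the origin, so when $\supp f \subset (-\epsilon,\epsilon)$ the integrand is supported in the spatial ball $|\mathbf{x}|\le\epsilon$; for $R>\epsilon$ the cutoff satisfies $g(\mathbf{x}/R)=1$ on that support. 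Combined with \eqref{eq:commutator-current}, this gives $\int F_R G_n\, d^4x = t_{nm}\phi_m$ for every admissible $f$ and every sufficiently large $R$.

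Next I would pass to momentum space. The Fourier transform of $g(\,\cdot\,/R)$ at spatial momentum $\mathbf{p}$ is $R^3\hat g(R\mathbf{p})$, which under the assumption $g(0)=1$ converges weakly to $(2\pi)^3\delta^3(\mathbf{p})$ as $R\to\infty$. Hence, provided the restriction of $\hat G_n(p_0,\mathbf{p})$ to $\mathbf{p}=0$ exists as a tempered distribution in $p_0$ (which is what the limit in \eqref{eq:goldstone-thesis} is understood to mean), the $R\to\infty$ limit of $\int F_R G_n\, d^4x = t_{nm}\phi_m$, rewritten in momentum space, yields
\[
t_{nm}\phi_m = \int \frac{dp_0}{2\pi}\, \hat f(-p_0)\, \lim_{\mathbf{p}\to 0}\hat G_n(p_0,\mathbf{p}),
\]
with $\hat f(p_0)=\int f(x^0)e^{-ip_0 x^0}dx^0$ and $\hat f(0)=\|f\|_1=1$.

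The last step is to vary $f$. Since the identity holds for every $f\in C^\infty_c(-\epsilon,\epsilon)$ with $\|f\|_1=1$, any admissible deformation $f\mapsto f+\psi$ with $\int\psi=0$ leaves the left-hand side unchanged, so $\int\hat\psi(-p_0)\,\lim_{\mathbf{p}\to 0}\hat G_n(p_0,\mathbf{p})\,dp_0=0$ for every $\psi$ with $\hat\psi(0)=0$. This forces $\lim_{\mathbf{p}\to 0}\hat G_n(p_0,\mathbf{p})$ to be a distribution supported at $p_0=0$, and since its pairing with $\hat f$ depends on $f$ only through $\hat f(0)$ (a second deformation argument rules out derivatives of $\delta$), it must take the form $c\,\delta(p_0)$; the normalisation $\hat f(0)=1$ then fixes $c=t_{nm}\phi_m$, proving \eqref{eq:goldstone-thesis}.

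The principal obstacle is making precise the distributional restriction $\lim_{\mathbf{p}\to 0}\hat G_n$ and justifying the interchange of $R\to\infty$ with the momentum integration in our perturbative thermal framework. This is precisely where the thermal-mass resummation of Section \ref{se:kms-linear} is essential: the virtual mass $m_v$ moves a nonzero gap into the unperturbed theory, so that the relative $S$-matrices entering $J^0$ and $\varphi_n$ are infrared finite order by order and $G_n$ is well defined as a tempered distribution; the Goldstone mode then reappears only after resummation, manifesting itself precisely as the $\delta(p_0)$ singularity at $\mathbf{p}=0$ produced by the argument above.
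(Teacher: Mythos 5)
Your proof takes a genuinely different route from the paper's. The paper's proof of Theorem~\ref{th:goldstone} invokes current conservation $\partial_\mu J^\mu=0$ \emph{inside} the proof: it first analyzes the smeared spatial components $\omega([J^i(x),\varphi_n(0)])$, using support properties and rotational invariance to show that $\mathbf{p}^i\mathcal{F}_n(p_0,|\mathbf{p}|)$ stays bounded as $\mathbf{p}\to 0$, which (via the divergence) yields the second constraint $\lim_{R\to\infty}\int \hat f\,\hat g\,p_0\,\hat G_n(p_0,\mathbf{p}/R)\,d^4p=0$; this, combined with \eqref{eq:constraint-goldstone}, gives $p_0\hat G_n(\cdot,0)=0$ and hence $\hat G_n(\cdot,0)=c\,\delta(p_0)$ with $c=t_{nm}\phi_m$. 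You dispense with this second constraint entirely: from \eqref{eq:constraint-goldstone} alone you derive $\int\hat f(-p_0)\,\lim_{\mathbf{p}\to0}\hat G_n(p_0,\mathbf{p})\,dp_0 = t_{nm}\phi_m\,\hat f(0)$ for every admissible $f$, and then pin down the limiting distribution by varying $f$. This is more economical (conservation enters only once, implicitly through \eqref{eq:commutator-current}, rather than twice), and your identification of causality alone as the reason the $R$-dependence drops for $R>\epsilon$ is correct. The one place to be careful: for a \emph{fixed} $\epsilon$, the differences $\hat\psi$ with $\hat\psi(0)=0$ live in a Paley--Wiener space $PW_\epsilon$, and vanishing of a tempered distribution on this class determines it only modulo distributions whose inverse Fourier transform vanishes on $(-\epsilon,\epsilon)$. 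To conclude $T=c\,\delta(p_0)$ cleanly you should either let $\epsilon$ be arbitrary (so that $\bigcup_\epsilon PW_\epsilon = \mathcal{F}(C^\infty_c(\mathbb{R}))$ is dense in $\mathcal{S}$ and the variational argument is airtight), or fall back on the paper's additional constraint $p_0 T=0$, which yields $T=c\,\delta$ directly without a density step. Also, the ``second deformation argument'' you gesture at to exclude $\delta^{(k)}$, $k\geq1$, is in fact already contained in the first one: once $T(\hat\psi)=0$ for all $\hat\psi$ with $\hat\psi(0)=0$, choosing $\hat\psi$ with $\hat\psi^{(k)}(0)\neq0$ for any chosen $k\geq 1$ while $\hat\psi(0)=0$ kills each derivative coefficient, so no separate step is needed. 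Your closing remarks on the role of the thermal mass resummation correctly identify where the infrared finiteness of $G_n$ order by order is supplied, and are consistent with the paper's discussion.
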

\begin{proof}
The support properties of the distribution $F^i_n(x) \doteq \omega([J^i(x),\varphi_n(0)])$ and invariance under spatial rotations imply that there exists a distribution $\mathcal{F}_n$ such that
\begin{align*}
\int  d^4x  f(x_0) g\left(\frac{\mathbf{x}}{R}\right) \omega([J^i(x),\varphi_n(0)])
&=\int d^4p
 \hat{f}(p_0) \hat{g}(\mathbf{p}R)R^3  \mathbf{p}^i \mathcal{F}_n(p^0,|\mathbf{p}|)\\
&=\int d^4p
 \hat{f}(p_0) \hat{g}(\mathbf{p})  \frac{\mathbf{p}^i}{R} \mathcal{F}_n\left(p^0,\frac{|\mathbf{p}|}{R}\right) ,
 \qquad i\in \{1,2,3\}
\end{align*}
where $f$ and $g$ are chosen as in the definition of regularized charge $Q_R$ given in \eqref{eq:QR}.
Causality implies that for $R$ sufficiently large, the left hand side does not depend on $R$.
Hence $\int dp_0
 \hat{f}(p_0) {\mathbf{p}^i} \mathcal{F}_n(p^0,{|\mathbf{p}|})$ must be bounded near $\mathbf{p}=0$ and as a consequence of this fact it holds that 
\begin{align*}
\lim_{R\to\infty}\int  d^4x  f(x_0) g\left(\frac{\mathbf{x}}{R}\right) \sum_{i=1}^3\nabla_i\omega([J^i(x),\varphi_n(0)]) = 
\lim_{R\to\infty}
\int d^4p
 \hat{f}(p_0) \hat{g}(\mathbf{p})  \frac{|\mathbf{p}|^2}{R^2} \mathcal{F}_n\left(p^0,\frac{|\mathbf{p}|}{R}\right) =0.
\end{align*}
Hence, current conservation furnishes a condition for $G_n$, namely 
\begin{align*}
\lim_{R\to\infty}\int  d^4x  f(x_0) g\left(\frac{\mathbf{x}}{R}\right) \sum_{\nu=0}^3\nabla_\nu\omega([J^\nu(x),\varphi_n(0)]) 
&= 
\lim_{R\to\infty}\int  d^4x  f(x_0) g\left(\frac{\mathbf{x}}{R}\right) \nabla_0\omega([J^0(x),\varphi_n(0)]) \\
&= 
\lim_{R\to\infty}\int  d^4p \hat{f}(p_0) \hat{g}(\mathbf{p}) p_0 \hat{G}_n\left(p_0,\frac{\mathbf{p}}{R}\right)  d^4p =0.
\end{align*}
Another constraint on the form of $G_n$ is given by
\eqref{eq:constraint-goldstone}, namely
\[
\lim_{R\to\infty}\int  d^4p \hat{f}(p_0) \hat{g}(\mathbf{p})  \hat{G}_n\left(p_0,\frac{\mathbf{p}}{R}\right)  d^4p   
=
t_{nm} \phi_m \hat{f}(0).
\]
Both conditions imply that \eqref{eq:goldstone-thesis} holds in the sense of distributions.
\end{proof}

As discussed in \cite{BrosBuchholz}, no direct particle interpretation can be inferred from \eqref{eq:goldstone-thesis}.
Actually,  
the singularity in $\hat{G}_n$ can be proven to exists only at $\mathbf{p}=0$ and not on the whole null cone. 
For a particle interpretation of this fact we refer to the paper \cite{BrosBuchholz}.

\subsection{Analysis of the validity of Goldstone theorem in perturbation theory}

We have seen that Goldstone theorem in the form stated in Theorem \ref{th:goldstone} holds for a quantum scalar field theory if the corresponding Lagrangian density
$\mathcal{L}$ is invariant under the $U(1)$ transformations given in \eqref{eq:U1-symm}, and if this symmetry is spontanously broken in a state $\omega$. 
Hence, in a generic quantum field theory, we can apply Goldstone theorem if the current $J$ given in \eqref{eq:J-definition} is conserved and if $\omega(\varphi_n(0))=\phi_n\neq 0$ for some $n$. 

\bigskip

We now check that at linear order some of the desired hypotheses are not satisfied.
In particular, we see that in the linearized theory 
the internal $U(1)$ symmetry is explicitly broken.
Actually, the current $J$ defined as in \eqref{eq:J-definition}, whose time component has the explicit form \eqref{eq:J-linear-theory}, is not conserved. To check this fact notice that
\begin{equation}\label{eq:conservation-J1}
\begin{aligned}
\partial^\mu J_\mu &= 2(\psi_2\square \psi_1-\psi_1\square \psi_2)
-2 \phi \square \psi_2 -4 \mu (\phi \dot\psi_1+\psi_1\dot\psi_1 +\psi_2\dot\psi_2)\\
&= 2\psi_1\psi_2(M_1^2-M_2^2) - 2\phi M_2^2 \psi_2
\end{aligned}
\end{equation}
where we have used the equation of motion given at linear order \eqref{eq:linearized-equations}. 
If now $\mu^2>m^2$ we have that $\phi\neq 0$ because $\lambda \phi^2 = (\mu^2-m^2)$. Furthermore, $M_1^2=(m^2-\mu^2) + 3\lambda \phi^2$, and $M_2^2=(m^2-\mu^2) + \lambda \phi^2$ and thus even if $M_2=0$, we have that $M_1^2-M_2^2\neq 0$ and hence $\partial^\mu J_\mu\neq 0$.

\bigskip

We pass now to analyse the interacting case. We observe that, if the full interacting equation of motion is used in evaluating $\partial^\mu J_\mu$, namely taking into account $\mathcal{L}_3$ and $\mathcal{L}_4$, we have that equation \eqref{eq:conservation-J1} needs to be changed to 
\begin{align*}
\partial^\mu J_\mu= 2\psi_1\psi_2(M_1^2-M_2^2-2\lambda \phi^2) - 2\phi M_2^2 \psi_2.
\end{align*}
Now both $M_2=0$ and $M_1^2-M_2^2-2\lambda \phi^2=0$ and hence the symmetry is not explicitly broken in the full classical theory. Notice that this analysis does not depend on the splitting between free and interacting Lagrangian, hence the eventual thermal mass contributions do not alter this analysis. 

\bigskip
We now check that conservation of the current $J$ holds in the case of interacting quantum field theory treated with  perturbation methods.
The very same analysis holds up  to the quotient with respect to the free equation of motion.
Actually, the Schwinger-Dyson equation implies that 
\[
R_V\left(\frac{\delta S}{\delta \varphi} \right)=
\frac{\delta S_0}{\delta \varphi} \neq 0
\]
where $S$ is the action constructed with the full Lagrangian density $\mathcal{L}$ and $S_0$ is the action of the linearized theory, constructed with the lagrangian density $\mathcal{L}_2$.
If expectation values on a quantum state are considered we have
\[
\omega\left (R_V\left (\frac{\delta S}{\delta \varphi} \right)\right) = 0.
\]
We now analyze in the same spirit the conservation of the current density $J$. To this end, we observe that
\begin{equation}\label{eq:Cons1}
R_V(\partial^\mu J_\mu) = 
i R_V(\overline{\varphi}\square\varphi - \square\overline{\varphi}\varphi) = 
R_V\left(\overline\varphi\frac{\delta S}{\delta \overline\varphi}-
\varphi\frac{\delta S}{\delta \varphi} \right)
=
R_V\left(\overline\varphi \cdot_T \frac{\delta S}{\delta \overline\varphi}-
\varphi \cdot_T  \frac{\delta S}{\delta \varphi} \right)
\end{equation}
where the last equality holds because of the properties of the time-ordered product and $\overline{S}=S$. We stress that the divergences present in 
$\overline\varphi(x)\cdot_T \delta S/\delta \overline\varphi(x)$ and proportional to $\lim_{y\to x}\Delta_F(x,y) \frac{\delta^2 S}{\delta\varphi(y)\delta\overline\varphi(x)}$ are  cancelled by the subtraction of $\varphi\cdot_T \delta S/\delta \varphi$.
To proceed we need the Master Ward Identity which can be shown to hold for this theory without anomalies.
Master Ward identities have been discussed in \cite{DF03, DF04, Hollands}.
For our purposes we start with equation (70) of \cite{FredenhagenRejznerBV}. Rewriting it in our context, we obtain the desired equation\footnote{A direct proof of \eqref{eq:MWI} can be obtained studying 
\[
\lim_{y\to x} R_V\left(\varphi(y)\cdot_T\frac{\delta S}{\delta \varphi}(x)\right) 
=
\lim_{y\to x} R_V\left(\varphi(y)\right)\star R_V\left(\frac{\delta S}{\delta \varphi}(x)\right) 
=
\lim_{y\to x} R_V\left(\varphi(y)\right)\star \frac{\delta S_0}{\delta \varphi}(x)+R_V(\varphi(y))\star D\varphi(x)
\]
where the limit is taken in a direction where $y$ is always in the future of $x$. The first equality is a consequence of the causal factorisation property and the last equality is the Schwinger-Dyson equation}
\begin{equation}\label{eq:MWI}
R_V\left(\overline\varphi\cdot_T \frac{\delta S}{\delta \overline\varphi}-
\varphi\cdot_T\frac{\delta S}{\delta \varphi} \right)
=
R_V(\overline\varphi) \star \frac{\delta S_0}{\delta \overline\varphi}-
R_V(\varphi)\star\frac{\delta S_0}{\delta \varphi}.
\end{equation}
Notice that, the divergences on the right hand side of the previous equality due to the pointwise multiplication with the quantum product vanishes because $\delta S_0/\delta \varphi$ is in the ideal of the linear equation of motion.
Hence, in any quantum state
\[
\omega(R_V(\partial^\mu J_\mu))
=
\omega\left(R_V(\overline\varphi) \star \frac{\delta S_0}{\delta \overline\varphi}-
R_V(\varphi)\star\frac{\delta S_0}{\delta \varphi}  \right) = 0
\]
We conclude that the current $J$ constructed with the interaction quantum scalar field is conserved in the sense of perturbation theory up to the ideal describing the equation of motion.
Finally we observe that even if this last observation has been given in terms of the field $\varphi$, since relations \eqref{eq:Cons1} and  \eqref{eq:MWI} are algebraic relations, the very same analysis holds for the fluctuations $\psi_1,\psi_2$.

We also observe that on the state we are considering 
$\omega(\varphi_1) \neq 0$ 
in the sense of perturbation theory. Actually $\phi = \frac{1}{\sqrt{\lambda}} \sqrt{\mu^2-m^2}$ and thus $\mathcal{L}_3$ is of order $\sqrt{\lambda}$ while $\mathcal{L}_4$ is of order $\lambda$ ($m_v^2$ needs to be chosen smaller than $\lambda c$). 
This implies that $\omega(R_V \psi_1)$ is at least of order $O(1)$ in $\lambda$ and hence it cannot totally cancel/compensate $\phi$. 

We thus have that the hypotheses of Theorem \ref{th:goldstone} holds in the sense of perturbation theory. Furthermore, we also notice that all the identities in the proof hold in the sense of perturbation theory. Hence we conclude that the thesis of that theorem holds in the sense of perturbation theory. 

\subsection*{Acknowledgments} 
NP thanks the ITP of the University of Leipzig for the kind hospitality during the preparation of part of this work and DAAD for supporting that visit with the program ``Research Stays for Academics 2017.'' It is a pleasure to thank the referees; their suggestions and requirements help in clarifying the content of our paper and stimulated, e.g. in the case of Gross-Pitaevskii limit, also further results.

\appendix

\section{The principle of perturbative agreement}\label{se:ppa}

Thanks to the {\it principle of perturbative agreement} \cite{HW05, DHP} the theories obtained with the two different splittings given in \eqref{eq:splitting1} and \eqref{eq:splitting2} of the Lagrangian density 
\[
\mathcal{L} = -\frac{1}{2}\partial \phi\partial \phi - \frac{\lambda}{4} \phi^4
\]
are equivalent.
In particular, the principle of perturbative agreement holds because it exists a map $\gamma$ which intertwines 
the time-ordered products constructed in theories with different masses $m$ and $m'$ which will be later set to 0. We shall here denote by $T_m$ the time ordering operator of a massive theory $\square -m^2$.
Hence, 
\[
\gamma: T_m \mathcal{F}_{\text{mloc}} \to  T_{m'} \mathcal{F}_{\text{mloc}}
\]
where here $\mathcal{F}_{\text{mloc}}$ is the commutative algebra 
of multilocal functionals (with respect to the pointwise (classical) product).
The non trivial nature of $\gamma$ can be seen from the fact that the time-ordered products $T_m$ and $T_{m'}$ are different.
At the same time, the perturbative agreement requires that the common part of $\mathcal{L}_I$ and $\mathcal{L}'_I$, or more generally all Wick powers, are left invariant up to ordinary renormalization freedom by the map $\gamma$. 
We shall here see that this is the case provided the time ordering operator $T_m$ are constructed with the preferred symmetrized Hadamard singularity $H_m$ of $\square -m^2$ as in \eqref{eq:H} with $W=0$ for some fixed lengthscale $\xi$, namely $T_m(F)=\no{F}_{H_m}$.  

For completeness we discuss this invariance under $\gamma$ in the simple case of the Wick square 
\[
\Phi^2(f) = \int \phi^2(x) f(x) d^4x.
\]
We want to compare $T_{m'} \Phi^2(f)$ with $T_m \Phi^2(f)$.
To this end we observe that 
\[
T_{m'}\Phi^2(x) = \lim_{y\to x}T_{m'}(\ph(x)\ph(y)) - \Delta_{F,{m'}}(x,y),
\]
and we study
\[
\Delta\Phi^2=\gamma  T_{m'}\Phi^2 - T_{m}\Phi^2.
\]
Suppose that, on regular functionals,  
$T_{m'}$ and $T_{m}$ are constructed starting with the Feynman propagators $\Delta_{F,{m'}}(x,y)$ and $\Delta_{F,{m}}(x,y)$, hence we have
\begin{align*}
\Delta\Phi^2(x)&=\lim_{y\to x}\left(\gamma T_{m'}(\ph(x)\ph(y)) -T_{m}(\ph(x)\ph(y)) - \Delta_{F,{m'}}(x,y) + \Delta_{F,{m}}(x,y)\right). 
\end{align*}
Furthermore, the principle of perturbative agreement (see e.g.\cite{DHP} Theorem 3.2 where $\gamma$ is denoted by $\beta$) implies that $\gamma \ph = \ph$ and that 
\[
\gamma  T_{m'}(\ph(x)\ph(y))    =  T_{m}(\ph(x)\ph(y))
\] 
and thus 
\begin{align*}
\Delta\Phi^2(x)
&=\lim_{y\to x}\left(- \Delta_{F,{m'}}(x,y) + \Delta_{F,{m}}(x,y)\right).
\end{align*}
It remains to compare $\Delta_{F,{m'}}$ with $\Delta_{F,m}$. Considering the form of the Hadamard singularities, we observe that 
\[
\lim_{y\to x}\left(\Delta_{F,{m'}}(x,y) -  \Delta_{F,{m}}(x,y) \right) =  \lim_{y\to x} ( V_{m'} - V_{m})\log \left(\frac{\sigma_\epsilon}{\mu^2}\right) + W(x,y)
\]
where $V_m$ is the standard Hadamard $V$ coefficient, $\sigma_\epsilon$ is the regularized squared geodesic distance, $\mu$ is a length scale and $W$ is a smooth reminder. In the coinciding point limit $( V_{m'} - V_{m})$ is proportional to $\delta m^2 = {m'}^2-{m}^2$, hence the limits present in the previous equation are divergent. However, the kind of divergences present in that coinciding point limit are not different than the standard divergences present in $R_Q(\ph^2)$ with 
\[
 Q \doteq \int g \delta m^2  \ph^2  d^4x, 
\]
if the time-ordered product of 
$T_m(Q\ph^2)$ is constructed without making use of a correct renormalization prescription.
We may thus avoid (renormalize) them in a similar way as the divergences present in the naive construction of $T_m$ are resolved. 
To this end we notice that, if we consider the Feynman operator as operator on functions, we have that 
\[
\Delta_{F,m'} = \lim_{g\to 1}\sum_{n\geq 0}\Delta_{F,m} \left( -g \delta m^2 \Delta_{F,m}\right)^n.
\]
We observe, by power counting, that the divergent contributions in the sum are the contributions $n=0$ and $n=1$. The contributions $n=0$ is removed by the subtraction of $\Delta_{F,m}$ while the divergences present in the contribution $n=1$ are similar to the divergences present in $T_m(QQ)$ and thus they can be treated with renormalization theory.
Actually, at order $1$ in $\delta m^2$ the remaining contribution  $\Delta\Phi^2_{(1)}$ in the difference 
$\gamma T_{m'}\Phi^2-T_{m}\Phi^2$ is thus
\[
\Delta\Phi^2_{(1)} = -\lim_{g\to 1}\int (\Delta_{F,m}^2(x-y) )_{\text{ren}} \delta m^2 g(y)d^4y.
\]
We regularize the product of Feynman propagators in the following way
\begin{equation}\label{eq:ren-freedom}
(\Delta_{F,m}^2)_{\text{ren}}(x) = (\square  + a )\int_{(2m)^2}^\infty dM^2 \frac{\rho_2}{M^2+a} i\Delta_{F,M}(x), \qquad \rho_2=\frac{1}{16\pi^2} \sqrt{1-\frac{4m^2}{M^2}}
\end{equation}
where $a$ is a parameter which takes into account the known renormalization freedom. If we fix it to 0 we have that 
\begin{align*}
-\Delta\Phi^2_{(1)} 
&= \lim_{g\to1}\frac{\delta m^2}{(2\pi)^{-4}}\int d^4p \;\hat{g}(p) (\hat{\Delta}_{F,m}^2)_{\text{ren}}(p) \\
&=\lim_{g\to1} \frac{\delta m^2}{(2\pi)^{-4}}
\int d^4p \;\hat{g}(p)
 (-p^2)\int_{(2m)^2}^\infty dM^2 \frac{\rho_2}{M^2} \frac{1}{-p^2+M^2+i\epsilon} =0 
\end{align*}
where in the last step we used the fact that  $\hat{g}$ tends to $\delta(p)$ in the limit $g\to 1$. 
Higher order contributions can be computed directly and they give a finite result.

\bigskip
Instead of performing these explicit computations, fur our purposes,  we need to understand the origin of these contributions with the following observation.
Actually, the essential steps in the previous computation are the following: we have computed the expansion in powers of $\delta m^2$ of $\Delta_{F,m'}(x-y)$, we have removed the contributions of order $0$ and $1$ and we have eventually taken the coinciding point limits. In formulae
\begin{equation}\label{eq:compact}
-\Delta \Phi^2 = \lim_{y\to x}    \left( \Delta_{F,m'}(x,y)-\Delta_{F,m}(x,y)-
\left.\frac{\partial}{\partial \delta m^2}\Delta_{F,m'}(x,y)\right|_{\delta m^2=0}\delta m^2 \right).
\end{equation}
We prove now that if one starts with the time-ordered propagators constructed with the preferred Hadamard functions $H$ instead of the Minkowski vacuum $\Delta \Phi^2$ vanishes, we have actually that 
\[
H_{F,m'} = \frac{U}{\sigma_\epsilon} + V\log\left(\frac{\sigma_\epsilon}{\mu^2}\right)
\]
and expanding $V$ in powers of $\sigma$ we have
\[
V= c  \frac{{m'}^2}{m'\sqrt{\sigma}}I_1(m'\sqrt{\sigma}) =  \frac{c}{2}\left( m'^2    + \frac{1}{8}{m'}^4 \sigma +  \dots  \right).
\]
Hence, 
we conclude that 
\[
-\Delta \Phi^2 = \lim_{y\to x}    \left( H_{F,m'}(x,y)-H_{F,m}(x,y)-
\left.\frac{\partial}{\partial \delta m^2}H_{F,m'}(x,y)\right|_{\delta m^2=0} \delta m^2\right) = 0.
\] 
The same argument essentially holds also on any curved background, so in the case of mass renormalization we see that there is a choice of renormalization freedom ($a=0$ in \eqref{eq:ren-freedom}) such that
\[
\gamma T_{m'}\Phi^2 =T_m\Phi^2.
\] 
where now the Wick powers $T_m\Phi^2 = \no{\Phi^2}_{H_m}$ are constructed (regularized) with respect to the distinguished Hadamard singularity $H_m$ and not with respect to the symmetric part the two-point function of a state.
Finally, we observe that, the same results holds also when Wick monomial of higher order are considered, namely $\gamma T_{m'}\Phi^n=T_{m}\Phi^n$.

\section{Nonrelativistic limit of the free complex scalar field}\label{sec:non-rel-limit}
In section \ref{sec:free-theory} we have analyzed the possible KMS states with $\be>0$ and chemical potential $\mu$.
In particular, for $\beta>0$ and chemical potential $|\mu|<m$ we have $\omega_{\beta,\mu}$, 
for $\beta>0$ and chemical potential $\mu=\pm$ and a condensate $c$  we have $\omega_{\beta,c}^{\pm}$.
We discuss in this appendix the nonrelativistic limit of these states. 
We shall furthermore see that the charge density converges to the particle density under that limit.

The nonrelativistic limit is obtained for temperatures $T=\be^{-1}$ which are small compared to $m$ and for velocities $v\ll1$ such that $mv^2/T=O(1)$. The chemical potential is near to $m$, $(m-\mu)/T=O(1)$.
We set $\bar\mu=\la^{-2}(m-\mu)$ and 
\begin{equation}
\psi_{\la}(t,\mathbf{x})=\sqrt{2m\la^3}\ \ph(\la^2 t,\la \mathbf{x})e^{-it\la^2m}
\end{equation}
and find for the first class of KMS states

\begin{equation}
\omega_{\la^2\beta,m-\la^{-2}\bar{\mu}}(\psi_{\la}^*(t,x)\psi_{\la}(t',x'))=\frac{1}{(2\pi)^3}\int d^4 p\frac{\delta(p^2+m^2)\epsilon(p_0)e^{i(p_0-m)\la^2(t-t')-i\mathbf{p}\la(\mathbf{x}-\mathbf{x'})}}{1-e^{-\la^2\be(p_0-m+\la^{-2}\bar{\mu})}}
\end{equation}
\begin{equation}
\stackrel{\la\to\infty}{\rightarrow}\frac{1}{(2\pi)^3}\int d^3\mathbf{p}\frac{e^{i\frac{\mathbf{p}^2}{2m}(t-t')-i\mathbf{p}(\mathbf{x}-\mathbf{x}')}}{1-e^{-\be(\frac{\mathbf{p}^2}{2m}+\bar{\mu})}}=\omega_{\beta,\bar{\mu}}(\psi^*(t,\mathbf{x})\psi(t',\mathbf{x}'))
\end{equation}
which is the 2-point function of the $\be$-KMS state with chemical potential $\bar{\mu}$ of the nonrelativistic free scalar field $\psi$.
For the second class (with condensate) we set $\bar{\mu}=0$ and consider a sequence of condensates
\begin{equation}
c_{\lambda}(\mathbf{x})=(2m\la^3)^{-\frac12}c(\la^{-1}\mathbf{x})
\end{equation}
where $c$ is a harmonic function. Then the states $\omega^{+}_{\la^2\be,c_{\la}}$ converge to the state of $\psi$ with the 2-point function
\begin{equation}
\omega_{\beta,c}(\psi^*(t,\mathbf{x})\psi(t',\mathbf{x}'))=\frac{1}{(2\pi)^3}\int d^3\mathbf{p}\frac{e^{i(\frac{\mathbf{p}^2}{2m})(t-t')-i\mathbf{p}(\mathbf{x}-\mathbf{x}')}}{1-e^{-\be\frac{\mathbf{p}^2}{2m}}}+\overline{c(\mathbf{x})}c(\mathbf{x}').
\end{equation}
Note that the contributions of antiparticles disappears in both cases in the limit $\lambda\to\infty$ due to the fact that the chemical potential $\mu$ tends to $+m$. Replacing $+m$ by $-m$ exchanges the role of particles and antiparticles.
Note furthermore that the hermitian scalar field does not have a meaningful nonrelativistic limit.
Actually, one sees
that the corresponding quantum states are not stable against local
perturbations. This happens because local perturbations do not com-
mute with particle number.
We now pass to analyze the charge density in the nonrelativistic limit
Let $J_0d^3\xx=-i\no{\ph^*\dot{\ph}-\ph\dot{\ph}^*}_Hd^3\xx$ denote the charge density of the complex scalar field. We scale $\xx$, $\be$ and $m-\mu$ as before and obtain
\begin{equation}
\lim_{\la\to\infty}\omega_{\la^2\be,m-\la^{-2}\bar{\mu}}(J_0)d^3(\la\xx)=\left(\int d^3\pp \frac{1}{e^{\beta(\frac{\pp^2}{2m}+\bar{\mu})}-1}\right)d^3\xx
\end{equation}
for $\bar{\mu}>0$ and
\begin{equation}
\lim_{\la\to\infty}\omega_{\la^2\be,c_{\lambda}}(J_0)d^3(\la\xx)=\left(\int d^3\pp \frac{1}{e^{\beta\frac{\pp^2}{2m}}-1}+|c(\xx)|^2\right)d^3\xx
\end{equation}
\ie the charge density tends to the particle density in the nonrelativistic limit.

\section{First Hadamard Coefficients for $D$}\label{sec:hadamard-coefficient}

We compute in this section the first Hadamard coefficients for the operator $D$ given in \eqref{eq:D-form}. 
We recall that in this case the Hadamard singularity has the following structure
\begin{equation}\label{eq:H-D}
H +\frac{i}{2}\Delta= \lim_{\epsilon\to0^+}\frac{U}{\sigma_\epsilon} + V \log \left(\frac{ \sigma_\epsilon}{\xi^2}\right), \qquad V = \sum_n V_n\sigma^n
\end{equation}
similar to \eqref{eq:H} with vanishing $W$,
where now $U,V,V_i$ are $2\times 2$ matrices and where $\sigma_\epsilon$ is again one half of the regularized geodesic distance. In the case of Minkowski $\sigma(x,y)=\frac{1}{2}(x-y)^\mu(x-y)_\mu$.
The requirement that $D H$ is smooth implies the following transport equations
\begin{gather*}
 2\nabla^\mu \sigma \nabla_\mu U + U(\square \sigma -4)  -2i\mu \sigma_2 \;U \partial_0 \sigma = 0\\
- 2\nabla^\mu \sigma \nabla_\mu V_0 - V_0(\square \sigma -2)  +2i\mu \sigma_2 \;V_0\partial_0 \sigma +DU= 0\\
 DV=0.
\end{gather*}
The first two equations give 
\[
2\nabla^\mu \sigma \nabla_\mu (U^{-1}V_0) = - 2 U^{-1}V_0 - U^{-1}DU
\]
considering integrals along the geodesic $\gamma$ joining $x,y$ and indicating by $r$ its affine parameter ($\gamma(0)=x$, $\gamma(1)=y$), that equation gives 
\begin{align*}
2r\frac{d}{d r} (U^{-1}V_0) + 2U^{-1}V_0  = - U^{-1}DU \\
2\frac{d}{d r}  (r U^{-1}V_0)   = -  U^{-1}DU
\end{align*}
Integrating along $\gamma$ we get
\begin{equation}\label{eq:V0}
V_0(x,y)  =  -\frac{1}{2} U(x,y) \int_0^1 dr (U^{-1} DU) 
\end{equation}

\bigskip
The first transport equation can be solved once the initial condition $U(x,x)=\mathbb{I}$ is fixed (as we obtained in the previously computed vacuum state). 
To find its solution, due to translation invariance, it is enough to study
\[
2x^\mu \frac{\partial}{ \partial x^\mu} U(0,x) +2 i\mu x^0 \bm\sigma_2 U(0,x) = 0 
\]
a solution which satisfy the desired initial condition is 
\begin{equation}\label{eq:U}
U(0,x)= \cos(\mu x^0)\mathbb{I} - i\bm\sigma_2 \sin(\mu x^0) = \begin{pmatrix}
\cos(\mu x^0) & -\sin(\mu x^0) \\
\sin(\mu x^0) & \cos(\mu x^0)
\end{pmatrix}
\end{equation}
hence, 
since $U^{-1}= \cos(\mu x^0)\mathbb{I} + i\bm\sigma_2 \sin(\mu x^0)$
\begin{align*}
U^{-1}DU&= U^{-1}\left((-\square +M^2)\mathbb{I} + \delta M^2 \bm\sigma_3 + 2i\mu \bm\sigma_2 \partial_0 \right)U\\
&=(M^2+\mu^2) \mathbb{I} +\delta M^2 \left(\cos(2\mu x^0)\bm\sigma_3 -\sin(2\mu x^0) \bm\sigma_1\right)
\end{align*}
recalling \eqref{eq:V0} 
we have
\begin{align}
V_0(0,x) &= -\frac{1}{2} U(0,x) \int_0^1 dr\;  U^{-1} DU
\notag
\\
\label{eq:V0-1} &= -\frac{1}{2} U(0,x) \left(   (M^2+\mu^2) \mathbb{I} +\delta M^2
\left(\frac{\sin(2\mu x^0)}{2\mu x^0}\bm\sigma_3 
+\frac{\cos(2\mu x^0)-1}{2\mu x^0} \bm\sigma_1\right)
\right)  
\end{align}

We can now expand $V$ as $\sum_{n\geq 0} V_n \sigma^n$. The equation $DV=0$ and the knowledge of $V_0$, permits to compute $V_n$ recursively. We are in particular interested in $[V_1](x) = V(x,x)$ because this coefficient is proportional to the trace anomaly of the stress tensor of the linearized theory \cite{Wald,Moretti, HW05} and in particular enters in the expressions
\[
\psi_i D\psi_j, \qquad \partial_a \psi_i D \psi_j.
\] 
We observe that
\[
[V_1] = \frac{1}{4} [DV_0]
\]
furthermore we can expand $V_0=UX$ for some $X$
hence we have 
\[
[V_1] = \frac{1}{4}\left([DU][X]-[U][\square X]-2 [\nabla_\mu U][\nabla^\mu X]+2i\mu \sigma_2[\partial_0 X] \right)
\]
where
\begin{align*}
[U]&=\mathbb{I}
&[DU] &= (M^2+\mu^2) \mathbb{I} + \delta M^2 \bm\sigma_3 
&[X] &= -\frac{1}{2}[DU] \\
[\partial_0 X] &= -\frac{1}{2} \mu \delta M^2 \bm\sigma_1 
&[\partial_0 U] &= -i \mu \bm\sigma_2 
&[\square X] &= -\frac{2}{3} \mu^2 \bm\sigma_3 \delta M^2.
\end{align*}
Summarizing this analysis
\[
[V_1]=\frac{1}{4}[DV_0] = -\frac{1}{8} ((M^2+\mu^2)^2+\delta M^4) \mathbb{I} -\frac{1}{4} \left(M^2 + \frac{\mu^2}{3}\right) \delta M^2\bm\sigma_3
\]
We observe that $[V_1]$ is diagonal and constant, hence, we see that this anomaly is not visible in the conservation of the charge $J^\mu$ given in \eqref{eq:conservation-J1}.

\section{Some technical propositions}

We report here a technical proposition, similar to Proposition 9 in \cite{FredenhagenLindner}, and adapted to the case studied here.  
\begin{proposition}\label{eq:decay-p}
Consider $A_0,\dots, A_n$ in $\mathcal{A}_I$ and, for $k\in\mathbb{N}$ the following compactly supported distribution
\[
\Psi(z_1,\dots, z_k,y_1,\dots, y_k) \doteq \left.\left(\prod_{l=1}^k \frac{\delta}{\delta \psi^{s(l)}(z_l)}\otimes \frac{\delta}{\delta \psi^{r(l)}(y_l)} \right)
(A_0 \otimes A_1\otimes\dots A_n)\right|_{(\psi^0,\dots,\psi^n) = 0}
\]
where $s$ and $r$ maps $\{1,\dots, k\}$ to $\{0,\dots, n\}$ with the condition $s(l)<r(l)$.
The function 
\[
 \varphi:(p_1,\dots, p_k)\mapsto \hat{\Psi}(-p_1,\dots, -p_k,p_1,\dots, p_k)
\] 
 given in terms of  the Fourier transform $\hat{\Psi}$ of $\Psi$ is of rapid decrease if $P\in V_+^{k} \cup V_-^{k}$. Where $V_\pm$ denotes the forward/backward light cone in the cotangent space.
\end{proposition}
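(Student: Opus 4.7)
The plan is to interpret $\varphi(P)$ as the Fourier transform of the compactly supported distribution $\Psi$ restricted to a closed cone, and to use microcausality of each $A_j$ to verify that this cone is disjoint from the momentum projection of $\mathrm{WF}(\Psi)$; then rapid decrease follows from the standard theorem relating Fourier decay and wave front sets of compactly supported distributions.

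First I would observe that $\Psi$ is a tensor product across vertices: grouping the functional derivatives by which $A_j$ they act on, one has $\Psi = \bigotimes_{j=0}^{n} A_j^{(k_j)}$ where $k_j$ counts the number of lines incident to vertex $j$, and $A_j^{(k_j)}$ is the compactly supported distribution obtained by taking $k_j$ functional derivatives of $A_j$ and evaluating at $\psi=0$. Since $A_j\in\mathcal{A}^I$ is microcausal, the wave front set of $A_j^{(k_j)}$ avoids $\overline{V_+}^{k_j}\cup\overline{V_-}^{k_j}$. By the standard rule for wave front sets of tensor products, any covector $(\xi^z_1,\dots,\xi^z_k,\xi^y_1,\dots,\xi^y_k)$ in $\mathrm{WF}(\Psi)$ must, when restricted to each vertex $j$, either vanish or belong to $\mathrm{WF}(A_j^{(k_j)})$.

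Next I would show that for any $P\in\overline{V_+}^k\setminus\{0\}$ the covector $(-P,P)$ does not belong to $\mathrm{WF}(\Psi)$ (the case $P\in\overline{V_-}^k\setminus\{0\}$ is symmetric). The restriction $c_j$ of $(-P,P)$ to vertex $j$ is the collection $\{-p_l : s(l)=j\}\cup\{p_l : r(l)=j\}$. I would argue by induction on the vertex index. For $j=0$ there are no incoming lines (since $s(l)<r(l)$), so $c_0$ has every component in $\overline{V_-}$; if $c_0\neq 0$, this contradicts microcausality of $A_0$, forcing $c_0=0$ and hence $p_l=0$ for every $l$ with $s(l)=0$. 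Inductively, assume $p_l=0$ whenever $s(l)<j$. Then every incoming contribution to $c_j$ vanishes, since $r(l)=j$ forces $s(l)<j$, leaving $c_j$ with all non-trivial entries of the form $-p_l\in\overline{V_-}$ for outgoing lines. Microcausality of $A_j$ then forces $c_j=0$, and hence $p_l=0$ for every $l$ with $s(l)=j$. Iterating over all vertices forces $P=0$, contradicting $P\neq 0$.

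Finally I would invoke the classical theorem that for a compactly supported distribution $f$ on $\mathbb{R}^N$ and a closed cone $\Gamma\subset\mathbb{R}^N\setminus\{0\}$ disjoint from the momentum projection of $\mathrm{WF}(f)$, the Fourier transform $\hat f$ is of rapid decrease on $\Gamma$. The set
\[
\Gamma=\bigl\{(-P,P) : P\in\overline{V_+}^k\cup\overline{V_-}^k,\ P\neq 0\bigr\}
\]
is a closed cone in $\mathbb{R}^{8k}\setminus\{0\}$, and by the argument above it is disjoint from the momentum projection of $\mathrm{WF}(\Psi)$. Restricting $\hat\Psi$ to $\Gamma$ yields exactly $\varphi(P)$, which is therefore of rapid decrease on $V_+^k\cup V_-^k$.

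The main obstacle is the inductive step of the second part: because individual momenta $p_l$ may vanish even when $P\neq 0$, one cannot simply pick a single vertex at which the microcausality bound applies. The ordering $s(l)<r(l)$ is what allows the induction to propagate from vertex $0$ outward, ensuring that at each stage the covector at vertex $j$ lies in a single lightcone $\overline{V_-}^{k_j}$ (or $\overline{V_+}^{k_j}$ for the opposite case), which is precisely the configuration forbidden by the microcausal condition on $A_j$.
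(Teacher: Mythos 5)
Your proof is correct and takes essentially the same route as the paper: an induction over the vertex index, exploiting the ordering $s(l)<r(l)$ so that the incoming covectors at vertex $j$ all vanish by the inductive hypothesis, whence microcausality of $A_j$ forces the outgoing ones to vanish as well, so the whole direction $(-P,P)$ must be trivial. You state the vertex-local step more precisely than the paper (via the tensor-product wavefront set rule, which correctly allows the restriction to a vertex to vanish rather than claiming the intersection with $\bigcap W^+_l$ is empty outright), but the underlying argument is the same.
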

\begin{proof}
Since $A_i\in \mathcal{A}_I$, is a microcausal functional 
we have that 
\[
WF(\Psi(Y,Z)) \cap (\bigcap_{s(l)=i } W^+_l) \cap (\bigcap_{r(l)=i } W^+_{k+l}) = \emptyset, \qquad \forall i,
\] 
and the same holds with $W^-$ at the place of $W^+$,
where
\[
W^\pm_j \doteq (T^*\mathbb{M})^{\otimes j-1} \otimes \overline{V}^\pm  \otimes (T^*\mathbb{M})^{\otimes 2k-j-1}
\]
for $j\in (0,\dots, 2 k)$. 
Thanks to this property, we can prove that, if every component of $P=(p_1,\dots, p_k)$ is a future pointing causal vector and if $p_l = 0$ for all $l$ such that ${r(l)=i}$ for some $i$, 
$\varphi(P)$ can be of non rapid decrease 
only if
\[
\sum_{{s(l)=i}} p_l = 0. 
\]
With this observation we can prove by induction on $i$ that  $p_l=0$ for all $l$, actually, if $i=0$ there are no $l$ such that $r(l)=0$ and thus the previous condition implies that
\[
\sum_{s(l)=0} p_l = 0
\] 
and hence, since for all $l$ $p_l\in V_+$, for every $l$ such that $s(l)=0$ $p_l=0$. Furthermore, 
If we have already proved that $p_l=0$ for every $s(l)=j< i$, we can prove it also for $s(l)=i$. Actually, in that case we already know that for every $l$ such that $r(l)=i$, $s(l)<i$ and thus $p_l=0$. 
Hence, the direction $P$ we are analysing can be of non rapid decrease only if 
\[
\sum_{s(l)=i} p_l = 0
\]
which implies that for every $l$ such that $s(l)=i$, $p_l=0$. 
We have thus proved that all these directions are of rapid decrease because the direction $P$ can be of non rapid decrease only if $P=\{0\}$ and the zero section does not intersect $WF(\Psi)$.
\end{proof}

In the following proposition we prove the exponential decay of the two-point function of the KMS states for the linearized theory studied in section \ref{se:kms-linear}. This proposition is used in the proof of the clustering estimate necessary for the adiabatic limit.

\begin{proposition}\label{pr:decay}
Let $f\in \mathcal{E}'(M)$, with $\text{supp} f \subset C_R$ where $C_R$ is a sphere of radius $R$.  Consider
\[
I_\sigma(u,\mathbf{x}) =   
\frac{1}{(2\pi)^3} \int d^3\mathbf{p}  
\frac{ e^{i\mathbf{p}\mathbf{x}}e^{ -  \omega_\sigma u}}{(\omega_+^2-\omega_-^2)\omega_{\sigma}}
\hat{\overline{D}}(\pm\omega_\sigma,\mathbf{p})
\hat{f}(\omega_\sigma,\mathbf{p}), \qquad \sigma\in\{+,-\}
\]
where $\hat{\overline{D}}$ is as in \eqref{eq:Dhat}, see also \eqref{eq:D-form},  
 and $\omega_\pm$ as in \eqref{eq:omegapm}. Assume $M_->0$,
it holds that
\[
|I_\sigma(u,\mathbf{x})|  \leq c e^{-M_-r}, \qquad r=\sqrt{|\mathbf{x}|^2+u^2},
\qquad 
 r>>R, \qquad u>0
\]
\end{proposition}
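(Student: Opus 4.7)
The statement is a Paley--Wiener type exponential decay, and I would prove it by shifting the $\mathbf p$-contour into the complex domain, in the spirit of Proposition~9 of \cite{FredenhagenLindner}. Two ingredients are needed. First, since $\operatorname{supp} f\subset C_R$, Paley--Wiener extends $\hat f$ to an entire function with
\[
|\hat f(p_0,\mathbf p)| \leq C_N (1+|p_0|+|\mathbf p|)^{-N}\,e^{R(|\mathrm{Im}\,p_0|+|\mathrm{Im}\,\mathbf p|)}, \qquad \forall N\in\mathbb N.
\]
Second, the assumption $M_->0$ ensures that $\omega_\sigma^2(\mathbf p)$ defined by \eqref{eq:omegapm} is real-analytic and strictly positive on $\mathbb R^3$, with $\omega_+^2-\omega_-^2>0$ there; by the implicit function theorem it then admits a single-valued holomorphic extension to a complex tube $T_{\delta_0}=\{\mathbf p+i\mathbf q:|\mathbf q|<\delta_0\}$ with $\mathrm{Re}\,\omega_\sigma>0$, on which the rational factor $(\omega_+^2-\omega_-^2)^{-1}\omega_\sigma^{-1}\hat{\overline D}(\pm\omega_\sigma,\mathbf p)$ is likewise holomorphic and polynomially bounded.

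\textbf{Contour shift.} For fixed $(u,\mathbf x)$ with $u>0$ and $r\gg R$, I would apply Cauchy's theorem to deform $\mathbb R^3\to\mathbb R^3+i\mathbf q$ with $\mathbf q=\eta\,\mathbf x/|\mathbf x|$, $\eta\in(0,\delta_0)$. The deformation is legitimate because the Paley--Wiener factor $(1+|p|)^{-N}$ dominates the polynomial growth of the other ingredients along the shifted contour, and no singularity is crossed thanks to the holomorphy on $T_{\delta_0}$. After the shift the controlling exponential becomes
\[
\bigl|e^{i(\mathbf p+i\mathbf q)\cdot\mathbf x}\,e^{-\omega_\sigma(\mathbf p+i\mathbf q)u}\bigr|=e^{-\eta|\mathbf x|\,-\,\mathrm{Re}\,\omega_\sigma(\mathbf p+i\mathbf q)\,u},
\]
while the Paley--Wiener prefactor $e^{R(|\mathbf q|+|\mathrm{Im}\,\omega_\sigma|)}\leq e^{2RM_-}$ is uniformly bounded and absorbed into $c$ since $r\gg R$.

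\textbf{Optimization.} A Taylor expansion of $\omega_\sigma^2(i\mathbf q)$ about $\mathbf q=0$ yields $\omega_\sigma^2(i\mathbf q)=M_-^2-\kappa|\mathbf q|^2+O(|\mathbf q|^4)$ for some $\kappa>0$, from which a lower bound of the form $\mathrm{Re}\,\omega_\sigma(\mathbf p+i\mathbf q)\geq\sqrt{M_-^2-\kappa\eta^2}$ can be derived (uniformly in $\mathbf p$, at least for $\eta$ small). Minimising $\eta|\mathbf x|+\sqrt{M_-^2-\kappa\eta^2}\,u$ over $\eta$ (attained at $\eta=M_-|\mathbf x|/(\sqrt\kappa\,r)$) gives a decay exponent $M_-r/\sqrt\kappa\geq M_-r$, yielding the stated bound $|I_\sigma(u,\mathbf x)|\leq c\,e^{-M_-r}$.

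\textbf{Main obstacle.} The delicate step is controlling the holomorphy of $\omega_\sigma$ on $T_{\delta_0}$. Since $\omega_\sigma$ is defined through the nested radical $\sqrt{(w^2+2\mu^2)^2-w_1^2w_2^2}$, one must check that neither this inner radicand nor the resulting $\omega_\sigma^2$ crosses the negative real axis under the shift $\mathbf p\to\mathbf p+i\mathbf q$, and moreover that $\mathrm{Re}\,\omega_\sigma(\mathbf p+i\mathbf q)$ admits a useful lower bound uniformly in $\mathbf p\in\mathbb R^3$, not just at $\mathbf p=0$. A case analysis of the two radicands, especially in the near-Goldstone regime $M_2\sim m_v$ relevant to the BEC application, is required to confirm that $\delta_0$ is large enough for the optimising $\eta$ above to lie in the admissible range. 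Once this is established, the argument outlined above delivers the stated decay.
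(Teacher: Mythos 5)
Your plan — a Paley--Wiener estimate on $\hat f$ combined with a complex deformation of the $\mathbf p$-contour in the direction conjugate to $\mathbf x$ — is the correct family of ideas and is what the paper does too. However, the concrete route you sketch (a rigid shift $\mathbf p\to\mathbf p+i\mathbf q$ with a constant $\mathbf q$ of length $\eta$, together with a uniform lower bound $\mathrm{Re}\,\omega_\sigma\geq\sqrt{M_-^2-\kappa\eta^2}$) cannot produce the stated rate $M_-$ and contains a concrete error. The branch cut of $\omega_-$ in the complex $p_1$-plane at fixed real $\mathbf p_\perp$ (with $\mathbf x$ put along the first axis) starts precisely at imaginary part $\sqrt{|\mathbf p_\perp|^2+M^2-\delta M^2}$, which at $\mathbf p_\perp=0$ equals $M_-=M_2$; at that endpoint $\omega_-$ \emph{vanishes}. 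Consequently: (i) a flat shift is admissible only for $\eta<M_-$, producing the weaker decay $e^{-\eta|\mathbf x|}$; (ii) your claimed expansion $\omega_\sigma^2(i\mathbf q)=M_-^2-\kappa|\mathbf q|^2+O(|\mathbf q|^4)$ is false already at zeroth order, because $\omega_-^2(0)=M_1^2M_2^2/\omega_+^2(0)<M_-^2$ whenever $\mu\neq0$ and $M_2>0$, so there is no uniform-in-$\mathbf p$ lower bound on $\mathrm{Re}\,\omega_-$ of the form you use; and (iii) the factor $1/\omega_-$ in the integrand blows up as the contour approaches the branch-cut tip, so the constant in your estimate would diverge as $\eta\to M_-$. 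You correctly flag "controlling the holomorphy of $\omega_\sigma$" as the obstacle, but the resolution of that obstacle is the entire content of the proposition, not a routine case check.

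The paper closes the gap by deforming the $p_1$-contour not onto a flat plane but onto an explicit $\mathbf p_\perp$-dependent curve $\xi_3\cup\xi_4\cup\xi_5$ that hugs the branch cuts $\gamma_1,\gamma_2,\gamma_3$: the middle piece $\xi_4$ sits at height $\sqrt{|\mathbf p_\perp|^2+M^2-\delta M^2}-\epsilon$ and the two verticals run at $\mathrm{Re}\,p_1=\pm2\mu$. On that contour the oscillatory factor becomes $e^{-|\mathbf x|\sqrt{|\mathbf p_\perp|^2+M^2-\delta M^2}}$, which after $\epsilon\to0$ yields the sharp exponent $M_-|\mathbf x|$ and simultaneously provides the decay in $\mathbf p_\perp$ needed to perform the transverse integral; and the divergence of $1/\omega_-$ near the branch-cut tip is controlled not by a uniform bound (there is none) but by showing that $1/|\omega_-|$ is dominated by an $L^1$ function on $\xi_4$, uniformly in $\epsilon$ and $\mathbf p_\perp$. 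Without the adaptive contour and without the $L^1$ (rather than $L^\infty$) control of $1/\omega_-$, the argument does not close, so what you have is a plausible strategy outline rather than a proof.
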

\begin{proof}
We observe that
\begin{align*}
-\frac{\hat{\overline{D}}(\pm\omega_\sigma,\mathbf{p})}{(\omega_+^2-\omega_-^2)\omega_{\sigma}} 
&=
\left(   
\frac{\sigma 1}
{ 2 \omega_{\sigma}}
+
\frac{2\mu^2}
{(\omega_+^2-\omega_-^2)\omega_{\sigma}}
\right) \mathbb{I} + \frac{\delta M^2}
{(\omega_+^2-\omega_-^2)\omega_\sigma} \sigma_3 \pm \frac{2\mu}
{\omega_+^2-\omega_-^2} \sigma_2 
\end{align*}
Hence, we analyze separately the following functions
\begin{align*}
I^1_\sigma(u,\mathbf{x}) &=   
\frac{1}{(2\pi)^3} \int d^3\mathbf{p}  
\frac{1}
{2\omega_{\sigma}}
e^{i\mathbf{p}\mathbf{x}}e^{ -  \omega_\sigma u}
\hat{f}(\omega_\sigma,\mathbf{p}), \qquad \sigma\in\{+,-\}
\\
I^2_\sigma(u,\mathbf{x}) &=   
\frac{1}{(2\pi)^3} \int d^3\mathbf{p}  
\frac{1}
{2(\omega_+^2-\omega_-^2)\omega_\sigma}
e^{i\mathbf{p}\mathbf{x}}e^{ -  \omega_\sigma u}
\hat{f}(\omega_\sigma,\mathbf{p}), \qquad \sigma\in\{+,-\}
\\
I^3_\sigma(u,\mathbf{x}) &=   
\frac{1}{(2\pi)^3} \int d^3\mathbf{p}  
\frac{1}
{\omega_+^2-\omega_-^2}
e^{i\mathbf{p}\mathbf{x}}e^{ -  \omega_\sigma u}
\hat{f}(\omega_\sigma,\mathbf{p}), \qquad \sigma\in\{+,-\}
\end{align*}
$I$ is then a linear combination of $I^i$ with constant coefficients. We prove with some details the decay of $I^2_\sigma$ for $\sigma\in \{+,-\}$, analogous results holds also for the other components. 

Without loosing generality, let us assume that $\mathbf{x} = r\mathbf{n}$ with $n=(1,0,0)$. We shall discuss the decay for large $r$.
Notice that
\begin{align*}
I^2_\sigma(u,r) &=   
\frac{1}{(2\pi)^3} \int d^3\mathbf{p}  
\frac{1}
{(\omega_+^2-\omega_-^2)2\omega_\sigma}
e^{i p_1 r}e^{ -  \omega_\sigma u}
\hat{f}(\omega_\sigma,\mathbf{p})
\end{align*}
We evaluate the integral in $p_1$ with the help of complex analysis considering $p_1=z=x+iy$ with $x,y\in\mathbb{R}$ a complex variable.
We shall furthermore, consider a particular contour of integration  in the upper half plane.
The integral over one branch of the contour we shall chose corresponds to $I^2_\sigma$, furthermore the contour will be extended to infinity and chosen in such a way to avoid the poles in $1/(\omega_+^2-\omega_-^2)2\omega_\sigma$ and the brunch cuts present in $\omega_\sigma$. 

Actually, we observe that since $f$ is a compactly supported distribution, and its support is contained in the disc centered in $0$ and of radius $R$, 
by the Paley Wiener theorem, its Fourier transform is an entire analytic function. 
Furthermore, it grows at most polinomially in every real direction and exponentially in complex directions. Hence, it exist two constants $C>0$, $C'>0$ and a $N>0$ such that 
\begin{equation}\label{eq:paley-wiener}
|\hat{f}(p_0,\mathbf{p})| \leq C e^{R {|\text{Im}(p_0)|+|\text{Im}(\mathbf{p})|}} (1+|\text{Re}(p_0)|+|\text{Re}(\mathbf{p})|)^N  
\leq 
C' e^{R \sqrt{ |p_0|^2+|\mathbf{p}|^2}} 
\end{equation}
hence, if it is composed with $\omega_\sigma$ and if it is seen as a function of $p_1$, it is analytic everywhere up the branch cuts which are present in the principal squares of 
\begin{align}
\omega_\pm 
  &= \sqrt{w^2 +2\mu^2 \pm \sqrt{4\mu^4 + 4\mu^2w^2  +(\delta M^2)^2}}.
\end{align}
This implies that the integrand of $I^2_\sigma$, seen as a function of $p_1$ is analytic every where up to the poles of $1/(\omega_+^2-\omega_-^2)2\omega_\sigma$ and the branch cuts mentioned above. 
To describe the contour of integration we analyze the location of the branch cuts and the poles.

\bigskip
We study the form of the branch cuts of $\omega_-(z,\mathbf{p}_\perp)$.
In view of the definition of 
\[
\omega_-^2 = w^2 + 2\mu^2 - 2\mu \sqrt{w^2 +\mu^2 + \frac{\delta M^4}{4\mu^2}}
\]
and recalling that $w^2 = (z^2)+|\mathbf{p}_\perp|^2 + M^2$ where $z=x+iy$ is the complex variables which replaces $p_1$ and $\mathbf{p}_\perp=(p_1,p_2)$, we have that there is a branch cut where $Z(z)\doteq w^2 +\mu^2 + \frac{\delta M^4}{4\mu^2}\leq 0$ and where $W(z)\doteq\omega_-^2\leq 0$. 

The first condition, in the upper half complex plane, is met if 
\[
x=0, \qquad y\geq \sqrt{|\mathbf{p}_\perp|^2 + M^2+\mu^2 + \frac{\delta M^4}{4\mu^2}}.
\]
Writing $W(z)=c+id - 2\mu \sqrt{a+ib}$, with $a,b,c,d\in\mathbb{R}$, 
the second condition is met if
\begin{equation}
\begin{cases}
d -2\mu\text{Im}\sqrt{Z}= d - \sqrt{2}\mu \frac{b}{\sqrt{\sqrt{a^2+b^2}+a}} =0  \\
c -2\mu\text{Re}\sqrt{Z} = c -\sqrt{2}\mu \sqrt{\sqrt{a^2+b^2}+a} \leq 0   \label{eq:ineq}
\end{cases}
\end{equation}
In our case 
\begin{align}
a&=x^2-y^2 + |\mathbf{p}_\perp|^2 + M^2+\mu^2 + \left(\frac{\delta M^2}{2\mu}\right)^2
\notag\\
b&= d = 2 y x
\notag\\
c&= x^2 - y^2+ |\mathbf{p}_\perp|^2+ M^2+2\mu^2 
\label{eq:a-b}
\end{align}
hence, since $M_->0$, $M^2 > \delta M^2 \geq   0$, \eqref{eq:ineq} has solutions if 
\[
x=0, \qquad        |\mathbf{p}_\perp|^2 + M^2 -  \delta M^2 \leq       y^2 \leq |\mathbf{p}_\perp|^2 + M^2+  \delta M^2 
\]
or if 
\[
2\mu^2-a = \sqrt{a^2+b^2} 
\]
which holds if 
\begin{equation}\label{eq:cuts}
y = \sqrt{\frac{\mu^2 B^2}{\mu^2 -x^2}-\mu^2}
\end{equation}
where $B^2 = |\mathbf{p}_\perp|^2 + M^2+\mu^2 + \left(\frac{\delta M^2}{2\mu}\right)^2$. Hence \eqref{eq:cuts} has a solution for $|x|\leq \mu$ and it holds that $y \geq \sqrt{B^2-\mu^2}$. 
Furthermore, the inequality in \eqref{eq:ineq} gives
\[
c-2\mu^2 \frac{b}{d} = c-2\mu^2 \leq 0 
\]
which gives $y^2 \geq x^2 + |\mathbf{p}_\perp|^2 + M^2 $, which is always true. 

Summarizing, the cuts are the following curves in upper half complex plane 
\[
\gamma_1=\begin{cases}
x=0\\ 
y\geq \sqrt{|\mathbf{p}_\perp|^2 + M^2+\mu^2 + \frac{\delta M^4}{4\mu^2}}
\end{cases}
\qquad 
\gamma_2=\begin{cases}
x=0\\ 
\sqrt{  |\mathbf{p}_\perp|^2 + M^2 -  \delta M^2} \leq       y \leq \sqrt{|\mathbf{p}_\perp|^2 + M^2+  \delta M^2 }
\end{cases}
\]
and
\[
\gamma_3=\begin{cases}
|x|\leq \mu\\ 
y = \sqrt{\frac{\mu^2 \left(|\mathbf{p}_\perp|^2 + M^2+\mu^2 + \left(\frac{\delta M^2}{2\mu}\right)^2\right)}{\mu^2 -x^2}-\mu^2} \geq 
\sqrt{|\mathbf{p}_\perp|^2 + M^2 + \left(\frac{\delta M^2}{2\mu}\right)^2}.
\end{cases}
\]
As also displayed in the figure \ref{fig:cuts} we notice that $\gamma_1$ is contained inside of $\gamma_3$.
Furthermore, 
 $\gamma_2$ is not entirely contained in $\gamma_3$. On $\gamma_2$ and $\gamma_3$ $\omega_-$ is purely imaginary.

We observe that $\omega_+$ has only a branch cut in $\gamma_1$, furthermore, since $\omega_+^2-\omega_-^2= 4\mu \sqrt{Z}$, its branch cut is $\gamma_1$.   
The points where $\omega_-$ vanishes correspond to the extremal points of $\gamma_2$, those where $\omega_+$ vanishes are the extremal points of $\gamma_1$ and those where $(\omega_+^2-\omega_-^2)$ is zero the extremal points of $\gamma_1$. 

\begin{figure}[h!]
	\begin{center}
     \begin{picture}(350,180)(0,0) 
       \put(15,-30){\includegraphics[width=110mm]{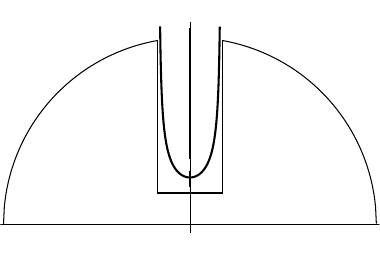}}
        \put(175,-9){\large $0$}
        \put(100,-9){\large $\xi_1$}
        \put(75,133){\large $\xi_2$}
        \put(133,85){\large $\xi_3$}
        \put(175,15){\large $\xi_4$}
        \put(202,85){\large $\xi_5$}

        \put(175,150){$\gamma_1$}
        \put(175,32){$\gamma_2$}
        \put(149,150){$\gamma_3$}
     \end{picture}
	\end{center}
 \caption{The tick lines correspond to the union of the brunch cuts of $\omega_-$  $\{\gamma_i\}$ while the thin line is the contour of integration $\{\xi_i\}$. }	
\label{fig:cuts}
\end{figure}
\bigskip
The $p_1$ integration can evaluated by choosing a contour in the upper half plane. The contour can the taken in such a way that both the the poles and the branch cuts lie outside it. Furthermore, the non trivial part of the contour is chosen to be at imaginary part larger than $\sqrt{|\mathbf{p}_\perp|^2+M^2-\delta M^2}$.
The contour $\xi$ is formed by the following paths $\xi_1,\xi_2, \xi_3, \xi_4,\xi_5 $ where
 $\xi_1$ is the real line,
$\xi_2$ is  part of the semicircle centered  in the origin of the complex plane and with radius $B$ tending to infinity. Furthermore 
the points with $|\text{Re}z|<2\mu$ are removed from the circle. 
\begin{align*}
\xi_1 &= \{ x \;|\;   -B<x<B\}, \\
\xi_2 &= \{x+iy \;|\;   x^2+y^2=B^2, y\geq 0 , |x| > \mu\}, \\
\xi_3 &= \{2\mu +iy \;|\;   \sqrt{|\mathbf{p}_\perp|^2+M^2 -\delta M^2}-\epsilon<y<B\}, \\
\xi_4 &= \{x +i\sqrt{|\mathbf{p}_\perp|^2+M^2 -\delta M^2} -i\epsilon \;|\;  -2\mu<x<2\mu\}, \\
\xi_5 &= \{-2\mu +iy \;|\;   \sqrt{|\mathbf{p}_\perp|^2+M^2 -\delta M^2}-\epsilon<y<B\}.
\end{align*}
Notice that, the poles and the branch cuts lies outside of this contour.
Furthermore, in the limit $B \to 0$ the integral done on $\xi_2$ vanishes for every value of $\mathbf{p}_\perp$. 
Hence, in view of the residue theorem, and in the limit $B\to\infty$ 
the integral over $\xi_1$, which is nothing but $I^2_{\sigma}$, equals the sum of the integrals over $\xi_U=\xi_3\cup\xi_4\cup\xi_5$ with $B\to \infty$.

\bigskip
To prove that the limit $B\to\infty$ of the integral over $\xi_2$ vanishes we observe that
\begin{align*}
\omega_\sigma^2  =  w^2 + 2\mu^2 +\sigma  2\mu \sqrt{w^2 +\mu^2 + \frac{\delta M^4}{4\mu^2}} 
=  \left(\sqrt{w^2 +\mu^2 + \frac{\delta M^4}{4\mu^2}} + \sigma \mu \right)^2 -\frac{\delta M^4}{4\mu^2}
\end{align*}
hence
\begin{align*}
|\omega_\sigma| 
&\leq    \left|\sqrt{w^2 +\mu^2 + \frac{\delta M^4}{4\mu^2}} + \sigma \mu \right| +\frac{\delta M^2}{2\mu}\\
&\leq    \sqrt{ \left|w^2 +\mu^2 + \frac{\delta M^4}{4\mu^2}\right|} + \mu  +\frac{\delta M^2}{2\mu}\\
&\leq    \sqrt{ x^2 + y^2 + |\mathbf{p}_\perp|^2 + M^2+ \mu^2 + \frac{\delta M^4}{4\mu^2}} + \mu  +\frac{\delta M^2}{2\mu}\\
&\leq    \sqrt{x^2 + y^2+ |\mathbf{p}_\perp|^2 + M^2-\delta M^2 } + 2\mu  +\frac{\delta M^2}{\mu} +\delta M
\end{align*}
Furthermore, for large values of $|z|$ if we stay outside of the region where are the branch cuts are located we have that
\[
\frac{1}{|\omega_\sigma|}\leq \frac{C}{|z|}, \qquad  \frac{1}{|\omega_+^2-\omega_-^2|}\leq \frac{C}{|z|}.
\]
Hence, if $r$ is sufficiently large,  the exponential growth in the estimate of $\hat{f}(\omega,p)$ is controlled by $|e^{ip_1r}|\leq |e^{-ry }| $, furthermore 
\[
\frac{1}{|\omega_\sigma||\omega_+^2-\omega_-^2|}
\]
is bounded towards $\xi_2$, hence, in the limit $B\to\infty$, the integral over $\xi_2$ vanishes.

\bigskip
It remains to analyze the contribution of $\xi_U$. We need the following estimates.
In particular, in the region contained in $\cup_i\xi_i$, the real part of any square root is positive, we have that 
\[
|\omega_+^2| \geq |\text{Im} \omega_+^2 | = 2|xy| \left| 1 + \frac{\sqrt{2}\mu}{\sqrt{\sqrt{a^2+b^2}+a}} \right|  \geq  2|xy|, \qquad z\in \xi_U
\]
where $a$ and $b$ are given in \eqref{eq:a-b}.
For $z\in\xi_4$, 
since there $-y^2 + |\mathbf{p}_\perp|^2 + M^2>0$,
it holds that 
\[
|\omega_+^2| \geq \left|\text{Re} \left( w^2+2\mu^2+2\mu \sqrt{w^2 +\mu^2 +\frac{\delta M^4}{4\mu^2}} 
\right)  \right| \geq 2\mu \sqrt{\left|\text{Re} \left( w^2 +\mu^2 +\frac{\delta M^4}{4\mu^2}\right) \right|} \geq 2\mu, \qquad z\in \xi_4.
\]
Furthermore
\[
|\omega_+^2-\omega_-^2| \geq \sqrt{|\text{Im} (\omega_+^2-\omega_-^2)^2 |} = \sqrt{2|xy|}, \qquad z\in \xi_U
\]
and on $\xi_4$, since there $-y^2 + |\mathbf{p}_\perp|^2 + M^2+2\mu^2>0$, it holds that 
\[
|\omega_+^2-\omega_-^2|^2 \geq \left|\text{Re} \left(w^2+\mu^2+ \frac{\delta M^4}{4\mu^2}\right) \right| = \left|x^2-y^2 + |\mathbf{p}_\perp|^2 + M^2+2\mu^2\right|\geq \mu^2, \qquad z\in \xi_4
\]
On $\xi_4$ we have that $a$ given in \eqref{eq:a-b} is such that $a>\frac{1}{4\mu^2}(\delta M^2+2\mu^2)^2$, hence
\[
|\omega_-^2|\geq |\text{Im}(\omega_-^2)|  \geq  
2|xy| \left|  1  - \frac{\sqrt{2}\mu}{\sqrt{\sqrt{a^2+b^2}+a}} \right|  \geq  2|xy| \left|  1  - \frac{\mu}{\sqrt{a}} \right| = 2|xy| \frac{\delta M^2}{
\delta M^2 + 2\mu^2} 
, \qquad z\in \xi_4
\]
furthermore, on $\xi_3\cup\xi_5$ 
\[
\sqrt{\sqrt{a^2+b^2}+a} \geq \sqrt{2}\mu
\]
and it is a monotonically decreasing function of $y$. 
Its supremum on $\xi_3\cup\xi_4$ is reached at $y^2=|\mathbf{p}_\perp|^2+M^2-\delta M^2 -\epsilon$ and there $a\geq \mu^2+(\mu +\frac{\delta M^2}{2\mu})^2\geq 2\mu^2$ and $b\geq 2\mu\sqrt{\mu^2+(\mu +\frac{\delta M^2}{2\mu})^2} \geq 2\sqrt{2} \mu^2$, hence 
\[
\sup_{\xi_3\cup\xi_4}{\sqrt{a^2+b^2}+a} \geq (\sqrt{12}+2)\mu^2.
\]
Hence, for any $\frac{1}{2} < l^2 < \sqrt{\sqrt{3}+1}-1$
we can find a $\tilde{y}$ where $\sqrt{\sqrt{a^2+b^2}+a}=\sqrt{2}\mu(1+l^2)$ and
\[
\text{Re} (-\omega_-^2) = y^2-x^2 -|\mathbf{p}_\perp|^2-M^2-2\mu^2+2\mu \sqrt{2}\sqrt{\sqrt{a^2+b^2}+a} \geq  y^2-x^2 -|\mathbf{p}_\perp|^2-M^2 +2\mu^2 l^2, \qquad  y\geq \tilde{y}
\]
\[
\text{Re} (-\omega_-^2) \geq -\mu^2 +\delta M^2 +2\mu^2 l^2 \geq (l^2-1) \mu+ \delta M^2, \qquad  y\geq \tilde{y} 
\]
\[
|\text{Im} (\omega_-^2)| = |2xy| \left| 1 - \frac{\sqrt{2}\mu}{\sqrt{\sqrt{a^2+b^2}+a}}\right|  \geq  |2xy| \frac{l^2}{l^2+1} 
\qquad  y\leq \tilde{y}
\]
\[
|\text{Im} (\omega_-^2)| \geq 
 |4\mu \sqrt{M^2-\delta M^2 -\epsilon}| 
\qquad  y\leq \tilde{y}
\]

\bigskip
Combining all these estimates we have that on $\xi_3$ and $\xi_5$  
\[
\left|\frac{1}{(\omega_+^2-\omega_-^2)\omega_\sigma}\right| \leq C.
\]
uniformly in $\mathbf{p}_\perp$ and $\epsilon$ and the same holds true for  $\frac{1}{(\omega_+^2-\omega_-^2)\omega_+}$ and for  $\frac{1}{(\omega_+^2-\omega_-^2)}$ on $\xi_4$ while $\frac{1}{\omega_-}$ is bounded by an $L^1$ function uniformly in $\epsilon$ and $\mathbf{p}_\perp$ on $\xi_4$.  Furhtermore
\[
\frac{1}{|\omega_-|} \leq |E|, \qquad z\in\xi_4.
\]

\bigskip
With this observation we can now control the integrals over $\xi_U$. As an example, consider the contribution to $I^2_{-}(u,r)$ due to $I^2_{-,\xi_5}(u,r)$
\begin{align*}
|I^2_{-,\xi_5}(u,r)| 
&\leq
\frac{1}{(2\pi)^3} \int d\mathbf{p}_\perp   \int_{\sqrt{|\mathbf{p}_\perp|^2+M^2-\delta M^2}-\epsilon}^\infty dy  \left|\frac{1}{(\omega_+^2-\omega_-^2)\omega_-}\right|
e^{-r y}
e^{R  \sqrt{2}(\sqrt{|\mathbf{p}_\perp|^2+M^2-\delta M^2}+|y|)} 
\\
&\leq
\frac{e^{r\epsilon}}{(2\pi)^3} \int d\mathbf{p}_\perp   \int_{0}^\infty dy   |C|
e^{-(r-2\sqrt{2} R ) (\sqrt{|\mathbf{p}_\perp|^2+M^2-\delta M^2}+|y| )}
\\
&\leq
e^{r\epsilon}e^{-(r-2\sqrt{2} R )  \sqrt{ M^2 - \delta M^2}}
\frac{1}{(2\pi)^3} \int d\mathbf{p}_\perp   \int_{0}^\infty dy   |C|
e^{-(r-2\sqrt{2} R ) (\sqrt{|\mathbf{p}_\perp|^2+M^2-\delta M^2}-\sqrt{ M^2 - \delta M^2}+|y|)}
\\
&\leq
e^{r\epsilon}e^{-(r-2\sqrt{2} R )  \sqrt{ M^2 - \delta M^2}}
\frac{1}{(2\pi)^3} \int d\mathbf{p}_\perp   \int_{0}^\infty dy   |C|
e^{-(3-2\sqrt{2})R  (\sqrt{|\mathbf{p}_\perp|^2+M^2-\delta M^2}-\sqrt{ M^2 - \delta M^2}+|y|)}
\end{align*}
where we used \eqref{eq:paley-wiener} and 
 in the last inequality we used the fact that $r>3Rc$ and the inequalities holds uniformly in $u$.
Hence the integral can be taken and it can be bounded by a constant uniformly in $r$ to get that 
\begin{align*}
|I^2_{-,\gamma_5}(u,r)| 
&\leq C
e^{-r  (\sqrt{ M^2 - \delta M^2}-\epsilon)}
\end{align*}
and the constant does not depend on $r$ or $\epsilon$ for $r > 3R$. Since the inequality holds for every $\epsilon$ we have that 
\begin{align*}
|I^2_{-,\gamma_5}(u,r)| 
&\leq C
e^{-r  \sqrt{ M^2 - \delta M^2}}
\end{align*}
The estimate of $I^2_{-,\gamma_3}$ can be done in the same way. The analysis of $I^2_{-,\gamma_4}$ can also be done in the same way substituting $|C|$ with the bounding $L^1$ function $|E|$. We actually have
 \begin{align*}
|I^2_{-,\xi_4}(u,r)| 
&\leq
\frac{1}{(2\pi)^3} \int d\mathbf{p}_\perp   \int_{-\mu}^\mu dx  \left|\frac{1}{(\omega_+^2-\omega_-^2)\omega_-}\right|
e^{-r \sqrt{|\mathbf{p}_\perp|^2+M^2-\delta M^2}-\epsilon}
e^{R \sqrt{2}(\sqrt{|\mathbf{p}_\perp|^2+M^2-\delta M^2}+|x|)} 
\\
&\leq
\frac{e^{r\epsilon}}{(2\pi)^3} \int d\mathbf{p}_\perp   \int_{-\mu}^\mu dx   |E|
e^{-(r-2\sqrt{2} R ) (\sqrt{|\mathbf{p}_\perp|^2+M^2-\delta M^2}+\mu )}
\\
&\leq
e^{r\epsilon}e^{-(r-2\sqrt{2} R )  \sqrt{ M^2 - \delta M^2}} C' 
\int d\mathbf{p}_\perp   
e^{-(3-2\sqrt{2})R  (\sqrt{|\mathbf{p}_\perp|^2+M^2-\delta M^2}-\sqrt{ M^2 - \delta M^2})}
\end{align*}
and it can be treated in the same way as before. 
All the other contributions can be analysed in a similar way.
\end{proof}


\begin{thebibliography}{999}


\bibitem[ABS08]{Schmitt} 
  M.~G.~Alford, M.~Braby and A.~Schmitt,
{\it ``Critical temperature for kaon condensation in color-flavor locked quark matter,''}
  J.\ Phys.\ G {\bf 35}, 025002 (2008)

\bibitem[Al90]{Altherr}
T.~Altherr {\it ``Infrared problem in $g\phi^4$ theory at finite temperature,''}
Phys.\ Lett.\ {\bf B 238}, 360 (1990)


\bibitem[And95]{Anderson}
M.~H.~Anderson, J.~R.~Ensher, M.~R.~Matthews, C.~E.~Wieman and E.~A.~Cornell,
{\it ``Observation of bose-einstein condensation in a dilute atomic vapor,''}
Science {\bf 269}, 198 (1995)

\bibitem[Ar73]{Araki-KMS}
H.~Araki,
{\it ``Relative Hamiltonian for faithful normal states of a von Neumann algebra,''} 
Publ.\ RIMS, Kyoto Univ.\ {\bf 9}(1), 165-209 (1973).

\bibitem[Be94]{Benfatto}
G.~Benfatto
{\it ``Renormalization group approach to zero temperature Bose condensation,''}
Constructive Physics Results in Field Theory, Statistical Mechanics and Condensed Matter Physics,
Lecture Notes in Physics vol. 446
Edited by V. Rivasseau, Springer (1994)

\bibitem[BN77]{QGP}
H.~Bohr, H.~B.~Nielsen
{\it ``Hadron production from a boiling quark soup: A thermodynamical quark model predicting particle ratios in hadronic collisions,''}
Nuc. Phys. {\bf  B128} 275-293 (1977)

\bibitem[BS76]{BS}
N.~N.~Bogoliubov and D.~V.~Shirkov,
{\it ``Introduction to the Theory of Quantized Fiedls,}New York: John Wiley and Sons, 1976.

\bibitem[BMZ16]{BosonStar}
E.~Braaten, A.~Mohapatra and H.~Zhang,
{\it ``Dense Axion Stars,''}
Phys. Rev. Lett. {\bf 117}, 121801 (2016)

\bibitem[Bra95]{Bradely}
C.~C.~Bradley,  C.~A.~Sackett and J.~J.~Tollett,  
{\it ``Evidence of Bose-Einstein Condensation in an Atomic Gas with Attractive Interactions,''}
Phys. Rev. Lett. {\bf 75}, 1687 (1995)


\bibitem[BR97]{BRvol2}
O.~Bratteli,  D.W.~Robinson, 
{\it ``Operator algebras and quantum statistical mechanics 2,''} 
Springer, Berlin (1997).


\bibitem[BB98]{BrosBuchholz}
J.~Bros and D.~Buchholz,
{\it ``The unmasking of thermal Goldstone bosons,''}
Phys. Rev. {\bf D58}, 125012 (1998)

\bibitem[BB94]{BrosBuchholz1}
J.~Bros and D.~Buchholz,
{\it ``Towards a Relativistic KMS Condition,''}
Nucl.\ Phys. {\bf B429}, 291-318 (1994)

\bibitem[BDF09]{BDF09}
  R.~Brunetti, M.~Duetsch and K.~Fredenhagen,
 {\it ``Perturbative Algebraic Quantum Field Theory and the Renormalization Groups,''}
  Adv.\ Theor.\ Math.\ Phys.\  {\bf 13}, 1541 (2009)

\bibitem[BF00]{BF00} 
  R.~Brunetti and K.~Fredenhagen,
{ \it ``Microlocal analysis and interacting quantum field theories: Renormalization on physical backgrounds,''}
  Commun.\ Math.\ Phys.\  {\bf 208}, 623 (2000)


\bibitem[BFK96]{BFK} 
  R.~Brunetti, K.~Fredenhagen and M.~K\"ohler,
{\it ``The Microlocal spectrum condition and Wick polynomials of free fields on curved space-times,''}
  Commun.\ Math.\ Phys.\  {\bf 180}, 633 (1996)


\bibitem[BFV03]{BFV03} 
  R.~Brunetti, K.~Fredenhagen and R.~Verch,
{\it ``The Generally covariant locality principle: A New paradigm for local quantum field theory,''}
  Commun.\ Math.\ Phys.\  {\bf 237}, 31 (2003)

\bibitem[Bu18]{Buchholz-Interacting-Bosons}
D.~Buchholz, {\it ``The resolvent algebra of non-relativistic Bose fields: observables, dynamics and states,''}
Commun.\ Math.\ Phys.\ {\bf 362}, 949-981 (2018)


\bibitem[BG08]{Buchholz-Grundling}
D.~Buchholz and H.~Grundling,
{\it ``The resolvent algebra: A new approach to canonical quantum systems,''}
J.\ Func.\ Anal. {\bf 254}, 2725-2779 (2008)

\bibitem[BR14]{BuchholzRoberts}
D.~Buchholz and J.~E.~Roberts
{\it ``New Light on Infrared Problems: Sectors, Statistics, Symmetries and Spectrum,''}
Commun.\ Math.\ Phys.\  {\bf 330}, 935 (2014)

\bibitem[CF09]{ChilianFredenhagen}
B. Chilian and K. Fredenhagen,
{\it ``The time slice axiom in perturbative quantum field theory on globally hyperbolic spacetimes,''}
Commun.\ Math.\ Phys.\ {\bf 287}, 513-522 (2009)


\bibitem[DF98]{DuetschFredenhagen}
M. Duetsch and K. Fredenhagen
{\it ``A local (perturbative) construction of observables in gauge theories: the example of QED,''}
Commun.\ Math.\ Phys.\ {\bf 203}, 71-105 (1999)




\bibitem[Dav95]{Davies}
K.~B.~Davis, M.~-O.~Mewes, M.~R.~Andrews, N.~J.~van~Druten, D.~S.~Durfee, D.~M.~Kurn and W.~Ketterle,
{\it ``Bose-Einstein Condensation in a Gas of Sodium Atoms,''}
Phys. Rev. Lett. {\bf 75}, 3969 (1995)



\bibitem[Dr19]{Drago}
N.~Drago, {\it ``Thermal State with Quadratic Interaction,''}
Annales Henri Poincar\'e {\bf 20}, 905-927 (2019)
    
    

\bibitem[DHP17]{DHP} 
  N.~Drago, T.~P.~Hack and N.~Pinamonti,
{\it ``The generalised principle of perturbative agreement and the thermal mass,''}
  Annales Henri Poincar\'e {\bf 18}, 807 (2017)

\bibitem[Du19]{Duetsch}
M. D\"utsch
{\it ``From Classical Field Theory to Perturbative Quantum Field Theory,''}
Springer Berlin 2019

\bibitem[DF03]{DF03}
M. D\"utsch and K. Fredenhagen, 
{\it ``The master Ward identity and generalized Schwinger-Dyson equation in classical field theory,''} 
Commun. Math. Phys. {\bf 243}, 275 (2003).


\bibitem[DF04]{DF04}
M. D\"utsch and K. Fredenhagen, 
{\it ``Causal perturbation theory in terms of retarded products, and a proof of the action Ward identity,''} 
Rev. Math, Phys. {\bf 16}, 1291-1348 (2004).
 


\bibitem[EG73]{EG}
H.~Epstein and V.~Glaser, 
{\it ``The role of locality in perturbation theory, ''}
Ann.\ Inst.\ Henri\ Poincar\'e Section
A, vol. XIX, n.3, {\bf 211} (1973)


\bibitem[FPB82]{FPV}
M. Fannes, J. V. Pul\`e, and V. Verbeure, 
{\it ``On Bose condensation,''} 
Helv. Phys. Acta {\bf 55}, 391-399 (1982)

\bibitem[FL14]{FredenhagenLindner} 
  K.~Fredenhagen and F.~Lindner,
  {\it ``Construction of KMS States in Perturbative QFT and Renormalized Hamiltonian Dynamics,''}
  Commun.\ Math.\ Phys.\  {\bf 332}, 895 (2014)

\bibitem[FK13]{FredenhagenRejznerBV} 
K.~Fredenhagen and K.~Rejzner,
{\it ``Batalin-Vilkovisky Formalism in Perturbative Algebraic Quantum Field Theory,''}
Commun.\ Math.\ Phys.\ {\bf 317}, 697-725 (2013)



\bibitem[FK15]{FredenhagenRejzner} 
  K.~Fredenhagen and K.~Rejzner,
{\it ``Perturbative algebraic quantum field theory,''}
In: Calaque D., Strobl T. (eds) Mathematical Aspects of Quantum Field Theories. Mathematical Physics Studies. Springer, Cham
(2015)
 
\bibitem[FR75]{Friedlander} Friedlander, {\it ``The wave equation on a curved space-time''} Cambridge University Press (1975)

\bibitem[GSW62]{GoldstoneSalamWeinberg}
J.~Goldstone, A.~Salam and S.~Weinberg
{\it ``Broken Symmetries,''}
Phys.\ Rev.\ {\bf 127}, 965 (1962)

\bibitem[Gr61]{Gross}
E.~P.~Gross, {\it ``Structure of a quantized vortex in boson systems,''} 
Il Nuovo Cimento {\bf 20}, 454-457 (1961)



\bibitem[Ha92]{Haag}
R.~Haag,
{\it ``Local quantum physics,''}
Second edition, Springer-Verlag Berlin, 1992
ISBN: 3-540-61451-6.

\bibitem[HHW67]{KMS}
R.~Haag, N.~Hugenholtz and M.~Winnink,
{\it ``On the equilibrium state in quantum statistical mechanics,''}
Commun.\ Math.\ Phys.\ {\bf 5}, 215 (1967)

\bibitem[HK64]{HaagKastler}
R.~Haag and D.~Kastler, {\it ``An algebraic approach to quantum field theory,''} 
J. Math. Phys., {\bf 5}, 848 (1964)



\bibitem[Ho07]{Hollands} 
  S.~Hollands,
 {\it ``Renormalized Quantum Yang-Mills Fields in Curved Spacetime,''}
Rev.\ Math.\ Phys. {\bf 20}, 1033-1172 (2008).


\bibitem[HW01]{HW01} 
  S.~Hollands and R.~M.~Wald,
 {\it ``Local Wick polynomials and time ordered products of quantum fields in curved space-time,''}
  Commun.\ Math.\ Phys.\  {\bf 223}, 289 (2001)


\bibitem[HW02]{HW02} 
  S.~Hollands and R.~M.~Wald,
 {\it ``Existence of local covariant time ordered products of quantum fields in curved space-time,''}
  Commun.\ Math.\ Phys.\  {\bf 231}, 309 (2002)

\bibitem[HW05]{HW05} 
S.~Hollands and R.~M.~Wald,
{\it ``Conservation of the stress tensor in interacting quantum field theory in curved spacetimes,''}
Rev.\ Math.\ Phys.\  {\bf 17}, 227 (2005)

\bibitem[Ho03]{Hormander}
L.~H\"ormander
{\it ``The Analysis of Linear Partial Differential Operators I,''}
Springer Berlin (2003)



\bibitem[JW11]{JakelWreszinski}
C.~D.~J\"akel and W.~F.~Wreszinski,
{\it ``A Goldstone theorem in thermal relativistic quantum field theory,''}
J.\ Math.\ Phys. {\bf 52}, 012302 (2011)

\bibitem[Jo64]{JonaLasinio}
G.~Jona-Lasinio
{\it ``Relativistic Field Theories with Symmetry-Breaking Solutions,''}
Nuovo Cimento {\bf 34}, 1790 (1964)


\bibitem[KRS66]{KRS}
D.~Kastler, D.~W.~Robinson and A.~Swieca, 
{\it ``Conserved Currents and Associated Symmetries; Goldstone's Theorem,''} 
Commun.\ Math.\ Phys. {\bf 2},  108--120 (1966)



\bibitem[KW91]{KayWald}
B.~S.~Kay and R.~M.~Wald,
{\it ``Theorems on the uniqueness and thermal properties of stationary, nonsingular, quasifree states on spacetimes with a bifurcate killing horizon,''}
Phys.\ Rep.\ {\bf 207}, 49-136 (1991)


\bibitem[Ko87]{Kowalski}
K.~L.~Kowalski
{\it ``Goldstone theorem at finite temperature and density,''}
Phys.\ Rev.\ {\bf D35}, 3940 (1987)





\bibitem[LSY01a]{LiebSeiringerYngvasonCMP}
E.~Lieb, R.~Seiringer, J.~Yngvason,
{\it ``A Rigorous Derivation of the Gross-Pitaevskii Energy Functional for a Two-dimensional Bose Gas''}
Commun. Math. Phys. {\bf 224}, 17 (2001) 

\bibitem[LSY01b]{LiebSeiringerYngvason}
E.~Lieb, R.~Seiringer, J.~Yngvason 
{\it Bosons in a trap: A rigorous derivation of the Gross-Pitaevskii energy functional.} (2001)  In: Thirring W. (eds) The Stability of Matter: From Atoms to Stars. Springer, Berlin, Heidelberg

\bibitem[LSY05]{LiebSeiringerYngvason05}
E.H.~Lieb, R.~Seiringer, J.~Yngvason, 
{\it ``Justification of c-Number Substitutions in Bosonic Hamiltonians,''}
Phys. Rev. Lett. {\bf 94}, 080401 (2005)


\bibitem[LSS05]{LiebSeiringerSolovejYngvason}
E.~Lieb, R.~Seiringer, J.~P.~Solovej, J.~Yngvason 
{\it The Mathematics of the Bose Gas and its Condensation.} (2005)
Birkh\"auser,  Basel 
%





\bibitem[Li13]{Lindner}
F.~Lindner,
{\it ``Perturbative Algebraic Quantum Field Theory at Finite Temperature,''}
PhD thesis, University of Hamburg (2013). 


\bibitem[MS87]{MorchioStrocchi}
G.~Morchio and F.~Strocchi,
{\it ``Mathematical structures for long?range dynamics and symmetry breaking,''}
J.\ Math.\ Phys.\ {\bf 28}, 622 (1987)

\bibitem[Mo03]{Moretti}
  V.~Moretti,
{\it ``Comments on the stress-energy tensor operator in curved spacetime,''}
  Commun.\ Math.\ Phys.\  {\bf 232}, 189 (2003).


\bibitem[PCDS04]{DiCastro}
F.~Pistolesi, C.~Castellani, C.~Di~Castro and G.~C.~Strinati
{\it ``Renormalization-group approach to the infrared behavior of a zero-temperature Bose system,''}
Phys.\ Rev.\ {\bf B 69}, 024513 (2004)

\bibitem[Pi61]{Pitaevskii}
L.~P.~Pitaevskii,
{\it ``Vortex lines in an imperfect Bose gas,''}
Sov. Phys. JETP. {\bf 13}, 451-454 (1961)

\bibitem[PS16]{Pista}
L.~P.~Pitaevskii, S. Stringari,
{\it ``Bose-Einstein Condensation and Superfluidity''} (2016), International Series of Monographs in Physics, 164, OUP Oxford



\bibitem[Ra96]{Radzikowski} 
  M.~J.~Radzikowski,
{\it ``Micro-local approach to the Hadamard condition in quantum field theory on curved space-time,''}
  Commun.\ Math.\ Phys.\  {\bf 179}, 529 (1996)

\bibitem[Re16]{Rejzner}
K. Rejzner
{\it ``Perturbative Algebraic Quantum Field Theory, An Introduction for Mathematicians''}
Springer Berlin 2016


\bibitem[RST70]{RoccaSirugueTestard}
F. Rocca, M. Sirugue and D. Testard
{\it ``On a Class of Equilibrium States under the Kubo-Martin-Schwinger Condition. II. Bosons,''}
Commun.\ Math.\ Phys.\  {\bf 19}, 119-141 (1970)


\bibitem[Sa11]{Satz}
H.~Satz {``The Quark-Gluon Plasma,''}
Nucl.~Phys.~{\bf A862-863}, 4-12 (2011)

\bibitem[Sc89]{Scharf}
G.~Scharf, 
{\it ``Finite Quantum Electrodynamics''}
Springer Berlin 1989


\bibitem[So13]{LightCondensation} 
D.~N.~Sob'yanin, 
{\it ``Bose-Einstein condensation of light: General theory,''}
Phys. Rev. E {\bf 88}, 022132 (2013)

\bibitem[St95]{Steinmann}
O.~Steinmann,
{\it ``Perturbative Quantum Field Theory at Positive Temperatures: An Axiomatic Approach,''}
Commun.~Math.~Phys.~{\bf 170}, 405-415 (1995)

\bibitem[St08]{Strocchi} 
F.~Strocchi
{\it ``Symmetry Breaking,''}
Lecture Notes in Physics (2008) 	
Springer Berlin Heidelberg

\bibitem[St51]{Stueckelberg} 
E.~C.~G.~Stueckelberg, 
{\it ``Relativistic Quantum Theory for Finite Time Intervals,''}
Phys.\ Rev.\   {\bf 81}, 130 (1951)

\bibitem[SR49]{Stueckelberg2} 
E.~C.~G.~Stueckelberg and D. Rivier, 
{\it ``Causalit\'e et structure de la Matrice S,''}
Helv.\ Phys.\ Acta, {\bf 23}, 216 (1949)

\bibitem[S{\"{u}}05]{Suto}
A. S\"ut\H{o}, 
{\it ``Equivalence of Bose-Einstein Condensation and Symmetry Breaking,''} 
Phys. Rev. Lett. {\bf 94}, 080402 (2005)

\bibitem[Sw67]{Swieca} 
J.~A.~Swieca
{\it ``Range of Forces and Broken Symmetries in Many-Body Systems,''}
Commun.\  Math.\ Phys. {\bf 4}, 1-7 (1967)

\bibitem[Wa78]{Wald}
R.~M.~Wald,
  {\it ``Trace Anomaly Of A Conformally Invariant Quantum Field In Curved Space-Time,''}
  Phys.\ Rev.\  {\bf D 17}, 1477 (1978).


\end{thebibliography}
\end{document}